 \newtheorem{thm}{Theorem}[section]
 \newtheorem{cor}[thm]{Corollary}
 \newtheorem{lem}[thm]{Lemma}
 \newtheorem{prop}[thm]{Proposition}
 \theoremstyle{definition}
 \theoremstyle{remark}
 \newtheorem{rem}[thm]{Remark}
 \numberwithin{equation}{section}
\newcommand{\U}{\mathcal{U}}
\newcommand{\N}{\mathcal{N}}
\newcommand{\h}{\mathcal{H}}
\newcommand{\F}{\mathcal{F}}
\newcommand{\R}{\mathcal{R}}
\newcommand{\K}{\mathcal{K}}
\newcommand{\x}{\mathbf{x}}
\newcommand{\cL}{\mathcal{L}}
\newcommand{\cD}{\mathcal{D}}
\newcommand{\cQ}{\mathcal{Q}}
\newcommand{\bR}{\mathbb{R}}
\newcommand{\bN}{\mathbb{N}}
\newcommand{\wU}{\widetilde{\mathcal{U}}}
\newcommand{\ph}{\varphi}
\newcommand{\ha}{\mathcal{H}^{\alpha}}
\newcommand{\pha}{\varphi^{\alpha}}
\newcommand{\psia}{\psi^{\alpha}}
\begin{document}

\title[Rate of Convergence towards Semi-Relativistic Hartree Dynamics]{Rate of Convergence towards Semi-\\Relativistic Hartree Dynamics}

\author{Ji Oon Lee}
\address{Department of Mathematical Sciences\\
Korea Advanced Institute of Science and Technology\\
Daejeon, 305701, Republic of Korea}
\email{jioon.lee@kaist.edu}
\thanks{Partially supported by Basic Science Research Program through the National Research Foundation of Korea Grant  2011-0013474}

\subjclass{Primary 81V70; Secondary 82C10}

\keywords{Semi-relativistic Hartree equation, Rate of convergence}

\date{November 28, 2011}

\begin{abstract}
We consider the semi-relativistic system of $N$ gravitating \\Bosons with gravitation constant $G$. The time evolution of the system is described by the relativistic dispersion law, and we assume the mean-field scaling of the interaction where $N \to \infty$ and $G \to 0$ while $GN = \lambda$ fixed. In the super-critical regime of large $\lambda$, we introduce the regularized interaction where the cutoff vanishes as $N \to \infty$. We show that the difference between the many-body semi-relativistic Schr\"{o}dinger dynamics and the corresponding semi-relativistic Hartree dynamics is at most of order $N^{-1}$ for all $\lambda$, i.e., the result covers the sub-critical regime and the super-critical regime. The $N$ dependence of the bound is optimal.
\end{abstract}

\maketitle

\section{Introduction}

We consider a system of $N$ gravitating three-dimensional Bosons in $\mathbb{R}^3$. When the particles in the system have the relativistic dispersion with Newtonian gravity, the Hamiltonian of the system is
\begin{equation}
H_{grav} = \sum_{j=1}^N (1 - \Delta_j)^{1/2} - G \sum_{i<j}^N \frac{1}{|x_i - x_j|}.
\end{equation}
The Hamitonian $H_{grav}$ acts on the Hilbert space $L^2 (\bR^{3N})_s$, the subspace of $L^2 (\bR^{3N})$ consisting of all symmetric functions with respect to the permutations of particles. Such a system is known as a Boson star.

We are interested in the mean-field limit, where we let $G \to 0$ and $N \to \infty$ with $\lambda := GN$ is fixed. The $N$ particle Hamiltonian is thus defined by
\begin{equation} \label{H_N definition}
H_N = \sum_{j=1}^N (1 - \Delta_j)^{1/2} - \frac{\lambda}{N} \sum_{i<j}^N \frac{1}{|x_i - x_j|}.
\end{equation}
In the Hamiltonian $H_N$, the kinetic energy and the interaction potential energy scale is of the same order (inverse length), hence the system is critical and its behavior hugely depends on the coupling constant $\lambda$. It was proved by Lieb and Yau in \cite{LY} that there exists a critical coupling constant $\lambda_{crit} (N)$, depending on $N$, such that the minimum energy
\begin{equation}
E_N^{\lambda} = \inf_{\psi \in L^2 (\bR^{3N})} \frac{\langle \psi, H_N \psi \rangle}{\| \psi \|_{L^2}^2}
\end{equation}
is bounded below if $\lambda < \lambda_{crit} (N)$ and $E_N^{\lambda} = -\infty$ if $\lambda > \lambda_{crit} (N)$. As $N \to \infty$, $\lambda_{crit} (N)$ converges to a number $\lambda_{crit}^H$, where 
\begin{equation}
\frac{1}{\lambda_{crit}^H} = \sup_{\| \varphi \|_{L^2 (\bR^3)} = 1} \left( \frac{1}{2} \int dx dy \frac{|\varphi(x)|^2 |\varphi(y)|^2}{|x-y|} \right) \Big/ \left( \int dx \left| |\nabla|^{1/2} \varphi (x) \right|^2 \right).
\end{equation}
The exact value of $\lambda_{crit}^H$ is not known, but it was shown in \cite{LT, LY} that $4 / \pi \leq \lambda_{crit}^H \leq 2.7$.

In the subcritical case $\lambda < \lambda_{crit}^H$, the Hamiltonian $H_N$ defines a self-adjoint operator with domain $H^{1/2} (\bR^{3N})$ when $N$ is sufficiently large. (Technically, $H_N$ is considered as the Friedrichs extension of \eqref{H_N definition}.) Thus, it generates the one-parameter group of unitary operators $e^{-it H_N}$ that describes the time evolution of the given system. We focus on the time evolution with respect to $H_N$ of a factorized initial data $\psi_N := \varphi^{\otimes N}$ for some $\varphi \in H^{1} (\bR^3)$. It is expected that $\psi_{N, t} := e^{-it H_N} \psi_N$ satisfies
\begin{equation} \label{factorization}
\psi_{N, t} \simeq \varphi_t^{\otimes N},
\end{equation}
where $\varphi_t$ is the solution of the semi-relativistic nonlinear Hartree equation
\begin{equation} \label{Hartree equation}
i \partial_t \varphi_t = (1 - \Delta)^{1/2} \varphi_t - \lambda \left( \frac{1}{| \cdot |} * |\varphi_t|^2 \right) \varphi_t
\end{equation}
with initial data $\varphi_{t=0} = \varphi$. 

The factorization \eqref{factorization} should be understood in terms of the marginal densities (reduced density matrices) associated with $\psi_{N, t}$. We define the $k$-particle marginal density through its kernel
\begin{equation} \begin{split}
&\gamma_{N, t}^{(k)} (\x_k, \x_k') \\
&:= \int dx_{k+1} \cdots dx_N \psi_{N, t} (\x_k, x_{k+1}, \cdots, x_N) \overline{\psi_{N, t}} (\x_k', x_{k+1}, \cdots, x_N),
\end{split} \end{equation}
where $\x_k = (x_1, x_2, \cdots, x_k)$ and $\x_k' = (x_1', x_2', \cdots, x_k')$. Since $\| \psi_{N, t} \|_{L^2} = 1$, we can see that $\textrm{Tr } \gamma_{N, t}^{(k)} = 1$ for all $1 \leq k \leq N$. Thus, $\gamma_{N, t}^{(k)}$ is a trace class operator. In \cite{ES}, Elgart and Schlein proved that, in the large $N$ limit, the $k$-particle marginal density associated with $\psi_{N, t}$ converges to $k$-particle marginal density associated with the factorized wavefunction $\varphi_t^{\otimes N}$, under the condition that $\lambda < 4 / \pi$ and $\varphi \in H^1 (\bR)$. More precisely, for any fixed $t \in \bR$,
\begin{equation} \label{marginal density convergence}
\textrm{Tr } \left| \gamma_{N, t}^{(1)} - | \varphi_t \rangle \langle \varphi_t | \right| \to 0 \quad \text{ as } N \to \infty,
\end{equation}
where $| \varphi_t \rangle \langle \varphi_t |$ denotes the rank one projection onto $\varphi_t$. For $\lambda < \lambda_{crit}^H$, it is proved by Lenzmann in \cite{L} that the semi-relativistic Hartree equation \eqref{Hartree equation} is globally well-posed in $H^s (\bR^3)$ for every $s \geq 1/2$. Therefore, \eqref{marginal density convergence} shows that the solution of the $N$-particle Schr\"odinger equation $\psi_{N, t}$ can be approximated by products of the solution of the semi-relativistic Hartree equation $\varphi_t$ for all $t \in \bR$.

The rate of convergence in \eqref{marginal density convergence} is attained by Knowles and Pickl \cite{KP} for the case $\lambda < 4 / \pi$ with the initial condition $\varphi \in H^s$ for $s > 1$. More precisely,
\begin{equation} \label{subcritical convergence rate 1/2}
\textrm{Tr } \left| \gamma_{N, t}^{(k)} - | \varphi_t \rangle \langle \varphi_t |^{\otimes k} \right| \leq \frac{C(k, t)}{\sqrt{N}}
\end{equation}
for some constant $C(k, t)$ independent of $N$. Here, $C(k, t) = C_t \sqrt{k}$ where $C_t$ grows at most exponentially in $t$.

In the supercritical regime $\lambda > \lambda_{crit}^H$, on the other hand, solutions of \eqref{Hartree equation} may blow up in finite time, which was proved by Fr\"ohlich and Lenzmann \cite{FL}. Physically, the blowup of the solution of \eqref{Hartree equation} describes the gravitational collapse of a Boson star whose mass is over a critical value, provided that the relativistic dynamics of the system can be approximated by the semi-relativistic Hartree dynamics as in the subcriticial case. This assumption was proved by Michelangeli and Schlein \cite{MS} with the regularized Hamiltonian
\begin{equation} \label{H_N^alpha definition}
H_N^{\alpha} = \sum_{j=1}^N (1 - \Delta_j)^{1/2} - \frac{\lambda}{N} \sum_{i<j}^N \frac{1}{|x_i - x_j| + \alpha_N}
\end{equation}
with $\alpha_N > 0$ and $\alpha_N \to 0$ as $N \to \infty$. The regularized Hamiltonian $H_N^{\alpha}$ defines a quadratic form, which is bounded below, hence we may consider its Friedrichs extension as a self-adjoint operator with domain $H^{1/2} (\bR^{3N})$. If we let $\gamma_{N, t}^{\alpha, (1)}$ be the one-particle marginal density associated with $\psi_{N, t}^{\alpha} = e^{-i t H_N^{\alpha}} \varphi^{\otimes N}$, then Theorem 1.1 of \cite{MS} shows that with the initial condition $\varphi \in H^2 (\bR^3)$,
\begin{equation} \label{supercritical convergence rate 1/2}
\textrm{Tr } \left| \gamma_{N, t}^{\alpha, (1)} - | \varphi_t \rangle \langle \varphi_t | \right| \leq \frac{C(t)}{\sqrt{N}}
\end{equation}
for all $|t| \leq T$, where $T$ is the maximal time of the existence of the solution of \eqref{Hartree equation}.

The corresponding results for non-relativistic dynamics is relatively well-established. In \cite{S}, Spohn first proved that \eqref{marginal density convergence} holds when the interaction potential is bounded. This result was extended by Erd\H{o}s and Yau in \cite{EY} for the Coulomb type interaction. In \cite{RS}, Rodnianski and Schlein obtained an explicit bound on the rate of the convergence in \eqref{marginal density convergence} for the Coulomb type interaction. The result in \cite{RS}, which showed that the rate of the convergence in \eqref{marginal density convergence} is $O(N^{-1/2})$, is extended further by Knowles and Pickl \cite{KP} for more singular potentials. On the other hand, Erd\H{o}s and Schlein \cite{ES1} proved that the rate of convergence in \eqref{marginal density convergence} is $O(N^{-1})$ for bounded potentials, which is considered to be optimal. The same rate of convergence for more singular potentials including Coulomb type potential was obtained in \cite{CL, CLS}. Another important result in this direction is the derivation of the Gross-Pitaevskii equation for describing Bose-Einstein condensates by Erd\H{o}s, Schlein, and Yau \cite{ESY1, ESY2, ESY3, ESY4}. (See also works by Pickl \cite{P1, P2}.) We also remark that other results concerning second-order correction to the mean-field limit are attained by Grillakis, Machedon, and Margetis \cite{GMM1, GMM2}. (See also works by Chen \cite{C}.)

In this paper, we improve the bound \eqref{subcritical convergence rate 1/2} and \eqref{supercritical convergence rate 1/2} by applying the method developed in \cite{RS}. First introduced by Hepp \cite{H} and extended by Ginibre and Velo \cite{GV1, GV2}, this method have been successful in proving various bounds on the rate of convergence as in \cite{RS, CL, CLS, MS}. We show that the left hand sides of \eqref{subcritical convergence rate 1/2} and \eqref{supercritical convergence rate 1/2}, the differences between the one-particle marginal density associated with the solution of the time evolution of the factorized initial data and the orthogonal projection onto the solution of the semi-relativistic Hartree equation \eqref{Hartree equation}, are $O(N^{-1})$. The first main result of this paper, which considers the subcritical case, is the following theorem:

\begin{thm} \label{main theorem 1}
Suppose that $\lambda < \lambda_{crit}^H$, $\varphi \in H^1 (\bR^3)$ with $\| \varphi \|_{L^2} = 1$, and $\psi_N = \varphi^{\otimes N}$. Let $\psi_{N, t} = e^{-it H_N} \psi_N$ be the evolution of the initial wave function $\psi_N$ with respect to the Hamiltonian \eqref{H_N definition} and let $\gamma_{N, t}^{(1)}$ be the one-particle marginal density associated with $\psi_{N, t}$. Let $\varphi_t$ be the solution of the \eqref{Hartree equation} with initial data $\varphi_{t=0} = \varphi$. Let
\begin{equation}
\nu(t) := \sup_{|s| \leq t} \| \varphi_s \|_{H^1}.
\end{equation}
Then, there exists a constant $C$, depending only on $\lambda$ and $\nu(t)$, such that
\begin{equation}
\textrm{Tr } \left| \gamma_{N, t}^{(1)} - | \varphi_t \rangle \langle \varphi_t | \right| \leq CN^{-1}.
\end{equation}
\end{thm}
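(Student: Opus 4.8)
The plan is to employ the coherent-state / Fock-space method of Rodnianski--Schlein, adapted to the relativistic dispersion. The first step is to lift the problem to the bosonic Fock space $\F = \bigoplus_{n \geq 0} L^2(\bR^3)^{\otimes_s n}$ and to introduce the Weyl operator $W(\sqrt{N}\varphi_t)$ whose action on the vacuum produces the coherent state associated with the condensate wave function $\varphi_t$. Conjugating the second-quantized Hamiltonian $\mathcal{H}_N = \int a_x^* (1-\Delta_x)^{1/2} a_x\, dx - \tfrac{\lambda}{2N} \int \frac{a_x^* a_y^* a_y a_x}{|x-y|}\, dx\, dy$ by the Weyl operators and by the free Hartree evolution, one obtains a fluctuation dynamics $\mathcal{U}_N(t;s)$ on $\F$. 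The trace-norm difference $\textrm{Tr}\, |\gamma_{N,t}^{(1)} - |\varphi_t\rangle\langle\varphi_t||$ is then controlled, following the standard estimate, by the expected number of fluctuation particles $\langle \Omega, \mathcal{U}_N(t;0)^* \mathcal{N} \, \mathcal{U}_N(t;0) \Omega\rangle$ divided by $N$ (together with a term involving $\langle \mathcal{N}^2 \rangle$ that is likewise of the right order); so it suffices to show this quantity is bounded uniformly in $N$ and $t$ for $t$ in a fixed interval.

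**Next** I would derive a Gronwall-type inequality for $\langle \psi, \mathcal{N} \psi\rangle$ along the fluctuation dynamics, and more importantly for higher moments $\langle \psi, (\mathcal{N}+1)^k \psi \rangle$, which is what the $O(N^{-1})$ rate (as opposed to $O(N^{-1/2})$) requires. Computing $\partial_t \langle \mathcal{U}_N(t;0)\Omega, (\mathcal{N}+1) \mathcal{U}_N(t;0)\Omega \rangle$, the quadratic (number-preserving) parts of the generator drop out, and one is left with the cubic and the off-diagonal quadratic terms; these must be estimated by $C(\nu(t)) \langle (\mathcal{N}+1) \rangle$ uniformly in $N$. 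The crucial analytic input is a bound on the relativistic interaction of the form
\begin{equation}
\left| \int dx\, dy\, \frac{1}{|x-y|} \overline{f(x)} \langle \psi_1, a_x^* a_y a_x \psi_2 \rangle \right| \leq C \| (1-\Delta)^{1/4} f \|_{L^2} \| (\mathcal{N}+1)^{1/2} \psi_1 \| \, \| \mathcal{N}^{1/2} (1-\Delta)^{1/4} \psi_2 \|
\end{equation}
and similar estimates for the off-diagonal pieces, which reduce via Hardy's inequality $|x|^{-2} \leq C (-\Delta)$ — equivalently the Kato–Herbst bound $|x|^{-1} \leq C(1-\Delta)^{1/2}$ in the form sense — to controlling the kinetic energy of the fluctuations. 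Hence one must run the moment estimate in tandem with a control of $\langle \mathcal{U}_N(t;0)\Omega, \mathcal{N}\, d\Gamma((1-\Delta)^{1/2})\, \mathcal{U}_N(t;0)\Omega\rangle$, or else absorb the kinetic factor using that $\mathcal{H}_N$ commutes suitably and that $\varphi_t \in H^1$ so that the condensate's kinetic energy per particle is bounded by $\nu(t)^2$.

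**The main obstacle** I expect is precisely the interplay between the nonlocal, non-differential operator $(1-\Delta)^{1/2}$ and the singular potential $|x-y|^{-1}$: unlike the non-relativistic case, $(1-\Delta)^{1/2}$ does not obey a Leibniz rule, so commutator estimates that are routine for $-\Delta$ must be replaced by fractional-derivative (Kato–Ponce type) or pseudodifferential arguments, and the potential is no longer dominated by a single power of the kinetic operator with room to spare — it is exactly of the same order, which is the source of the criticality. Propagating $H^1$ regularity of $\varphi_t$ under \eqref{Hartree equation} is available from \cite{L}, which gives the needed bound on $\nu(t)$; the delicate point is to close the coupled system of differential inequalities for the number and number-times-energy moments without losing powers of $N$. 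Once the bound $\langle \mathcal{U}_N(t;0)\Omega, (\mathcal{N}+1)^2 \mathcal{U}_N(t;0)\Omega\rangle \leq C(\lambda, \nu(t))$ uniform in $N$ is established, the passage back to the trace norm of $\gamma_{N,t}^{(1)} - |\varphi_t\rangle\langle\varphi_t|$ via the comparison of the marginal with the coherent-state marginal is routine and yields the stated $O(N^{-1})$ bound.
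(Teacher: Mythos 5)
The framework you describe (Fock space, Weyl operators, fluctuation dynamics $\U(t;s)$, Gronwall on $(\N+1)^k$ moments) is indeed the right starting point and coincides with the skeleton of this paper and of Rodnianski--Schlein. However, there is a genuine and decisive gap in your final step, and it is precisely the step on which the whole paper turns.

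You assert that once one controls $\langle \U(t)\Omega, (\N+1)^2 \U(t)\Omega\rangle$ uniformly in $N$, ``the passage back to the trace norm of $\gamma_{N,t}^{(1)} - |\varphi_t\rangle\langle\varphi_t|$ via the comparison of the marginal with the coherent-state marginal is routine and yields the stated $O(N^{-1})$ bound.'' This is false: in Rodnianski--Schlein that same passage is exactly what degrades the rate to $O(N^{-1/2})$. When you rewrite $\gamma_{N,t}^{(1)}$, which is defined through the factorized state $\varphi^{\otimes N}$, in terms of the coherent state $W(\sqrt{N}\varphi)\Omega$, you pick up the normalization factor $d_N = \sqrt{N!}/(N^{N/2}e^{-N/2}) \simeq N^{1/4}$ (cf.\ \eqref{factorized data}). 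The resulting expansion for $\mathrm{Tr}\, J(\gamma_{N,t}^{(1)} - |\varphi_t\rangle\langle\varphi_t|)$ is a sum $E_t^1(J)+E_t^2(J)$ as in \eqref{expansionU}, and the term $E_t^2(J)$ carries the prefactor $d_N/\sqrt{N} \simeq N^{-1/4}$. The Rodnianski--Schlein projection estimate (Lemma~\ref{general estimate for W state} here) recovers one factor $1/d_N$, giving $O(N^{-1/2})$, not $O(N^{-1})$. To close the remaining half power you must exploit that the relevant vector $\wU^*(t)\phi(J\varphi_t)\wU(t)\Omega$ lives entirely in odd particle-number sectors for the \emph{modified} fluctuation dynamics $\wU$ generated by $\cL_2(t)+\cL_4$ (parity conservation, Lemma~\ref{parity conservation of U tilde}), and that the odd-sector components of $W^*(\sqrt{N}\varphi)\frac{(a^*\varphi)^N}{\sqrt{N!}}\Omega$ are small by an extra factor $N^{-1/2}$ (Lemma~\ref{odd estimate for W state}, which reduces to sharp asymptotics of generalized Laguerre polynomials $L_\ell^{(N-\ell)}(N)$). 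Comparing $\U$ with $\wU$ then costs only $N^{-1/2}$ via the $\cL_3$ difference (Lemma~\ref{estimate for the difference between U and U_tilde}). None of this is in your proposal, and without it your argument only reproduces the $O(N^{-1/2})$ bound of \cite{RS,KP,MS}.

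A secondary gap: you invoke the Kato--Herbst bound $|x|^{-1}\leq \tfrac{\pi}{2}(1-\Delta)^{1/2}$ to control the interaction, but this only covers $\lambda < 4/\pi$. Theorem~\ref{main theorem 1} is stated for all $\lambda < \lambda_{crit}^H$ with $\lambda_{crit}^H \geq 4/\pi$, so for $4/\pi \leq \lambda < \lambda_{crit}^H$ your bound fails. The paper handles this window by introducing the regularized Hamiltonian $H_N^\alpha$ of \eqref{H_N^alpha definition} even in the subcritical case, and by proving Lemma~\ref{regularization lemma 2}, which derives the a priori bound $\sum_{i<j}\langle\psi_{N,t}, S_i S_j \psi_{N,t}\rangle \leq CN^3$ from the Lieb--Yau stability theorem rather than from Kato's inequality. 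The regularization is then removed a posteriori via Lemma~\ref{regularization lemma 1}, Corollary~\ref{difference between marginals}, and Proposition~\ref{Hartree solution difference}, with the condition $\alpha_N\leq N^{-4}$ ensuring the removal does not affect the $O(N^{-1})$ rate. Your proposal does not account for this regularization step at all.
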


\begin{rem}
Since the semi-relativistic Hartree equation \eqref{Hartree equation} is globally well-posed in $H^1$ for the subcritical case, $\nu(t) < \infty$ for all $t \in \bR$. See \cite{L, CO} for more detail.
\end{rem}

In the supercritical case, while we should introduce the regularized Hamiltonian \eqref{H_N^alpha definition} to define a self-adjoint operator, the approximating semi-relativistic Hartree equation does not need to contain the regularized non-linear term, i.e., it suffices to consider the equation \eqref{Hartree equation} for approximating the evolution of the $N$-particle factorized initial state. The second main result of this paper is the following theorem:

\begin{thm} \label{main theorem 2}
Suppose that $\lambda \geq \lambda_{crit}^H$, $\varphi \in H^1 (\bR^3)$ with $\| \varphi \|_{L^2} = 1$, and $\psi_N = \varphi^{\otimes N}$. Let $\psi_{N, t}^{\alpha} = e^{-it H_N^{\alpha}} \psi_N$ be the evolution of the initial wave function $\psi_N$ with respect to the Hamiltonian \eqref{H_N^alpha definition} with $\alpha_N \leq N^{-4}$ and let $\gamma_{N, t}^{\alpha, (1)}$ be the one-particle marginal density associated with $\psi_{N, t}^{\alpha}$. Let $\varphi_t$ be the solution of the \eqref{Hartree equation} with initial data $\varphi_{t=0} = \varphi$. Fix $T$ such that
\begin{equation}
\kappa := \sup_{|t| \leq T} \| \varphi_t \|_{H^{1/2}} < \infty.
\end{equation}
Then, there exists a constant $C$, depending only on $\lambda$, $\| \varphi \|_{H^1}$, $T$, and $\kappa$, such that
\begin{equation}
\textrm{Tr } \left| \gamma_{N, t}^{\alpha, (1)} - | \varphi_t \rangle \langle \varphi_t | \right| \leq CN^{-1}
\end{equation}
for all $|t| \leq T$.
\end{thm}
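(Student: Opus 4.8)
\emph{Strategy.} I would follow the Fock-space approach of Rodnianski and Schlein \cite{RS}, with the refinements of \cite{CL,CLS} needed to upgrade the power from $N^{-1/2}$ to $N^{-1}$, adapting it to the relativistic dispersion and the regularized interaction. Let $\F=\bigoplus_{n\ge0}L^2(\bR^3)^{\otimes_s n}$ be the bosonic Fock space, $a^*(f),a(f)$ the creation/annihilation operators ($[a(f),a^*(g)]=\langle f,g\rangle$), $a_x^*,a_x$ the corresponding distributions, $\N=\int a_x^*a_x\,dx$, $\K=\int a_x^*\sqrt{1-\Delta_x}\,a_x\,dx$, and $W(f)=\exp(a^*(f)-a(f))$ the Weyl operator, so that $W^*(f)a(g)W(f)=a(g)+\langle g,f\rangle$ and $W(\sqrt N\ph)\Omega=e^{-N/2}\sum_n\frac{N^{n/2}}{\sqrt{n!}}\ph^{\otimes n}$. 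The operator $H_N^{\alpha}$ lifts to the self-adjoint operator $\mathcal{H}_N^{\alpha}=\K-\frac{\lambda}{2N}\int\!\!\int(|x-y|+\alpha_N)^{-1}\,a_x^*a_y^*a_ya_x\,dx\,dy$ on $\F$, agreeing with $H_N^{\alpha}$ on the $N$-particle sector. As a first reduction, let $\pha_t$ be the solution of the regularized Hartree equation (i.e.\ \eqref{Hartree equation} with $1/|\cdot|$ replaced by $1/(|\cdot|+\alpha_N)$) with $\pha_0=\ph$. Since $\ph\in H^1$ and $\sup_{|t|\le T}\|\ph_t\|_{H^{1/2}}=\kappa<\infty$, persistence of regularity for the semi-relativistic Hartree flow (see \cite{L,CO}) gives $\sup_{|t|\le T}\|\pha_t\|_{H^1}\le C_1(\lambda,T,\kappa,\|\ph\|_{H^1})$ uniformly in $N$; and a Gr\"onwall estimate in $L^2$ for $\ph_t-\pha_t$, using $\int\big(|z|^{-1}-(|z|+\alpha_N)^{-1}\big)|\pha_t(z)|^2\,dz=O(\alpha_N)$, yields $\sup_{|t|\le T}\|\ph_t-\pha_t\|_{L^2}\le C\alpha_N\le CN^{-4}$, hence $\mathrm{Tr}\,\big||\ph_t\rangle\langle\ph_t|-|\pha_t\rangle\langle\pha_t|\big|\le CN^{-4}$. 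So it suffices to prove the estimate with $\ph_t$ replaced by $\pha_t$.

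\emph{Fluctuation dynamics.} Define $\wU_N(t):=W^*(\sqrt N\pha_t)\,e^{-it\mathcal{H}_N^{\alpha}}\,W(\sqrt N\ph)$, a unitary on $\F$ with $\wU_N(0)=1$, hence $\wU_N(0)\Omega=\Omega$, and $i\partial_t\wU_N(t)=\cL_N(t)\wU_N(t)$. Expanding $W^*(\sqrt N\pha_t)\mathcal{H}_N^{\alpha}W(\sqrt N\pha_t)$ by the Weyl shift, the terms linear in $(a,a^*)$ cancel precisely because $\pha_t$ solves the regularized Hartree equation, so that (up to an irrelevant real time-dependent scalar) $\cL_N(t)=\cL_2(t)+N^{-1/2}\cL_3(t)+N^{-1}\cL_4(t)$, with $\cL_2(t)$ quadratic in $(a,a^*)$ and containing $\K$ together with bounded- and pairing-type potential terms built from $\pha_t$ and $(|x-y|+\alpha_N)^{-1}$, $\cL_3(t)$ cubic and linear in $\pha_t$, and $\cL_4(t)=-\frac{\lambda}{2}\int\!\!\int(|x-y|+\alpha_N)^{-1}a_x^*a_y^*a_ya_x$; moreover $(-1)^{\N}$ commutes with $\cL_2,\cL_4$ and anticommutes with $\cL_3$. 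Writing $\psia_{N,t}=e^{-it\mathcal{H}_N^{\alpha}}\ph^{\otimes N}=\frac{\sqrt{N!}}{N^{N/2}e^{-N/2}}\,P_N\,W(\sqrt N\pha_t)\,\wU_N(t)\Omega$ ($P_N$ the projection onto the $N$-particle sector, which commutes with $\mathcal{H}_N^{\alpha}$), the same computation as in \cite{RS,CLS} gives, for every bounded one-particle operator $J$, a bound of the form
\[
\Big|\mathrm{Tr}\,J\big(\gamma_{N,t}^{\alpha,(1)}-|\pha_t\rangle\langle\pha_t|\big)\Big|\ \le\ \frac{C\|J\|}{N}\Big(\,\big\langle(\N+1)^2\big\rangle_t+\sqrt N\,\delta_N(t)+1\,\Big),
\]
where $\langle\,\cdot\,\rangle_t=\langle\wU_N(t)\Omega,\,\cdot\,\wU_N(t)\Omega\rangle$ and $\delta_N(t):=\sup_{\|f\|_{L^2}\le1}|\langle\wU_N(t)\Omega,a(f)\wU_N(t)\Omega\rangle|$. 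Taking the supremum over $\|J\|\le1$, the theorem reduces to the two estimates $\sup_{|t|\le T}\langle(\N+1)^2\rangle_t\le C$ and $\sup_{|t|\le T}\delta_N(t)\le CN^{-1/2}$, with $C$ uniform in $N$ and in $\alpha_N\in(0,N^{-4}]$.

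\emph{The two estimates.} For the first, I would propagate a combined energy--number functional of the form $\langle\wU_N(t)\Omega,\big[(\N+1)^2+\K(\N+1)+\dots\big]\wU_N(t)\Omega\rangle$ (with a few further terms needed to close the hierarchy, e.g.\ $\K(\N+1)^2$): it is finite at $t=0$ (equal to a constant, since $\wU_N(0)\Omega=\Omega$), and differentiating in $t$ produces commutators of $\cL_N(t)$ with these operators. The potential-type pieces of $\cL_2$, being built from $L^2$ kernels (note $\int|z|^{-2}|\pha_t(x-z)|^2\,dz\le 4\|\nabla\pha_t\|_{L^2}^2$ by Hardy), contribute terms bounded by the functional itself; the cubic and quartic pieces carry the Coulomb singularity, which I would dominate by the kinetic energy using Kato's inequality $|x|^{-1}\le\frac{\pi}{2}(-\Delta)^{1/2}\le\frac{\pi}{2}\sqrt{1-\Delta}$ (and its two-body analogue), the extra factors $N^{-1/2},N^{-1}$ absorbing any loss; a Gr\"onwall argument, with constants depending only on $\lambda$ and $\sup_{|t|\le T}\|\pha_t\|_{H^1}\le C_1$, then yields $\langle(\N+1)^2\rangle_t\le C$. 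For the second estimate I would use the parity structure: let $\mathcal{V}_N(t)$ be the dynamics generated by the parity-even part $\cL_2(t)+N^{-1}\cL_4(t)$, so that $\langle\mathcal{V}_N(t)\Omega,a(f)\mathcal{V}_N(t)\Omega\rangle=0$ (since $a(f)$ is parity-odd and $\Omega$ parity-even). By Duhamel, $\wU_N(t)\Omega-\mathcal{V}_N(t)\Omega=-iN^{-1/2}\int_0^t\mathcal{V}_N(t;s)\,\cL_3(s)\,\wU_N(s)\Omega\,ds$, and using the moment bounds just established together with standard Bogoliubov-type estimates $\|\N^{j/2}\mathcal{V}_N(t;s)\Phi\|\le C\|(\N+1)^{j/2}\Phi\|$ one gets $\|\N^{1/2}(\wU_N(t)-\mathcal{V}_N(t))\Omega\|\le CN^{-1/2}$; expanding $\langle\wU_N(t)\Omega,a(f)\wU_N(t)\Omega\rangle$ around $\mathcal{V}_N(t)\Omega$ and bounding $\|a(f)\Psi\|\le\|f\|\,\|\N^{1/2}\Psi\|$ then gives $\delta_N(t)\le CN^{-1/2}$.

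\emph{Conclusion and main difficulty.} Inserting $\langle(\N+1)^2\rangle_t\le C$ and $\delta_N(t)\le CN^{-1/2}$ into the displayed bound yields $\mathrm{Tr}\,\big|\gamma_{N,t}^{\alpha,(1)}-|\pha_t\rangle\langle\pha_t|\big|\le C/N$ for $|t|\le T$, and combining with the $O(N^{-4})$ correction from the first reduction gives the claimed $\mathrm{Tr}\,\big|\gamma_{N,t}^{\alpha,(1)}-|\ph_t\rangle\langle\ph_t|\big|\le CN^{-1}$. I expect the main obstacle to be the first of the two estimates: because the Coulomb singularity $|x-y|^{-1}$ has exactly the same scaling as the relativistic kinetic operator $(1-\Delta)^{1/2}$, the commutators to be controlled are critical rather than subcritical, so one must pair the interaction against $\K$ by Kato's inequality in just the right way and verify that the differential inequality for the combined energy--number functional genuinely closes --- with constants uniform in $N$ and $\alpha_N$ --- using only the $H^1$ bound on the Hartree solution (earlier work needed $H^2$). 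Tracking the $\alpha_N$-dependence so that the regularization never degrades these estimates, and keeping the dependence of all constants on the Hartree data as stated, is the remaining bookkeeping.
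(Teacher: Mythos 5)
Your overall framework is the one the paper follows: regularize, pass to Fock space, introduce the fluctuation dynamics $\U(t)=W^*(\sqrt N\pha_t)e^{-it\ha_N}W(\sqrt N\ph)$ plus the parity-even reference dynamics $\wU(t)$ generated by $\cL_2+\cL_4$, and compare. Your moment bound $\langle\U(t)\Omega,(\N+1)^2\U(t)\Omega\rangle\leq C$ and the scalar parity estimate $\delta_N(t)\leq CN^{-1/2}$ are both correct. But the reduction of the theorem to those two statements does not hold, and the missing step is precisely the paper's main new technical contribution.

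Concretely, after conjugating by Weyl operators and projecting onto $P_N$ the problematic term is (paper's $E_t^2$)
\begin{equation*}
E_t^2(J)=\frac{d_N}{\sqrt N}\,\Big\langle\, W^*(\sqrt N\ph)\frac{(a^*(\ph))^N}{\sqrt{N!}}\Omega,\ \U^*(t)\phi(J\pha_t)\U(t)\Omega\,\Big\rangle,
\qquad d_N\simeq N^{1/4}.
\end{equation*}
Your quantity $\delta_N(t)=\sup_{\|f\|\le1}|\langle\Omega,\U^*(t)a(f)\U(t)\Omega\rangle|$ only measures the coefficient of the $P_0$-component of the vector $\U^*\phi(J\pha_t)\U\Omega$. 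The left-hand vector $\xi_N:=W^*(\sqrt N\ph)\frac{(a^*(\ph))^N}{\sqrt{N!}}\Omega$ satisfies $\langle\xi_N,\Omega\rangle=d_N^{-1}$, so the $P_0$-contribution to $E_t^2$ is indeed $O(\delta_N/\sqrt N)=O(N^{-1})$. However the pairing against the orthogonal complement $(1-P_0)\U^*\phi\U\Omega$ is controlled, via the general Weyl-state estimate $\|(\N+1)^{-1/2}\xi_N\|\le C/d_N$, only by
\begin{equation*}
\frac{d_N}{\sqrt N}\cdot\frac{C}{d_N}\cdot\big\|(\N+1)^{1/2}\U^*\phi(J\pha_t)\U\Omega\big\|\ \leq\ \frac{C\|J\|}{\sqrt N},
\end{equation*}
which is exactly the Rodnianski--Schlein $N^{-1/2}$ rate and no better. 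Your claimed intermediate bound with the $\sqrt N\,\delta_N(t)$ term therefore cannot be derived by ``the same computation as in \cite{RS,CLS}''; as written, your argument produces $O(N^{-1/2})$, not $O(N^{-1})$.

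The idea you are missing is that the parity structure must be exploited on the \emph{Weyl conversion factor} $\xi_N$, not just on the fluctuation dynamics. After replacing $\U$ by $\wU$ (with error $O(N^{-1/2})$ from Duhamel/$\cL_3$, which is fine), one uses that $\wU^*\phi(J\pha_t)\wU\Omega$ is supported on the odd Fock sectors, so only $P_{2k+1}\xi_N$ contributes. The paper then proves the crucial improved estimate (Lemma \ref{odd estimate for W state})
\begin{equation*}
\Big\|P_{2k+1}\,W^*(\sqrt N\ph)\frac{(a^*(\ph))^N}{\sqrt{N!}}\Omega\Big\|\ \leq\ \frac{2(k+1)^{3/2}}{d_N\sqrt N},
\end{equation*}
an extra factor of $N^{-1/2}$ relative to the even components, obtained explicitly by writing $P_\ell\xi_N$ in terms of generalized Laguerre polynomials $L_\ell^{(N-\ell)}(N)$ and running a two-term recursion on these coefficients. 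Combining this with the parity of $\wU$ turns the $N^{-1/2}$ into $N^{-1}$. Without this Laguerre lemma there is no gain: even if $\U$ were exactly parity-preserving (so $\delta_N\equiv0$), the pairing $\langle P_{\mathrm{odd}}\xi_N,\wU^*\phi\wU\Omega\rangle$ would still need the improved $N^{-1/2}$ decay of $\|P_{2k+1}\xi_N\|$.

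A secondary remark: you flag the moment bound on the fluctuation dynamics as the ``main obstacle'' because of the critical Coulomb/relativistic scaling. In fact the paper imports this bound from Proposition 3.3 of \cite{RS} essentially unchanged; with $\sup_{|t|\le T}\|\pha_t\|_{H^1}<\infty$ the kernel $|x-y|^{-1}\overline{\pha_t}(y)$ is square-integrable by Hardy's inequality and the standard $\N$-conservation argument closes. The genuine difficulty the paper highlights (and which distinguishes it from \cite{CL,CLS}) is the $N^{3/4}$ loss in converting coherent to factorized data, and its cure is the Laguerre-polynomial estimate above together with parity; that is the step your proposal does not supply.
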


\begin{rem}
The existence of such $T$ follows from the local well-posedness of the semi-relativistic Hartree equation \eqref{Hartree equation}. See \cite{L, CO} for more detail.
\end{rem}

As in \cite{RS, CL, CLS, MS}, we first consider the case where the initial state is the coherent state in the Fock space. (See \eqref{Weyl operator} and \eqref{coherent state}.) For the evolution of the coherent state, we need to control the fluctuation $\U_N (t;s)$, which is defined in \eqref{definition of U}, around the semi-relativistic Hartree dynamics. It was proved in Theorem 4.1 of \cite{MS} that for the evolution of the coherent state we can achieve the optimal rate of convergence $O(N^{-1})$ towards the semi-relativistic Hartree dynamics. We then use the information on the evolution of the coherent state to estimate the fluctuations for the dynamics of the factorized state \eqref{factorization}.

The main technical difficulty here is that the conversion procedure from the coherent state to the factorized state generates a factor of order $N^{3/4}$, which makes the rate of convergence to be of order $N^{-1/4}$ if no further treatment is applied. (See the term $E_t^2$ in \eqref{expansionU} for the detail.) To compensate the loss in the rate of convergence, Rodnianski and Schlein \cite{RS} used an estimate on this term, which is equivalent to Lemma \ref{general estimate for W state} in this paper. This improves the rate by $N^{1/4}$, which gives the $O(N^{-1/2})$ rate of convergence in \cite{RS} and \cite{MS}. 

In the non-relativistic case, as in \cite{CL, CLS}, it was possible to overcome the difficulty by controlling the fluctuation $\U(t;s)$ first by comparing it with an approximate dynamics $\U_2 (t;s)$, whose generator is $\cL_2 (t)$ (see \eqref{L2}), which was introduced by Ginibre and Velo \cite{GV1} as a limiting dynamics. While this technique circumvents the problem simply by not generating the term with a factor of order $N^{3/4}$, it requires to estimate the square of the interaction potential energy by kinetic energy, which does not work for the semi-relativistic case.

In this paper, we use an approximate evolution $\wU (t;s)$ as in \cite{RS} and attain an additional factor of order $N^{-1/2}$ by improving the estimate Lemma \ref{general estimate for W state} as in Lemma \ref{odd estimate for W state}. While this improved bound holds only for Fock states with odd number of particles, it turns out by counting the parity that this bound is enough to achieve the optimal rate of convergence for the factorized initial data. This estimate shows we can convert the results for the coherent states to the factorized states without any loss in terms of $N$ dependence, and in particular, it can also be applied for other problems including the rate of convergence problem for the non-relativistic case.

Another technical difficulty in the semi-relativistic case appears in the case $4 / \pi \leq \lambda < \lambda_{crit}$. In this case, while the system is still subcritical, it is harder to control the interaction potential energy by the kinetic energy unlike the case $\lambda \leq 4 / \pi$ where we may use Kato's inequality. See Lemma \ref{regularization lemma 1} and Lemma \ref{regularization lemma 2} for the technical details.

The paper is organized as follows. In Section \ref{sec:reg}, we show that the time evolution with original Hamiltonian can be well approximated by the time evolution with regularized Hamiltonian, provided that the cutoff approaches zero sufficiently fast. In Section \ref{sec:fock}, we define the Fock space and reformulate the problem using the operators defined on the Fock space. In Section \ref{sec:main}, we prove Proposition \ref{main prop}, which implies the main results of the paper. A series of estimates will be proved in Sections \ref{sec:comp} - \ref{sec:tilde}.

\begin{rem}
Throughout the paper, $C$ and $K$ will denote various constants independent of $N$. The $L^p$-space norm for $1 \leq p \leq \infty$ will be denoted by $\| \cdot \|_p$. The sequence $\alpha_N$ is positive and satisfies $\alpha_N \leq N^{-4}$. The norm $\| \cdot \|$ will denote the Fock space norm, which will be defined later via the scalar product \eqref{scalar}, except for the case we denote by $\| J \|$ the operator norm of an operator $J$ as in Lemma \ref{J bound}.
\end{rem}

\section{Regularization of the Interaction} \label{sec:reg}

Recall that the regularized Hamiltonian is defined by
\begin{equation} \label{regularized Hamiltonian}
H_N^{\alpha} = \sum_{j=1}^N (1-\Delta_j)^{1/2} - \frac{\lambda}{N} \sum_{i<j}^N \frac{1}{|x_i - x_j| + \alpha_N}.
\end{equation}
As in \cite{CLS}, we first prove an estimate for the difference between the evolution of the initial $N$-particle wavefunction with respect to the original Hamiltonian $H_N$ and with respect to the regularized Hamiltonian $H_N^{\alpha}$.

\begin{lem} \label{regularization lemma 1}
Let $\psi_N = \varphi^{\otimes N}$ for some $\varphi \in H^1 (\bR^3)$ with $\| \varphi \|_2 = 1$. Let $\psi_{N,t} = e^{-iH_N t} \psi_N$ and $\psia_{N,t} = e^{-i H_N^{\alpha} t} \psi_N$. If $\lambda < \lambda_{crit}^H$, then there exist constants $C > 0$ and $N_0$ such that, for all $t \in \bR$ and positive integer $N > N_0$,
\begin{equation}
\left\| \psi_{N,t} - \psia_{N,t} \right\|_2^2 \leq C N^2 \alpha_N |t|.
\end{equation}
\end{lem}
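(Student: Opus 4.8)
The plan is to run the usual energy (Duhamel) argument and reduce the whole estimate to a uniform-in-time bound on a single matrix element of $H_N-H_N^\alpha$, and then to reduce that to an a priori $H^1$-bound on the evolved wavefunctions. Since $\|\psi_{N,t}\|_2=\|\psia_{N,t}\|_2=1$, one has $\|\psi_{N,t}-\psia_{N,t}\|_2^2=2-2\,\mathrm{Re}\,\langle\psi_N,e^{iH_Nt}e^{-iH_N^\alpha t}\psi_N\rangle$; differentiating in $t$ and using that $H_N$ is self-adjoint gives
\[
\|\psi_{N,t}-\psia_{N,t}\|_2^2=2\,\mathrm{Im}\int_0^t\langle\psi_{N,s},(H_N-H_N^\alpha)\psia_{N,s}\rangle\,ds .
\]
Hence it suffices to prove $|\langle\psi_{N,s},(H_N-H_N^\alpha)\psia_{N,s}\rangle|\le CN^2\alpha_N$ for all $s\in\bR$.

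Write $H_N-H_N^\alpha=\frac{\lambda}{N}\sum_{i<j}g_{ij}$ with $g_{ij}:=\frac{1}{|x_i-x_j|}-\frac{1}{|x_i-x_j|+\alpha_N}=\frac{\alpha_N}{|x_i-x_j|(|x_i-x_j|+\alpha_N)}\ge 0$. Since $\sum_{i<j}g_{ij}\ge 0$, Cauchy--Schwarz for this operator together with the permutation symmetry of $\psi_{N,s}$ and $\psia_{N,s}$ yields
\[
\bigl|\langle\psi_{N,s},(H_N-H_N^\alpha)\psia_{N,s}\rangle\bigr|\le\frac{\lambda}{N}\binom{N}{2}\max_{\psi\in\{\psi_{N,s},\,\psia_{N,s}\}}\langle\psi,g_{12}\,\psi\rangle .
\]
Now use the pointwise bound $g_{12}\le\alpha_N|x_1-x_2|^{-2}$ together with Hardy's inequality applied in the variable $x_1$ (the rest being parameters), $\langle\psi,|x_1-x_2|^{-2}\psi\rangle\le 4\|\nabla_1\psi\|_2^2$, to get $\langle\psi,g_{12}\psi\rangle\le 4\alpha_N\|\nabla_1\psi\|_2^2$. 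Thus the lemma reduces to the a priori bound $\|\nabla_1\psi_{N,s}\|_2^2\le CN$ and $\|\nabla_1\psia_{N,s}\|_2^2\le CN$, \emph{uniformly in} $s$, for then the displayed quantity is $\le\frac{\lambda}{N}\binom{N}{2}\cdot 4CN\alpha_N\le C'N^2\alpha_N$.

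To establish this $H^1$-bound — the point where the subcriticality $\lambda<\lambda_{crit}^H$ is essential — note first that by permutation symmetry and the positivity of $(1-\Delta_j)^{1/2}$ (so that pairwise products are nonnegative), $N\|\nabla_1\psi\|_2^2=\langle\psi,\sum_j(-\Delta_j)\psi\rangle\le\langle\psi,(\sum_j(1-\Delta_j)^{1/2})^2\psi\rangle=\|\sum_j(1-\Delta_j)^{1/2}\psi\|_2^2$. For $\lambda<\lambda_{crit}^H$ and $N$ large, $H_N$ (and $H_N^\alpha$, using $0\le W_\alpha:=\frac{\lambda}{N}\sum_{i<j}(|x_i-x_j|+\alpha_N)^{-1}\le W:=\frac{\lambda}{N}\sum_{i<j}|x_i-x_j|^{-1}$) is self-adjoint with a relative boundedness estimate $\|\sum_j(1-\Delta_j)^{1/2}\psi\|_2\le C(\|H_N\psi\|_2+N\|\psi\|_2)$ with $C$ independent of $N$ and of $\alpha_N$; for $\lambda<4/\pi$ this comes from Kato's inequality $|x|^{-1}\le\frac{\pi}{2}|\nabla|$ applied pairwise, while for $4/\pi\le\lambda<\lambda_{crit}^H$ it requires the sharper collective (Hardy--Littlewood--Sobolev/stability) estimate. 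Combining this with conservation of the energy of the squared generators, $\langle\psi_{N,s},H_N^2\psi_{N,s}\rangle=\|H_N\psi_N\|_2^2$ and $\langle\psia_{N,s},(H_N^\alpha)^2\psia_{N,s}\rangle=\|H_N^\alpha\psi_N\|_2^2$ (legitimate since $\varphi\in H^1$ places $\psi_N=\varphi^{\otimes N}$ in the operator domain), and the elementary fact that $\|H_N\psi_N\|_2^2=O(N^2)$ — finite because $\varphi\in H^1$ and $|x_1-x_2|^{-1}(\varphi\otimes\varphi)\in L^2$ by Hardy, and $O(N^2)$ by a direct computation on the factorized state, and likewise $\|H_N^\alpha\psi_N\|_2^2=O(N^2)$ — we obtain $\|\nabla_1\psi_{N,s}\|_2^2\le\frac1NC(\|H_N\psi_N\|_2^2+N^2)=O(N)$ uniformly in $s$, and the same for $\psia_{N,s}$. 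Feeding this back completes the proof.

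The main obstacle is precisely this last uniform relative boundedness estimate: one needs genuine $H^1$ control on the evolved states (not merely the natural $H^{1/2}$ form control coming from first-order energy conservation), with a constant that does not degrade as $N\to\infty$, and this is delicate in the regime $4/\pi\le\lambda<\lambda_{crit}^H$ where the Kato inequality can no longer be localized to pairs and the collective subcritical stability of $H_N$ must be invoked — the technical point flagged in the introduction.
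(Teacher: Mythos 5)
Your initial reduction (Duhamel/Gronwall to the single matrix element of $H_N-H_N^\alpha$, and then to a bound on $\langle\psi,g_{12}\psi\rangle$ for $\psi\in\{\psi_{N,s},\psia_{N,s}\}$ via positivity and permutation symmetry) matches the paper's first step, and the pointwise bound $g_{12}\le\alpha_N|x_1-x_2|^{-2}$ is the same. The proofs diverge at the choice of operator inequality for $|x_1-x_2|^{-2}$, and this is where your argument has a genuine gap.

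You apply Hardy's inequality in the single variable $x_1$, giving $g_{12}\le 4\alpha_N(-\Delta_1)$ and reducing everything to the a priori bound $\|\nabla_1\psi_{N,s}\|_2^2\le CN$ uniformly in $s$. This asks for an $H^1$ bound on the one-particle marginal of the evolved state, which is precisely the regularity that the semi-relativistic evolution does \emph{not} propagate: the energy $\langle\psi,H_N\psi\rangle$ and its square $\langle\psi,H_N^2\psi\rangle$ are conserved and of the right size at $t=0$ because $\varphi\in H^1$, but neither is equivalent to $\langle\psi,\sum_j(1-\Delta_j)\psi\rangle$ in the scaling where $|x|^{-1}$ and $(1-\Delta)^{1/2}$ have the same dimension. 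To rescue this you invoke an operator-norm relative bound $\|\sum_j(1-\Delta_j)^{1/2}\psi\|_2\le C(\|H_N\psi\|_2+N\|\psi\|_2)$ with $C$ \emph{uniform in $N$}, and wave at ``collective stability'' for $4/\pi\le\lambda<\lambda_{crit}^H$. This is the step that fails. The subcriticality $\lambda<\lambda_{crit}^H$ yields only \emph{form} boundedness of the potential with a constant $<1$ relative to $\sum_j(1-\Delta_j)^{1/2}$; it does not yield operator-norm (squared-form) boundedness of $\sum_jS_j$ by $H_N$ with an $N$-independent constant, exactly because $|x|^{-1}$ is not infinitesimally $(-\Delta)^{1/2}$-bounded. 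Indeed the paper's corresponding quadratic-in-$H_N$ inequality, Lemma~\ref{regularization lemma 2}, is only $\sum_{i\ne j}S_iS_j\lesssim N H_N^2 + N H_N + N^2$, which carries an extra explicit factor of $N$ in front of $H_N^2$ (arising in \eqref{H^2 lower bound 4} from summing $N$ copies of a single-particle estimate), and it bounds \emph{only the cross terms} $\sum_{i<j}\langle\psi,S_iS_j\psi\rangle=O(N^3)$, saying nothing about $\langle\psi,\sum_jS_j^2\psi\rangle$. Even granting the paper's Lemma~\ref{regularization lemma 2}, your route would yield only $\|\sum_jS_j\psi\|_2^2=O(N^3)$ at best, hence $\|\nabla_1\psi_{N,s}\|_2^2=O(N^2)$ and a final bound $O(N^3\alpha_N)$, one factor of $N$ too weak.

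What the paper does instead is use the \emph{two-variable} operator inequality $|x_1-x_2|^{-2}\le C\,(1-\Delta_1)^{1/2}(1-\Delta_2)^{1/2}$ (Lemma~9.1 of \cite{ES}), so that everything reduces to $\langle\psi,S_1S_2\psi\rangle=O(N)$, equivalently $\sum_{i<j}\langle\psi,S_iS_j\psi\rangle=O(N^3)$, which is exactly what Lemma~\ref{regularization lemma 2} proves using $\langle\psi,H_N^2\psi\rangle=O(N^2)$ and a careful decomposition via $H_{N-1}$, never needing $H^1$ control on the marginal. The lesson: the single-variable Hardy inequality trades the two-variable $H^{1/2}$ control that is available (from $H_N^2$-conservation and collective subcriticality) for single-variable $H^1$ control that is \emph{not} available for this $H^{1/2}$-critical evolution; that trade is the wrong one here, and the estimate you claim to recover it is not proved and is not a consequence of subcriticality.
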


\begin{proof}
We first consider the derivative
\begin{equation} \label{regularization difference gronwall 1}
\frac{d}{dt} \left\| \psi_{N,t} - \psia_{N,t} \right\|_2^2 = -2\;\text{Re} \frac{d}{dt} \langle \psi_{N, t}, \psia_{N, t} \rangle = 2\; \text{Im} \langle \psi_{N, t}, (H_N - H_N^{\alpha}) \psia_{N, t} \rangle.
\end{equation}
Next, we note that
\begin{equation} \begin{split} \label{regularization difference gronwall 2}
&\left| \langle \psi_{N, t}, (H_N - H_N^{\alpha}) \psia_{N, t} \rangle \right| \\
&= \frac{\lambda}{N} \left| \left \langle \psi_{N, t}, \sum_{i<j}^N \left( \frac{1}{|x_i - x_j|} - \frac{1}{|x_i - x_j| + \alpha_N} \right) \psia_{N, t} \right \rangle \right| \\
&\leq \frac{\lambda \alpha_N}{N} \sum_{i<j}^N \left| \left \langle \psi_{N, t}, \left( \frac{1}{|x_i - x_j| (|x_i - x_j| + \alpha_N)} \right) \psia_{N, t} \right \rangle \right| \\
&\leq \frac{\lambda \alpha_N}{N} \sum_{i<j}^N \langle \psi_{N, t}, (1-\Delta_i)^{1/2} (1-\Delta_j)^{1/2} \psi_{N, t} \rangle^{1/2} \\
&\qquad \qquad \times \langle \psia_{N, t}, (1-\Delta_i)^{1/2} (1-\Delta_j)^{1/2} \psia_{N, t} \rangle^{1/2} \\
&\leq \frac{\lambda \alpha_N}{N} \sum_{i<j}^N \Big( \langle \psi_{N, t}, (1-\Delta_i)^{1/2} (1-\Delta_j)^{1/2} \psi_{N, t} \rangle \\
&\qquad \qquad + \langle \psia_{N, t}, (1-\Delta_i)^{1/2} (1-\Delta_j)^{1/2} \psia_{N, t} \rangle \Big),
\end{split} \end{equation}
where we used the operator inequality
\begin{equation}
\frac{1}{|x_i - x_j|^2} \leq C (1-\Delta_i)^{1/2} (1-\Delta_j)^{1/2}.
\end{equation}
(See Lemma 9.1 of \cite{ES} for the proof.)

In Lemma \ref{regularization lemma 2}, we show that
\begin{equation}
\sum_{i<j}^N \langle \psi_{N, t}, (1-\Delta_i)^{1/2} (1-\Delta_j)^{1/2} \psi_{N, t} \rangle \leq C N^3
\end{equation}
and
\begin{equation}
\sum_{i<j}^N \langle \psia_{N, t}, (1-\Delta_i)^{1/2} (1-\Delta_j)^{1/2} \psia_{N, t} \rangle \leq C N^3.
\end{equation}
Thus, from \eqref{regularization difference gronwall 1}, \eqref{regularization difference gronwall 2}, and Lemma \ref{regularization lemma 2}, we find that
\begin{equation}
\frac{d}{dt} \left\| \psi_{N,t} - \psia_{N,t} \right\|_2^2 \leq C N^2 \alpha_N.
\end{equation}
The desired lemma follows after integrating over $t$.
\end{proof}

From Lemma \ref{regularization lemma 1}, we obtain a bound on the difference between the marginal densities associated with the $\psi_{N, t}$ and $\psia_{N, t}$.

\begin{cor} \label{difference between marginals}
Let $\psi_N = \varphi^{\otimes N}$ for some $\varphi \in H^1 (\bR^3)$ with $\| \varphi \|_2 = 1$. Let $\psi_{N,t} = e^{-iH_N t} \psi_N$ and $\psia_{N,t} = e^{-i H_N^{\alpha} t} \psi_N$. For any $k \in \bN$, let $\gamma^{(k)}_{N,t}$ and $\gamma^{\alpha, (k)}_{N,t}$ be the $k$-particle reduced densities associated with $\psi_{N, t}$ and $\psia_{N,t}$, respectively. Suppose $\alpha_N \leq N^{-4}$ in \eqref{regularized Hamiltonian}. If $\lambda < \lambda_{crit}^H$, then there exist constants $C > 0$ and $N_0$ such that, for all $t \in \bR$ and positive integer $N > N_0$,
\begin{equation}
\textrm{Tr } \left| \gamma^{(k)}_{N,t} - \gamma^{\alpha, (k)}_{N,t} \right| \leq C |t|^{1/2}N^{-1}.
\end{equation}
\end{cor}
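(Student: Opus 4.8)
The plan is to reduce the statement to Lemma \ref{regularization lemma 1} via the standard fact that the trace-norm distance between the $k$-particle marginal densities of two normalized $N$-body wave functions is controlled by the $L^2$-distance of the wave functions themselves.

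First I would record the elementary comparison: if $\psi, \phi \in L^2(\bR^{3N})$ with $\| \psi \|_2 = \| \phi \|_2 = 1$, and $\gamma_\psi^{(k)}, \gamma_\phi^{(k)}$ denote the associated $k$-particle marginal densities (so that, with the normalization used in this paper, they are exactly the partial traces of the rank-one projections $|\psi\rangle\langle\psi|$ and $|\phi\rangle\langle\phi|$ over the variables $x_{k+1}, \dots, x_N$), then
\begin{equation}
\textrm{Tr } \left| \gamma_\psi^{(k)} - \gamma_\phi^{(k)} \right| \leq \textrm{Tr } \left| \, |\psi\rangle\langle\psi| - |\phi\rangle\langle\phi| \, \right| = 2 \sqrt{1 - |\langle \psi, \phi \rangle|^2} \leq 2 \| \psi - \phi \|_2 .
\end{equation}
The first inequality holds because taking partial traces does not increase the trace norm; the middle equality is the standard formula for the trace norm of the difference of two rank-one orthogonal projections; the last inequality follows from $1 - |\langle \psi, \phi \rangle|^2 \leq 2(1 - \textrm{Re}\,\langle \psi, \phi \rangle) = \| \psi - \phi \|_2^2$. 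The point to emphasize is that the constant here is independent of both $k$ and $N$. (One could instead simply quote the analogous lemma already used in \cite{RS} or \cite{CLS}.)

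Applying this with $\psi = \psi_{N,t}$ and $\phi = \psia_{N,t}$, both of which are normalized because $e^{-iH_N t}$ and $e^{-iH_N^\alpha t}$ are unitary, and then invoking Lemma \ref{regularization lemma 1} gives, for all $t \in \bR$ and $N > N_0$,
\begin{equation}
\textrm{Tr } \left| \gamma_{N,t}^{(k)} - \gamma_{N,t}^{\alpha, (k)} \right| \leq 2 \left\| \psi_{N,t} - \psia_{N,t} \right\|_2 \leq C N \, \alpha_N^{1/2} \, |t|^{1/2}.
\end{equation}
Finally, the hypothesis $\alpha_N \leq N^{-4}$ yields $\alpha_N^{1/2} \leq N^{-2}$, so the right-hand side is bounded by $C |t|^{1/2} N^{-1}$, which is the assertion of the corollary.

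I do not anticipate any genuine obstacle: the content of the corollary is entirely carried by Lemma \ref{regularization lemma 1}, and the only step needing a small amount of care is to check that the marginal-density-to-wave-function comparison produces a constant free of $k$ and $N$, which the partial-trace contraction argument delivers automatically.
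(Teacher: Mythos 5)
Your argument is correct; it is also, as far as one can tell, the same argument used in Corollary 2.1 of \cite{CLS}, which is all the present paper cites for this step. The chain partial-trace contraction $\to$ trace norm of a rank-one projection difference $\to$ $2\|\psi-\phi\|_2$ is the standard reduction, and your verification that $1-|\langle\psi,\phi\rangle|^2 \leq 2(1-\mathrm{Re}\,\langle\psi,\phi\rangle)=\|\psi-\phi\|_2^2$ is exactly right, as is the arithmetic $N\alpha_N^{1/2}\le N^{-1}$ under $\alpha_N\le N^{-4}$.
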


\begin{proof}
See Corollary 2.1 of \cite{CLS}.
\end{proof}

We next estimate the difference between the solutions of the semi-relativistic Hartree equations with the Coulomb potential and with the regularized potential. The proof of the following proposition will be given in Section \ref{sec:dyn}.

\begin{prop} \label{Hartree solution difference}
Let $\ph \in H^1 (\bR^3)$ with $\| \ph \|_2 = 1$. Let $\ph_t$ denote the solution of the nonlinear Hartree equation \eqref{Hartree equation} with initial condition $\ph_{t=0} = \ph$ and $\ph_t^{\alpha}$ the solution of the regularized semi-relativistic Hartree equation
\begin{equation} \label{Hartree equation with cutoff}
i \partial_t \ph_t^{\alpha} = (1 - \Delta)^{1/2} \ph_t^{\alpha} - \lambda \left( \frac{1}{| \cdot | + \alpha_N} * |\ph_t^{\alpha}|^2 \right) \ph_t^{\alpha},
\end{equation}
with the same initial condition $\ph_{t=0}^{\alpha} = \ph$. Fix $T$ such that
\begin{equation}
\kappa = \sup_{|t| \leq T} \| \varphi_t \|_{H^{1/2}} < \infty.
\end{equation}
Then, there exist constants $C$ and $K$, depending only on $\lambda$, $\kappa$, $T$, and $\| \ph \|_{H^1}$, such that
\begin{equation}
\| \ph_t - \ph_t^{\alpha} \|_{H^{1/2}} \leq C \alpha_N^{1/2}
\end{equation}
for all $|t| < T$. Therefore,
\begin{equation} 
\textrm{Tr } \Big| | \varphi_t^{\alpha} \rangle \langle \varphi_t^{\alpha} | - | \varphi_t \rangle \langle \varphi_t | \Big| \leq \| \varphi_t - \varphi_t^{\alpha} \|_2 \leq C \alpha_N^{1/2}.
\end{equation}
\end{prop}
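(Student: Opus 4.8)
The plan is to estimate $w_t:=\ph_t-\ph_t^{\alpha}$ by a Gronwall argument in $H^{1/2}$, after first securing a priori bounds on both flows that are uniform in $\alpha_N$. Since $\|\ph_t\|_{H^{1/2}}\le\kappa$ on $[-T,T]$ and $\ph\in H^1$, persistence of regularity for \eqref{Hartree equation} (see \cite{L, CO}) gives $\sup_{|t|\le T}\|\ph_t\|_{H^1}=:K_1<\infty$; and because $0\le(|x|+\alpha_N)^{-1}\le|x|^{-1}$ pointwise, the same local theory applies to $\ph_t^{\alpha}$ with constants independent of $\alpha_N$. The a priori bound for $\ph_t^{\alpha}$ and the Gronwall estimate below are a priori intertwined, which I would untangle by a continuity argument: on the maximal subinterval $[-T_{\alpha},T_{\alpha}]\subseteq[-T,T]$ where $\sup_{|s|\le T_{\alpha}}\|\ph_s^{\alpha}\|_{H^{1/2}}\le 2\kappa$ one has $\sup\|\ph_t^{\alpha}\|_{H^1}\le K_1'$, so the estimate below yields $\|w_t\|_{H^{1/2}}\le C\alpha_N^{1/2}$, forcing $\|\ph_t^{\alpha}\|_{H^{1/2}}\le\kappa+C\alpha_N^{1/2}<2\kappa$ once $N$ is large (recall $\alpha_N\le N^{-4}$), hence $T_{\alpha}=T$.

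For the Gronwall step, write $\mathcal{G}(\psi):=(|\cdot|^{-1}*|\psi|^2)\psi$ and $\mathcal{G}^{\alpha}(\psi):=((|\cdot|+\alpha_N)^{-1}*|\psi|^2)\psi$. Subtracting the two Duhamel representations gives $w_t=i\lambda\int_0^t e^{-i(t-s)(1-\Delta)^{1/2}}\big(\mathcal{G}(\ph_s)-\mathcal{G}^{\alpha}(\ph_s^{\alpha})\big)\,ds$, and since $e^{-i\tau(1-\Delta)^{1/2}}$ is unitary on $H^{1/2}$ we get $\|w_t\|_{H^{1/2}}\le\lambda\int_0^{|t|}\|\mathcal{G}(\ph_s)-\mathcal{G}^{\alpha}(\ph_s^{\alpha})\|_{H^{1/2}}\,ds$. (Equivalently one can differentiate $\langle w_t,(1-\Delta)^{1/2}w_t\rangle$; the kinetic contribution drops by self-adjointness of $(1-\Delta)^{1/2}$.) So everything reduces to controlling the $H^{1/2}$ norm of the nonlinear difference.

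I would split $\mathcal{G}(\ph_t)-\mathcal{G}^{\alpha}(\ph_t^{\alpha})=[\mathcal{G}(\ph_t)-\mathcal{G}^{\alpha}(\ph_t)]+[\mathcal{G}^{\alpha}(\ph_t)-\mathcal{G}^{\alpha}(\ph_t^{\alpha})]$. In the first bracket the potential is $V_{\alpha}^{-}(x):=|x|^{-1}-(|x|+\alpha_N)^{-1}=\alpha_N|x|^{-1}(|x|+\alpha_N)^{-1}$, for which the arithmetic–geometric inequality $|x|+\alpha_N\ge 2|x|^{1/2}\alpha_N^{1/2}$ yields $V_{\alpha}^{-}(x)\le\min\{|x|^{-1},\ \tfrac12\alpha_N^{1/2}|x|^{-3/2},\ \alpha_N|x|^{-2}\}$; splitting this kernel into its $|x|\le 1$ and $|x|>1$ parts and combining Hardy–Littlewood–Sobolev and Young's inequalities with $H^1\hookrightarrow L^6$, $H^{1/2}\hookrightarrow L^3$ and the fractional Leibniz rule (to carry the $H^{1/2}$ derivative past the product with $\ph_t$) gives $\|\mathcal{G}(\ph_t)-\mathcal{G}^{\alpha}(\ph_t)\|_{H^{1/2}}\le CK_1^3\,\alpha_N^{1/2}$. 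For the second bracket, $\mathcal{G}^{\alpha}(\ph_t)-\mathcal{G}^{\alpha}(\ph_t^{\alpha})=(V_{\alpha}*|\ph_t|^2)w_t+(V_{\alpha}*(|\ph_t|^2-|\ph_t^{\alpha}|^2))\ph_t^{\alpha}$ with $V_{\alpha}=(|\cdot|+\alpha_N)^{-1}\le|\cdot|^{-1}$, and the standard local Lipschitz estimate for the semi-relativistic Hartree nonlinearity (as in \cite{L, CO}), applied with the a priori $H^1$ bounds, gives $\|\mathcal{G}^{\alpha}(\ph_t)-\mathcal{G}^{\alpha}(\ph_t^{\alpha})\|_{H^{1/2}}\le C(K_1)\|w_t\|_{H^{1/2}}$ uniformly in $\alpha_N$. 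Hence $\|w_t\|_{H^{1/2}}\le\lambda\int_0^{|t|}\big(C_1\alpha_N^{1/2}+C_2\|w_s\|_{H^{1/2}}\big)\,ds$, and since $w_0=0$ Gronwall gives $\|w_t\|_{H^{1/2}}\le C\alpha_N^{1/2}$ on $[-T,T]$ with $C$ depending only on $\lambda,\kappa,T,\|\ph\|_{H^1}$. The final display then follows from $\|\ph_t-\ph_t^{\alpha}\|_2\le\|\ph_t-\ph_t^{\alpha}\|_{H^{1/2}}$ and the elementary bound $\mathrm{Tr}\,\big||\ph_t^{\alpha}\rangle\langle\ph_t^{\alpha}|-|\ph_t\rangle\langle\ph_t|\big|\le 2\|\ph_t-\ph_t^{\alpha}\|_2$ for unit vectors.

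The main obstacle I expect is controlling the $H^{1/2}$ norm — not merely the $L^2$ norm — of the nonlinear difference: the fractional derivative does not distribute over products, so one must lean on the uniform $H^1$ a priori bounds together with Kato–Ponce/fractional Leibniz estimates, and the $\alpha_N^{1/2}$ gain is essentially forced by the sharp bound $V_{\alpha}^{-}(x)\le\tfrac12\alpha_N^{1/2}|x|^{-3/2}$, whose convolution with $|\ph_t|^2\in L^1\cap L^3$ must land in a space regular enough to survive multiplication by $\ph_t\in H^1$ and then be measured in $H^{1/2}$. A secondary, more routine, point is the circularity between the a priori bound for $\ph_t^{\alpha}$ and the Gronwall estimate, handled by the continuity argument above.
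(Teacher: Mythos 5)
Your proposal is correct and reaches the same conclusion, but the route is genuinely different from the paper's in a way worth flagging. Both arguments run Gronwall on $w_t:=\ph_t-\ph_t^{\alpha}$ in $H^{1/2}$ (your Duhamel integral is equivalent to the paper's $\frac{d}{dt}\|(-\Delta)^{1/4}w_t\|_2^2$ computation). The structural difference is the decomposition of the nonlinear error. The paper expands $\big(\frac{1}{|\cdot|}*|\ph_t|^2\big)\ph_t-\big(\frac{1}{|\cdot|+\alpha_N}*|\ph_t^{\alpha}|^2\big)\ph_t^{\alpha}$ into five pieces, and two of them have the form $\big(\frac{1}{|\cdot|+\alpha_N}*(|\ph_t|^2-|\ph_t^{\alpha}|^2)\big)\,w_t$ or $(\cdot)\,\ph_t$ paired with fractional-derivative estimates producing an $\alpha_N^{-3/2}$ or $\alpha_N^{-1/2}$ blowup; to tame these the paper invokes, as external input, the a priori $L^2$ closeness $\|\ph_t-\ph_t^{\alpha}\|_2\le C\alpha_N$ from Proposition~2.2 of \cite{MS} (quoted here as Lemma~\ref{Hartree solution difference L^2}). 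You instead split only into $[\mathcal{G}-\mathcal{G}^{\alpha}](\ph_t)$ and $\mathcal{G}^{\alpha}(\ph_t)-\mathcal{G}^{\alpha}(\ph_t^{\alpha})$, keeping $\ph_t^{\alpha}$ intact as the companion factor in the second bracket. That one move eliminates the $\alpha_N^{-s}$-singular terms entirely: the first bracket needs only a pointwise kernel bound (indeed $V_\alpha^- \le \alpha_N|x|^{-2}$ gives even $O(\alpha_N)$ there), and the second is a Lipschitz estimate for the regularized nonlinearity that closes with only the $H^{1/2}$ a priori bounds and the generalized Leibniz/Kato/HLS toolkit — no separate $L^2$-distance input is needed. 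Your argument is therefore more self-contained than the paper's. Two small remarks: (i) your continuity/bootstrap argument for the uniform $H^{1/2}$ bound on $\ph_t^{\alpha}$ is redundant, since Proposition~2.1 of \cite{MS} (Lemma~\ref{regularity} here) already supplies uniform-in-$\alpha_N$ $H^s$ propagation for both flows given $\kappa<\infty$; and (ii) for the $\|V_\alpha^-*|\ph_t|^2\|_\infty$ piece you should switch to the $V_\alpha^-\le\alpha_N|x|^{-2}$ estimate (as you anticipate with the near/far split), since the single bound $\alpha_N^{1/2}|x|^{-3/2}$ is not locally in $L^2$ so its convolution does not directly land in $L^\infty$ against an $L^2\cap L^6$ density.
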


As a consequence of Corollary \ref{difference between marginals} and Proposition \ref{Hartree solution difference}, Theorem \ref{main theorem 1} and Theorem \ref{main theorem 2} follow from the next proposition.

\begin{prop} \label{main prop}
Let $\ph \in H^1 (\bR^3)$ with $\| \ph \|_2 = 1$. Let $\gamma_{N, t}^{\alpha, (1)}$ be the one-particle marginal density associated with $e^{-i t H_N^{\alpha}} \varphi^{\otimes N}$ and $\ph_t^{\alpha}$ the solution of the regularized semi-relativistic Hartree equation \eqref{Hartree equation with cutoff} with initial data $\ph_{t=0} = \ph$. Suppose $\alpha_N \leq N^{-4}$ in \eqref{regularized Hamiltonian}. Then, there exists a constant $C$, depending only on $\lambda$, $\| \varphi \|_{H^1}$, $T$, and $\kappa$, such that
\begin{equation}
\textrm{Tr } \left| \gamma_{N, t}^{\alpha, (1)} - | \ph_t^{\alpha} \rangle \langle \ph_t^{\alpha} | \right| \leq CN^{-1}
\end{equation}
for all $|t| \leq T$.
\end{prop}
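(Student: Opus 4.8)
The plan is to recast the dynamics on the bosonic Fock space $\F$ and run the coherent-state method of Rodnianski and Schlein. Regard $\psi_{N,t}^{\alpha}:=e^{-itH_N^{\alpha}}\ph^{\otimes N}$ as a vector in $\F$ supported on the $N$-particle sector, and express the one-particle marginal through creation and annihilation operators,
\begin{equation*}
N\,\gamma_{N,t}^{\alpha,(1)}(x;y)=\langle\psi_{N,t}^{\alpha},\,a_y^{*}a_x\,\psi_{N,t}^{\alpha}\rangle .
\end{equation*}
Using the identity $\ph^{\otimes N}=d_N^{-1}P_N\,W(\sqrt{N}\ph)\Omega$, where $P_N$ is the projection onto the $N$-particle sector, $W(\cdot)$ the Weyl operator \eqref{Weyl operator}, and $d_N=e^{-N/2}N^{N/2}/\sqrt{N!}\sim cN^{-1/4}$, together with the fact that the second-quantized $e^{-itH_N^{\alpha}}$ commutes with $P_N$, one replaces $\ph^{\otimes N}$ by the coherent state $W(\sqrt{N}\ph)\Omega$. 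Conjugating the Fock evolution by the Weyl operators built from the regularized Hartree flow $\ph_t^{\alpha}$ (the solution of \eqref{Hartree equation with cutoff}) produces the fluctuation dynamics $\U_N(t;0)$ of \eqref{definition of U} (up to an irrelevant global phase), and $W(\sqrt{N}\ph_t^{\alpha})^{*}a_xW(\sqrt{N}\ph_t^{\alpha})=a_x+\sqrt{N}\,\ph_t^{\alpha}(x)$; substituting this yields the expansion \eqref{expansionU}, a sum of several terms: a leading term equal to exactly $N\,\overline{\ph_t^{\alpha}(y)}\,\ph_t^{\alpha}(x)$ by normalization, a collection of terms each controllable via the low moments of $\N$ along $\U_N(t;0)$, and the distinguished term $E_t^2$, which carries a combinatorial factor of order $N^{3/4}$ arising from the mismatch between $\ph^{\otimes N}$ and the coherent state.

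For all the terms other than $E_t^2$ I would invoke Theorem~4.1 of \cite{MS}, which bounds the low moments of $\N$ along the fluctuation dynamics, $\langle\Omega,\U_N(t;0)^{*}(\N+1)^{j}\U_N(t;0)\Omega\rangle\leq C(t)$ for $j=1,2$; combined with Cauchy--Schwarz and the operator bounds of Lemma~\ref{J bound}, this shows that, tested against any compact operator $J$ with $\|J\|\leq1$, each such contribution is $O(1)$, hence $O(N^{-1})$ after dividing by $N$. The obstruction is $E_t^2$: without further input its factor of order $N^{3/4}$ multiplies a quantity of order $N^{-1}$, so one only gets $O(N^{-1/4})$. (The route of \cite{CL,CLS}, comparing $\U_N$ with the quadratic dynamics $\U_2$ so that this term is never generated, is unavailable here since it would require controlling the square of the interaction energy by the kinetic energy.) Instead I follow \cite{RS}: pass to the auxiliary evolution $\wU(t;s)$ and estimate the matrix element hidden in $E_t^2$ by Lemma~\ref{general estimate for W state}; the crude Cauchy--Schwarz bound is wasteful because the vector $W(\sqrt{N}\ph_t^{\alpha})^{*}P_NW(\sqrt{N}\ph_t^{\alpha})\,\U_N(t;0)\Omega$ that appears is sharply concentrated, and Lemma~\ref{general estimate for W state} exploits this to recover the $O(N^{-1/2})$ rate of \cite{RS,MS}. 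The new ingredient is Lemma~\ref{odd estimate for W state}: on the subspace of $\F$ spanned by Fock vectors with an odd number of particles, the same matrix-element estimate improves by a further factor $N^{-1/2}$. Since $\ph^{\otimes N}$ lies in a sector of definite parity and the operators entering $E_t^2$ (powers of $\N$, the shifted operators, the generator of $\wU$) change the parity in a controlled way, a parity count shows that the part of $E_t^2$ not already estimated by the $\N$-moment bounds couples only to the odd sector, where the sharper lemma applies; combining the two, $E_t^2=O(N^{-1})$.

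Collecting the estimates, for every compact $J$ with $\|J\|\leq1$ one obtains
\begin{equation*}
\Big|\textrm{Tr}\,J\big(\gamma_{N,t}^{\alpha,(1)}-|\ph_t^{\alpha}\rangle\langle\ph_t^{\alpha}|\big)\Big|\leq CN^{-1}\qquad\text{for }|t|\leq T,
\end{equation*}
with $C$ depending only on $\lambda$, $\|\ph\|_{H^1}$, $T$ and $\kappa$, through the \cite{MS} bounds on the fluctuation dynamics and the control of $\|\ph_t^{\alpha}\|_{H^1}$, which is uniform in $N$ (and in $\alpha_N\leq N^{-4}$) by the well-posedness theory of \cite{L,CO}; taking the supremum over $J$ gives $\textrm{Tr}\,\big|\gamma_{N,t}^{\alpha,(1)}-|\ph_t^{\alpha}\rangle\langle\ph_t^{\alpha}|\big|\leq CN^{-1}$. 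The main obstacle is the $E_t^2$ step: proving Lemma~\ref{odd estimate for W state} and making the parity count precise, i.e.\ verifying that the only part of the coherent-to-factorized conversion that is not already $O(N^{-1})$ via the $\N$-moment bounds lies in the odd sector and is therefore killed by the additional $N^{-1/2}$. The remaining ingredients --- the Weyl-conjugation identities, the explicit generators of $\U_N$ and $\wU$, and the Gr\"onwall arguments behind the $\N$-moment bounds --- are by now standard in this circle of ideas.
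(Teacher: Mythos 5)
Your proposal follows the paper's proof essentially step for step: the conversion $\varphi^{\otimes N}=d_N^{-1}P_NW(\sqrt N\varphi)\Omega$, the Weyl conjugation yielding the fluctuation dynamics $\U$ and the split into $E_t^1$ and $E_t^2$ as in \eqref{expansionU}, the auxiliary dynamics $\wU$ together with Lemma~\ref{general estimate for W state} for the crude $O(N^{-1/2})$ bound, and the parity argument (Lemma~\ref{parity conservation of U tilde}) paired with the improved odd-sector estimate Lemma~\ref{odd estimate for W state} to upgrade $E_t^2$ to $O(N^{-1})$. Aside from a minor slip in identifying which vector is the sharply concentrated one (it is $W^*(\sqrt N\varphi)\tfrac{(a^*(\varphi))^N}{\sqrt{N!}}\Omega$, not the $\varphi_t^\alpha$-conjugated object you wrote), this is the same argument as the paper's.
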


The proof of Proposition \ref{main prop} will be given in Section 4, where we will use the Fock space representation of the problem.

\section{Fock Space Representation} \label{sec:fock}

Let $\F$ be the Fock space of symmetric functions, i.e.
\begin{equation}
\F := \bigoplus_{n\geq 0} (L^2 (\bR^{3n}))_s,
\end{equation}
where we let $L^2 (\bR^{3n})_s = \mathbb{C}$ when $n=0$ and $s$ denotes the subspace of symmetric functions with respect to the permutation of particles $x_1, x_2, \cdots, x_n$. A vector $\psi$ in $\F$ is a sequence $\psi = \{ \psi^{(n)} \}_{n \geq 0}$ of $n$-particle wavefunctions $\psi^{(n)} \in (L^2 (\mathbb{R}^{3n}))_s$. The scalar product between $\psi_1, \psi_2 \in \F$ is defined by
\begin{equation} \label{scalar}
\langle \psi_1, \psi_2 \rangle_{\F} = \sum_{n \geq 0} \langle \psi_1^{(n)}, \psi_2^{(n)} \rangle_{L^2 (\mathbb{R}^{3n})}
\end{equation}
and we will omit the subscript $\F$ from now on. We let
\begin{equation}
\Omega := \{1, 0, 0, \cdots \} \in \F,
\end{equation}
which is called the vacuum. We will also make use of an operator $P_n$, the projection onto the $n$-particle sector of the Fock space, which is defined by $P_n \psi = \{ 0, 0, \cdots, \psi^{(n)}, 0, \cdots \}$ for a vector $\psi$ in $\F$.

On $\F$, the creation operator $a_x^*$ and the annihilation operator $a_x$ for $x \in \mathbb{R}^3$ are defined by
\begin{equation} \begin{split}
(a_x^* \psi)^{(n)}(x_1, \cdots, x_n ) &= \frac{1}{\sqrt{N}} \sum_{j=1}^n \delta(x-x_j) \psi^{(n-1)}(x_1, \cdots, x_{j-1}, x_{j+1}, \cdots, x_n) \\
(a_x \psi)^{(n)}(x_1, \cdots, x_n ) &= \sqrt{n+1} \; \psi^{(n+1)}(x, x_1, \cdots, x_n).
\end{split} \end{equation}
For $f \in L^2(\mathbb{R}^3)$, $a^*(f)$ and $a(f)$ are given by
\begin{equation} \begin{split}
a^*(f) &= \int dx f(x) a_x^* \\
a(f) &= \int dx \overline{f(x)} a_x,
\end{split} \end{equation}
or equivalently,
\begin{equation} \begin{split}
(a_x^* \psi)^{(n)}(x_1, \cdots, x_n ) &= \frac{1}{\sqrt{N}} \sum_{j=1}^n f(x_j) \psi^{(n-1)}(x_1, \cdots, x_{j-1}, x_{j+1}, \cdots, x_n) \\
(a_x \psi)^{(n)}(x_1, \cdots, x_n ) &= \sqrt{n+1} \int dx \; \overline{f(x)} \psi^{(n+1)}(x, x_1, \cdots, x_n).
\end{split} \end{equation}
We also use the self-adjoint operator
\begin{equation}
\phi(f) = a^*(f) + a(f)
\end{equation}
for $f \in L^2(\mathbb{R}^3)$. We have the following lemma that will be used to bound the creation operator and the annihilation operator:

\begin{lem}
For any $f \in L^2 (\mathbb{R}^3)$ and $\psi \in \cD (\N^{1/2})$, we have
\begin{align}
\| a(f) \psi \| &\leq \| f \|_2 \| \N^{1/2} \psi \|, \\
\| a^*(f) \psi \| &\leq \| f \|_2 \| (\N+1)^{1/2} \psi \|, \\
\| a(f) \psi \| &\leq 2\| f \|_2 \| (\N+1)^{1/2} \psi \|. 
\end{align}
\end{lem}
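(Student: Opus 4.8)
The three bounds are standard and follow from the Cauchy--Schwarz inequality together with the canonical commutation relations; I would argue as follows. Write $a(f) = \int dx\, \overline{f(x)}\, a_x$ and $a^*(f) = \int dx\, f(x)\, a_x^*$, recall that $\N = \int dx\, a_x^* a_x$ is the number operator and that $[a_x, a_y^*] = \delta(x-y)$, so that in particular $[a(f), a^*(f)] = \|f\|_2^2$. Note that the hypothesis $\psi \in \cD(\N^{1/2})$ says precisely that $\int dx\, \| a_x\psi \|^2 = \langle \psi, \N\psi\rangle < \infty$; by first treating vectors $\psi$ with only finitely many nonzero components, on which every quantity below is manifestly finite, and then passing to the limit, it suffices to prove the estimates on such a dense set.

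For the first inequality I would estimate the Fock-space-valued integral directly, using Minkowski's inequality and then Cauchy--Schwarz in the variable $x$:
\[
\| a(f)\psi \| = \left\| \int dx\, \overline{f(x)}\, a_x\psi \right\| \leq \int dx\, |f(x)|\, \| a_x\psi \| \leq \|f\|_2 \left( \int dx\, \| a_x\psi \|^2 \right)^{1/2} = \|f\|_2\, \| \N^{1/2}\psi \|.
\]
(Equivalently, one can avoid the operator-valued distributions and work sector by sector: from $(a(f)\psi)^{(n)}(x_1,\dots,x_n) = \sqrt{n+1}\int dx\, \overline{f(x)}\, \psi^{(n+1)}(x,x_1,\dots,x_n)$ and Cauchy--Schwarz in $x$ one obtains $\| (a(f)\psi)^{(n)} \|_2^2 \leq (n+1)\|f\|_2^2\, \| \psi^{(n+1)} \|_2^2$, and summing over $n$ gives $\| a(f)\psi \|^2 \leq \|f\|_2^2 \sum_{m\geq 1} m\, \| \psi^{(m)} \|_2^2 = \|f\|_2^2\, \| \N^{1/2}\psi \|^2$.)

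For the second inequality I would use the commutation relation to write $a(f)a^*(f) = a^*(f)a(f) + \|f\|_2^2$, whence, together with the first bound,
\[
\| a^*(f)\psi \|^2 = \langle \psi, a(f)a^*(f)\psi \rangle = \| a(f)\psi \|^2 + \|f\|_2^2 \|\psi\|^2 \leq \|f\|_2^2 \left( \| \N^{1/2}\psi \|^2 + \|\psi\|^2 \right) = \|f\|_2^2\, \| (\N+1)^{1/2}\psi \|^2.
\]
The third inequality is then immediate: from the first bound and the trivial operator inequality $\N \leq 4(\N+1)$ one gets $\| a(f)\psi \| \leq \|f\|_2 \| \N^{1/2}\psi \| \leq 2\|f\|_2 \| (\N+1)^{1/2}\psi \|$ (and, combined with the second bound, also $\| \phi(f)\psi \| \leq 2\|f\|_2 \| (\N+1)^{1/2}\psi \|$). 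There is no genuine obstacle here; the only steps worth a word of justification are the use of Minkowski's inequality for the vector-valued integral and the reduction to finite-particle vectors, both entirely routine.
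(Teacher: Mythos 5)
Your proof is correct. The paper itself does not give an argument here---it simply cites Lemma~2.1 of Rodnianski--Schlein~\cite{RS}---and the argument you give (Cauchy--Schwarz in the $x$-variable for the first bound, the canonical commutation relation $[a(f),a^*(f)]=\|f\|_2^2$ for the second, and $\N\leq \N+1$ for the third) is exactly the standard one that appears in that reference. One tiny remark: to get the third inequality you do not need $\N\leq 4(\N+1)$; the trivial $\N\leq \N+1$ already gives $\|a(f)\psi\|\leq \|f\|_2\|(\N+1)^{1/2}\psi\|$, so the stated constant $2$ is simply a harmless overestimate (it is there so that the same constant also covers $\phi(f)=a(f)+a^*(f)$, as you note parenthetically).
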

\begin{proof}
See Lemma 2.1 of \cite{RS}.
\end{proof}

For an operator $J$ acting on $L^2 (\mathbb{R}^3)$, we define the second quantization of $J$, $d\Gamma (J)$, as the operator on $\F$ whose action on the $n$-particle sector is given by
\begin{equation}
(d\Gamma (J) \psi)^{(n)} = \sum_{j=1}^n J_j \psi^{(n)},
\end{equation}
where $J_j = 1 \otimes 1 \otimes \cdots 1 \otimes J \otimes 1 \otimes \cdots 1$ is the operator acting only on the $j$-th particle. If $J$ has a kernel $J(x;y)$, then $d\Gamma (J)$ can be written as
\begin{equation}
d\Gamma (J) = \int dx dy \; J(x;y) a_x^* a_y.
\end{equation}
The number operator $\N$ is defined by
\begin{equation}
\N := d\Gamma (1) = \int dx \; a_x^* a_x
\end{equation}
and it also satisfies
\begin{equation}
(\N \psi)^{(n)} = n \psi^{(n)}.
\end{equation}
We will use the following lemma to estimate $d\Gamma (J)$:

\begin{lem} \label{J bound}
For any bounded one-particle operator $J$ on $L^2 (\mathbb{R}^3)$ and for every $\psi \in \cD (\psi)$, we have
\begin{equation}
\| d \Gamma (J) \psi \| \leq \| J \| \| \N \psi \|.
\end{equation}
Here, $\| J \|$ denotes the operator norm of $J$.
\end{lem}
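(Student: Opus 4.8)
The statement to prove is Lemma~\ref{J bound}: for a bounded one-particle operator $J$ on $L^2(\bR^3)$ and $\psi \in \cD(\N)$, one has $\|d\Gamma(J)\psi\| \leq \|J\| \|\N\psi\|$.

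\medskip

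The plan is to work sector by sector on the Fock space and exploit the orthogonal decomposition $\F = \bigoplus_n (L^2(\bR^{3n}))_s$. Since $d\Gamma(J)$ preserves the $n$-particle sectors, $\|d\Gamma(J)\psi\|^2 = \sum_{n\geq 0} \|(d\Gamma(J)\psi)^{(n)}\|^2_{L^2(\bR^{3n})} = \sum_{n\geq 0} \big\| \sum_{j=1}^n J_j \psi^{(n)} \big\|^2$, so it suffices to bound $\big\| \sum_{j=1}^n J_j \psi^{(n)} \big\| \leq n \|J\| \|\psi^{(n)}\|$ for each $n$; summing the squares of these and using $(\N\psi)^{(n)} = n\psi^{(n)}$ then gives the claim. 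First I would establish the per-sector bound: on $(L^2(\bR^{3n}))_s$ each operator $J_j$ acts as $\mathbf{1} \otimes \cdots \otimes J \otimes \cdots \otimes \mathbf{1}$, which has operator norm exactly $\|J\|$ (tensoring with identities does not change the operator norm), so $\|J_j \psi^{(n)}\| \leq \|J\|\|\psi^{(n)}\|$, and the triangle inequality over the $n$ terms yields $\big\|\sum_{j=1}^n J_j\psi^{(n)}\big\| \leq n\|J\|\|\psi^{(n)}\|$.

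\medskip

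Then I would assemble the pieces:
\begin{equation}
\|d\Gamma(J)\psi\|^2 = \sum_{n\geq 0} \Big\| \sum_{j=1}^n J_j \psi^{(n)} \Big\|^2 \leq \sum_{n\geq 0} n^2 \|J\|^2 \|\psi^{(n)}\|^2 = \|J\|^2 \sum_{n\geq 0} \|n\psi^{(n)}\|^2 = \|J\|^2 \|\N\psi\|^2,
\end{equation}
where the last equality is just $(\N\psi)^{(n)} = n\psi^{(n)}$ together with the definition of the Fock norm. Taking square roots finishes the proof. (One should note at the outset that the hypothesis $\psi \in \cD(\N)$ — the excerpt's ``$\psi \in \cD(\psi)$'' is evidently a typo for $\cD(\N)$ — guarantees that $\sum_n n^2\|\psi^{(n)}\|^2 < \infty$, so the bound is meaningful and $d\Gamma(J)\psi \in \F$.)

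\medskip

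This lemma is elementary and I do not anticipate a genuine obstacle; the only point requiring a word of care is the claim $\|J_j\| = \|J\|$ on the symmetric subspace. One gets $\|J_j\| \leq \|J\|$ immediately from the tensor-product structure on the full space $L^2(\bR^{3n})$, and since the symmetric subspace is invariant under $d\Gamma(J)$ this restricted norm is at most the full-space norm, which is all that is needed for the inequality. (The reverse inequality, not required here, would follow by testing on symmetrized product states.) So the entire argument is: reduce to sectors, apply the triangle inequality with the trivial per-factor norm bound, and repackage the resulting $n^2$ weight as $\|\N\psi\|^2$.
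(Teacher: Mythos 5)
Your proof is correct. The paper simply cites Lemma~3.1 of \cite{CLS} for this estimate, and your direct sector-by-sector argument --- using that $d\Gamma(J)$ preserves each $n$-particle sector, bounding $\bigl\|\sum_{j=1}^n J_j\psi^{(n)}\bigr\|\le n\|J\|\|\psi^{(n)}\|$ via the triangle inequality and $\|J_j\|\le\|J\|$, and then summing the squares to reproduce $\|\N\psi\|^2$ --- is exactly the standard proof of this well-known bound.
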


\begin{proof}
See Lemma 3.1 of \cite{CLS}.
\end{proof}

We define the Hamiltonian $\h_N$ on $\F$ by
\begin{equation}
\h_N := \int dx \; a_x^* (1-\Delta_x)^{1/2} a_x - \frac{\lambda}{2N} \iint dx dy \frac{1}{|x-y|} a_x^* a_y^* a_y a_x.
\end{equation}
Note that for any function $\psi^{(N)} \in L^2 (\bR^{3N})_s$, $\h_N \psi^{(N)} = H_N \psi^{(N)}$. Similarly, we define the regularized Hamiltonian $\h_N^{\alpha}$ on $\F$ by
\begin{equation}
\ha_N := \int dx \; a_x^* (1-\Delta_x)^{1/2} a_x - \frac{\lambda}{2N} \iint dx dy \frac{1}{|x-y|+\alpha_N} a_x^* a_y^* a_y a_x,
\end{equation}
which also satisfies $\ha_N \psi^{(N)} = H_N^{\alpha} \psi^{(N)}$ for any function $\psi^{(N)} \in L^2 (\bR^{3N})_s$.

For $f \in L^2(\mathbb{R}^3)$, the Weyl operator $W(f)$ is defined by
\begin{equation} \label{Weyl operator}
W(f) := \exp (a^*(f) - a(f)),
\end{equation}
and it satisfies
\begin{equation}
W(f) = e^{-\|f\|_2^2 /2} \exp (a^*(f)) \exp (-a(f)).
\end{equation}
The coherent state with a one-particle wave function $f$ is $W(f) \Omega$, which satisfies
\begin{equation} \label{coherent state}
W(f) \Omega = e^{-\|f\|_2^2 /2} \exp (a^*(f)) \Omega = e^{-\|f\|_2^2 /2} \sum_{n \geq 0} \frac{(a^* (f))^n}{\sqrt{n!}} \Omega.
\end{equation}

Let $\Gamma_{N, t}^{\alpha, (1)}(x;y)$ be the kernel of the one-particle marginal density associated with the time evolution of the coherent state $W(\sqrt{N}\ph) \Omega$ with respect to the regularized Haimiltonian $\ha$. By definition,
\begin{equation}
\Gamma_{N, t}^{\alpha, (1)} (x;y) = \frac{1}{N} \langle e^{-i \ha_N t}W(\sqrt{N} \ph) \Omega, a_y^* a_x e^{-i \ha_N t} W(\sqrt{N} \ph)\Omega\rangle,
\end{equation}
We expect that the limit of the kernel of one particle marginal density is $\overline {\pha_t} (y) \pha_t (x)$, thus we expand $\Gamma_{N, t}^{\alpha, (1)}(x;y)$ in terms of $(a_x - \sqrt{N} \pha_t (x))$ and $(a_y^* - \sqrt{N} \overline{\pha_t} (x))$. Then, we get
\begin{equation} \label{one particle marginal 1} \begin{split} 
&\Gamma_{N, t}^{\alpha, (1)}(x;y) = \pha_t (x) \overline{\pha_t} (y) \\
& +\frac{1}{N} \langle \Omega, W^*(\sqrt{N} \ph) e^{i \ha_N t} (a_y^* - \sqrt{N} \overline{\pha_t} (y)) (a_x - \sqrt{N} \pha_t (x)) \\
& \qquad \qquad \times e^{-i \ha_N t} W(\sqrt{N} \ph)\Omega\rangle \\
& + \frac{\pha_t (x)}{\sqrt{N}} \langle \Omega, W^*(\sqrt{N} \ph) e^{i \ha_N t} (a_y^* - \sqrt{N} \overline{\pha_t} (y) ) e^{-i \ha_N t} W(\sqrt{N} \ph) \Omega\rangle \\
& + \frac{\overline{\pha_t} (y)}{\sqrt{N}} \langle \Omega, W^*(\sqrt{N} \ph) e^{i \ha_N t} (a_x - \sqrt{N} \pha_t (x) ) e^{-i \ha_N t} W(\sqrt{N} \ph) \Omega\rangle.
\end{split} \end{equation}

It is well known that the Weyl operator satisfies for any $f \in L^2 (\mathbb{R}^3)$ that
\begin{equation}
W^*(f) a_x W(f) = a_x + f(x).
\end{equation}
(See Lemma 2.2 of \cite{RS}.) This shows that
\begin{equation}
a_x - \sqrt{N} \pha_t (x) = W(\sqrt{N} \pha_t) a_x W^*(\sqrt{N} \pha_t),
\end{equation}
which allows us to simplify the terms in the right hand side of \eqref{one particle marginal 1}, for example,
\begin{equation} \label{introducing U} \begin{split}
& W^*(\sqrt{N} \ph) e^{i \ha_N t} (a_x - \sqrt{N} \pha_t (x) ) e^{-i \ha_N t} W(\sqrt{N} \ph) \\
&= W^*(\sqrt{N} \ph) e^{i \ha_N t} W(\sqrt{N} \pha_t) a_x W^*(\sqrt{N} \pha_t) e^{-i \ha_N t} W(\sqrt{N} \ph).
\end{split} \end{equation}
To further understand the operator $W^*(\sqrt{N} \pha_t) e^{-i \ha_N t} W(\sqrt{N} \ph)$, we consider the time derivative of it. As in \cite{H, GV1}, it turns out that
\begin{equation} \label{derivative decomposition} \begin{split}
& i \partial_t W^*(\sqrt{N} \pha_t) e^{-i \ha_N t} W(\sqrt{N} \ph) \\
&=: \left( \sum_{k=0}^{4} \cL_k (t) \right) W^*(\sqrt{N} \pha_t) e^{-i \ha_N t} W(\sqrt{N} \ph),
\end{split} \end{equation}
where the operators $\cL_k (t)$ contains $k$ creation/annihilation operators in it. More precisely, we have
\begin{align}
\cL_0 (t) &= \frac{N}{2} \int_0^t d\tau \int dx (\frac{\lambda}{| \cdot | + \alpha_N} *| \pha_{\tau}|^2 ) (x) |\pha_{\tau} (x)|^2, \\
\cL_1 (t) &= 0, \\
\cL_2 (t) &= \int dx \; a_x^* (1-\Delta_x)^{1/2} a_x + \lambda \int dx (\frac{1}{| \cdot | + \alpha_N }*|\pha_t |^2 )(x) a_x^* a_x \nonumber \\ 
& \quad + \lambda \iint dx dy \frac{1}{|x-y| + \alpha_N} \overline{\pha_t} (x) \pha_t (y) a_y^* a_x  \label{L2} \\
& \quad + \frac{\lambda}{2} \iint dx dy \frac{1}{|x-y| + \alpha_N} ( \pha_t (x) \pha_t (y) a_x^* a_y^* + \overline{\pha_t} (x) \overline{\pha_t} (y) a_x a_y ), \nonumber \\
\cL_3 (t) &= \frac{\lambda}{\sqrt{N}} \iint dx dy \frac{1}{|x-y| + \alpha_N} a_x^* \left( \pha_t (y) a_y^* + \overline{\pha_t} (y) a_y \right) a_x, \label{L3} \\
\cL_4 &= \frac{\lambda}{2N} \iint dx dy \frac{1}{|x-y| + \alpha_N} a_x^* a_y^* a_x a_y. \label{L4}
\end{align}
Note that $\cL_0 (t)$ is not an operator but a complex-valued function on $t$, which we call the phase factor. Although this term contains the factor $N$, we may ignore this term by using a function $e^{-i \cL_0 (t)}$ whose derivative can offset the term $\cL_0 (t)$ in the right hand side of \eqref{derivative decomposition}.

Generalizing the idea explained above, we define the unitary evolution
\begin{equation} \label{definition of U}
\U (t;s) := e^{-i \omega(t;s)} W^* (\sqrt{N} \pha_t) e^{-i (t-s) \ha_N} W(\sqrt{N} \pha_s)
\end{equation}
with the phase factor
\begin{equation}
\omega(t;s) := \frac{N}{2} \int_s^t d\tau \int dx (\frac{\lambda}{| \cdot | + \alpha_N} *| \pha_{\tau}|^2 ) (x) |\pha_{\tau} (x)|^2. 
\end{equation}
We can find from the above construction that $\U(t;s)$ is a unitary operator satisfying
\begin{equation} \label{evolution of U}
i \partial_t \U(t;s) = (\cL_2 (t) + \cL_3 (t) + \cL_4 ) \U(t;s) \quad \text{and} \quad \U(s;s) = I.
\end{equation}
Furthermore, since $e^{-i \omega(t;s)}$ commutes with the operators $a_x$ and $a_x^*$, we find from \eqref{introducing U} that
\begin{equation}
W^* (\sqrt{N} \ph ) e^{i \ha_N t} (a_x - \sqrt{N} \pha_t (x) ) e^{-i \ha_N t} W(\sqrt{N} \ph) = \U^*(t;0) a_x \U(t;0).
\end{equation}

Let
\begin{equation}
\K := \int dx \; a_x^* (1-\Delta_x)^{1/2} a_x.
\end{equation}
We consider a modified evolution $\wU (t;s)$, which is a unitary operator satisfying
\begin{equation} \label{evolution of U_tilde}
i \partial_t \wU (t;s) = (\cL_2 (t) + \cL_4) \wU (t;s) \quad \text{ and } \quad \wU (s;s) = I
\end{equation}
We remark that $\wU (t;s)$ is bounded in $\cQ (\K + \N^2)$, the form domain of the operator $(\K + \N^2)$, and is strongly differentiable from $\cQ (\K + \N^2)$ to $\cQ^* (\K + \N^2)$. See section 8 for more detail.

For simplicity, we will use notations
\begin{equation}
\U(t) := \U(t;0), \;\;\; \wU(t) := \wU(t;0).
\end{equation}

When written through the kernel form, the one-particle marginal density associated with the time evolution of the factorized states with respect to the regularized Hamiltonian, $\gamma^{\alpha, (1)}_{N,t}(x;y)$, satisfies
\begin{equation} \label{gamma definition}
\gamma^{\alpha, (1)}_{N,t} (x;y) = \frac{1}{N} \left \langle \frac{(a^*(\ph))^N}{\sqrt{N!}} \Omega, e^{i t \ha_N } a^*_y a_x e^{-i t \ha_N} \frac{(a^* (\ph))^N}{\sqrt{N!}} \Omega \right \rangle.
\end{equation}

\section{Proof of Main Results} \label{sec:main}

In this section, we prove Proposition \ref{main prop}, which implies Theorem \ref{main theorem 1} and Theorem \ref{main theorem 2}. We will use the following lemmas, which will be proved in Section \ref{sec:comp}.

\begin{lem} \label{E_1 estimate}
For a Hermitian Operator $J$ on $L^2 (\bR^3)$, let
\begin{equation}
E_t^1 (J) := \frac{d_N}{N} \left \langle \frac{(a^*(\varphi))^N}{\sqrt{N!}} \Omega, W(\sqrt{N} \varphi) \U^*(t) d\Gamma (J) \U(t) \Omega \right\rangle.
\end{equation}
Then, there exist constants $C$ and $K$, depending only on $\lambda$ and $\displaystyle \sup_{|s|\leq t} \| \varphi_s \|_{H^1}$, such that
\begin{equation}
| E_t^1 (J) | \leq \frac{C \| J \| e^{Kt}}{N}.
\end{equation}
\end{lem}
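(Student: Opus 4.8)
The plan is to estimate $E_t^1(J)$ by the mechanism that \cite{RS} uses for error terms of ``$W$-state'' type. Writing $\psi := \U^*(t)\, d\Gamma(J)\, \U(t)\Omega$ and using $\frac{(a^*(\varphi))^N}{\sqrt{N!}}\Omega = \varphi^{\otimes N}$, we have $E_t^1(J) = \frac{d_N}{N}\,\langle \varphi^{\otimes N}, W(\sqrt N\varphi)\,\psi\rangle$. A naive estimate does not suffice here: since $W(\sqrt N\varphi)$ and $\U^*(t)$ are unitary, $|\langle \varphi^{\otimes N}, W(\sqrt N\varphi)\psi\rangle| \le \|\psi\| = \|d\Gamma(J)\,\U(t)\Omega\| \le \|J\|\,\|\N\,\U(t)\Omega\|$ by Lemma \ref{J bound}, and then a number-operator moment bound $\|\N\,\U(t)\Omega\| \le C e^{Kt}$ for the fluctuation dynamics only gives $|E_t^1(J)| \le C\,d_N N^{-1}\|J\|e^{Kt}$, which is of order $N^{-3/4}$ because $d_N$ is of order $N^{1/4}$. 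The missing power $N^{1/4}$ must be recovered from the fact that the coherent state $W(\sqrt N\varphi)\Omega$ overlaps the $N$-particle sector only to order $N^{-1/4}$, together with the fact that $\psi$ is, uniformly in $N$, concentrated in sectors of bounded particle number.

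Concretely, I would first invoke Lemma \ref{general estimate for W state} --- the estimate which is the source of the $N^{1/4}$ gain in \cite{RS} --- applied to the vector $\psi$. It bounds $d_N\,|\langle \varphi^{\otimes N}, W(\sqrt N\varphi)\psi\rangle|$ by $C\bigl(\|\psi\| + R_N\bigr)$, where $R_N$ collects subleading contributions carrying powers of $(\N+1)$ and nonpositive powers of $N$; hence $|E_t^1(J)| \le \frac{C}{N}\bigl(\|\psi\| + R_N\bigr)$. For the leading term, $\|\psi\| \le \|J\|\,\|\N\,\U(t)\Omega\| \le C\|J\|e^{Kt}$ as above. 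For the terms in $R_N$ I would argue the same way, with two further ingredients: propagating the number-operator weights through $\U^*(t)$ via an estimate of the type $\|(\N+1)^{m}\,\U^*(t;0)\chi\| \le C e^{Kt}\,\|(\N+1)^{m}\chi\|$, and then commuting those weights past $d\Gamma(J)$ --- legitimate since $d\Gamma(J)$ preserves particle number, so $[\,d\Gamma(J),\N\,]=0$ --- after which Lemma \ref{J bound} and higher moment bounds $\langle \U(t)\Omega,(\N+1)^{k}\U(t)\Omega\rangle \le C e^{Kt}$ finish the estimate. Collecting the pieces yields $|E_t^1(J)| \le C\|J\|e^{Kt}/N$ with $C$ and $K$ depending only on $\lambda$ and $\sup_{|s|\le t}\|\varphi_s\|_{H^1}$.

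The genuinely substantive inputs are Lemma \ref{general estimate for W state} --- the quantitative near-orthogonality of $W(\sqrt N\varphi)\Omega$ to the $N$-particle sector, stated so as to remain useful against a number-concentrated test vector --- and the number-operator moment and propagation bounds for $\U(t;s)$, which is where I expect the main difficulty to lie. In the semi-relativistic setting those moment bounds are delicate: the cubic generator $\cL_3$ in \eqref{L3} carries the singular kernel $|x-y|^{-1}$ and cannot be absorbed into the kinetic term $\K$ as cheaply as in the non-relativistic Coulomb case, so one establishes them by comparing $\U$ with the reduced evolution $\wU$ (whose generator is $\cL_2 + \cL_4$) and exploiting operator inequalities such as $|x-y|^{-2} \le C(1-\Delta_i)^{1/2}(1-\Delta_j)^{1/2}$, in the spirit of the proof of Lemma \ref{regularization lemma 1}; this is also why the hypothesis is phrased in terms of $\|\varphi_s\|_{H^1}$. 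Granting these estimates from the later sections, the proof of Lemma \ref{E_1 estimate} reduces to the short chain above: Lemma \ref{general estimate for W state}, Lemma \ref{J bound}, the commutation $[\,d\Gamma(J),\N\,]=0$, and the cited moment bounds.
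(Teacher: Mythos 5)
Your proposal reproduces the paper's proof: identify $\frac{(a^*(\varphi))^N}{\sqrt{N!}}\Omega$ via $W^*(\sqrt N\varphi)$, split off an $(\N+1)^{-1/2}$ weight by Cauchy--Schwarz, invoke Lemma~\ref{general estimate for W state} for the factor $C/d_N$, then push $(\N+1)^{1/2}$ (and higher powers) through $\U^*(t)$, $d\Gamma(J)$, and $\U(t)$ acting on $\Omega$ using Lemma~\ref{U conserves N}, the commutation $[d\Gamma(J),\N]=0$, and Lemma~\ref{J bound}. The only cosmetic difference is that you describe the output of Lemma~\ref{general estimate for W state} as ``$C(\|\psi\|+R_N)$'' with an auxiliary remainder, whereas the paper applies the lemma once via the weight splitting $(\N+1)^{\mp 1/2}$ and obtains the single term $C\|(\N+1)^{1/2}\psi\|/d_N$ directly; also note that Lemma~\ref{U conserves N} degrades the power ($j\mapsto 2j+2$) rather than preserving it as in your ``$m\mapsto m$'' phrasing, which is harmless here since the degraded powers still act on $\Omega$.
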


\begin{lem} \label{E_2 estimate}
For a Hermitian Operator $J$ on $L^2 (\bR^3)$, let
\begin{equation} \label{E_2}
E_t^2 (J) := \frac{d_N}{\sqrt{N}} \left \langle \frac{(a^*(\varphi))^N}{\sqrt{N!}} \Omega, W(\sqrt{N} \varphi) \U^*(t) \phi (J \varphi_t) \U(t) \Omega \right \rangle.
\end{equation}
Then, there exist constants $C$ and $K$, depending only on $\lambda$ and $\displaystyle \sup_{|s|\leq t} \| \varphi_s \|_{H^1}$, such that
\begin{equation}
| E_t^2 (J) | \leq \frac{C \| J \| e^{Kt}}{N}.
\end{equation}
\end{lem}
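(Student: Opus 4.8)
The plan is to follow the Rodnianski–Schlein strategy and rewrite $E_t^2(J)$ so that the factorized state $(a^*(\varphi))^N\sqrt{N!}^{-1}\Omega$ is compared with the coherent state $W(\sqrt N\varphi)\Omega$. The first step is to note the identity $W^*(\sqrt N\varphi)\,\frac{(a^*(\varphi))^N}{\sqrt{N!}}\Omega = d_N\sum_{k}c_k^{(N)}P_{?}\Omega$ that expresses the factorized state, after removing the coherent part, as a specific (normalized) vector supported on finitely many sectors; more precisely one uses the standard combinatorial fact (as in \cite{RS}) that $W^*(\sqrt N\varphi)(a^*(\varphi))^N\Omega/\sqrt{N!}$ has an explicit expansion in $\N$, and $d_N$ is chosen so that $d_N\,\langle\Omega,\U^*(t)\phi(J\varphi_t)\U(t)\,W^*(\sqrt N\varphi)(a^*(\varphi))^N\Omega/\sqrt{N!}\rangle$ is exactly $E_t^2(J)/\sqrt N\cdot N$. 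Writing $\phi(J\varphi_t)=a^*(J\varphi_t)+a(J\varphi_t)$ and using $\|a(f)\psi\|,\|a^*(f)\psi\|\le 2\|f\|_2\|(\N+1)^{1/2}\psi\|$ together with $\|J\varphi_t\|_2\le\|J\|\|\varphi_t\|_2=\|J\|$, I would reduce the estimate to controlling
\[
\frac{d_N}{\sqrt N}\,\big\|(\N+1)^{1/2}\,\U(t)\,\Omega\big\|\cdot\big\|(\N+1)^{1/2}\,W^*(\sqrt N\varphi)\tfrac{(a^*(\varphi))^N}{\sqrt{N!}}\Omega\big\|,
\]
where the second factor is $O(N^{1/2})$ and, by the growth-of-$\N$ bound for $\U(t)$ proved in \cite{MS} (Theorem 4.1 there), $\|(\N+1)^{1/2}\U(t)\Omega\|\le C e^{Kt}$. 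Naively this only yields $|E_t^2(J)|\le C\|J\|e^{Kt} N^{0}\cdot N^{-1}\cdot N^{1/2}\cdot N^{1/2} = O(N^{-1/2})$ — the loss the introduction warns about.

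The key second step, therefore, is to gain the extra factor $N^{-1/2}$ by a parity argument, exactly as flagged in the introduction. I would insert the approximate evolution $\wU(t)$: write $\U(t)\Omega = \wU(t)\Omega + (\U(t)-\wU(t))\Omega$, where the difference is estimated via Duhamel in terms of $\cL_3(\tau)$, which carries the prefactor $N^{-1/2}$ (see \eqref{L3}) and raises/lowers the particle number by one. The point is that $\cL_2(t)$ and $\cL_4$ preserve the parity of the number of particles, so $\wU(t)\Omega$ lives entirely in the even sectors of $\F$; hence $\phi(J\varphi_t)\wU(t)\Omega$ lives in the odd sectors. When this is paired against $W^*(\sqrt N\varphi)(a^*(\varphi))^N\Omega/\sqrt{N!}$, one applies the \emph{improved} bound for odd-particle Fock vectors (Lemma \ref{odd estimate for W state} as referenced in the introduction) rather than Lemma \ref{general estimate for W state}, which replaces the $O(N^{1/2})$ factor by $O(1)$; this is precisely the gain of $N^{1/2}$. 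The remaining piece, $\phi(J\varphi_t)(\U(t)-\wU(t))\Omega$, already carries an explicit $N^{-1/2}$ from $\cL_3$ and can be estimated by Lemma \ref{general estimate for W state} combined with growth bounds for $(\N+1)^{1/2}\U$ and $(\N+1)^{1/2}\wU$; so it contributes $O(N^{-1})$ as well.

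Putting the pieces together: $|E_t^2(J)|\le (d_N N^{-1/2})\big[\,|\langle(a^*\varphi)^N/\sqrt{N!}\,\Omega,W(\sqrt N\varphi)\U^*(t)\phi(J\varphi_t)\wU(t)\Omega\rangle| + |\langle\cdots(\U(t)-\wU(t))\Omega\rangle|\,\big]$, and each bracketed term is bounded by $C\|J\|e^{Kt}\sqrt N$, giving $|E_t^2(J)|\le C\|J\|e^{Kt}/N$ after recalling $d_N\le C\sqrt N$ (the normalization of the factorized state relative to the coherent state). The constants $C,K$ depend only on $\lambda$ and $\sup_{|s|\le t}\|\varphi_s\|_{H^1}$, through the a priori bounds on $\varphi_t$ entering $\cL_2,\cL_3,\cL_4$ and the $\N$-growth estimates for $\U,\wU$. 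The main obstacle is the bookkeeping in the parity/Duhamel step — one must verify that every term generated when commuting $\phi(J\varphi_t)$ and the Weyl operators past $\U^*(t)$ either lands in a sector of the correct parity so that Lemma \ref{odd estimate for W state} applies, or else comes with a compensating $N^{-1/2}$ from an explicit $\cL_3$ insertion; controlling the error of replacing $\U$ by $\wU$ uniformly in the sectors that actually contribute (which requires the form-domain bounds on $\wU$ in $\cQ(\K+\N^2)$ mentioned after \eqref{evolution of U_tilde}) is the delicate part.
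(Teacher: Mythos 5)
Your overall strategy is the right one --- gain the extra $N^{-1/2}$ by parity, using the modified evolution $\wU$ and the improved odd-sector bound, and control the remainder by a Duhamel-type difference carrying $\cL_3$ --- but your decomposition is incomplete in a way that breaks the parity argument. You split only the \emph{inner} factor, $\U(t)\Omega=\wU(t)\Omega+(\U(t)-\wU(t))\Omega$, leaving $\U^*(t)$ untouched on the outside. The resulting ``good'' term is then $\langle W^*(\sqrt N\varphi)(a^*(\varphi))^N/\sqrt{N!}\,\Omega,\ \U^*(t)\phi(J\varphi_t)\wU(t)\Omega\rangle$. While $\phi(J\varphi_t)\wU(t)\Omega$ does live in odd sectors (since $\cL_2+\cL_4$ preserves parity and $\phi$ flips it), the outer $\U^*(t)$ still contains $\cL_3$, which changes particle number by one and hence destroys parity. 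So $\U^*(t)\phi(J\varphi_t)\wU(t)\Omega$ has no definite parity and Lemma~\ref{odd estimate for W state} cannot be applied to that pairing. You partly sense this in your last paragraph, but calling it ``bookkeeping'' underestimates the issue: as written, the first bracketed term has no $N$-gain at all.

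The paper avoids this by replacing the whole conjugated operator at once, setting $\R(J\varphi_t):=\U^*(t)\phi(J\varphi_t)\U(t)-\wU^*(t)\phi(J\varphi_t)\wU(t)$. Then $\wU^*(t)\phi(J\varphi_t)\wU(t)\Omega$ is genuinely odd (Lemma~\ref{parity conservation of U tilde}), and the remainder $\R$ is estimated in Lemma~\ref{estimate for the difference between U and U_tilde} by two Duhamel pieces, each carrying an explicit $\cL_3$ and hence the $N^{-1/2}$. If you wanted to keep your one-sided splitting, you would have to perform a second split of $\U^*(t)$ inside the first term, which amounts to reconstructing the paper's $\R$ --- so it is cleaner to do the two-sided replacement from the start. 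Two further issues worth flagging: first, to sum the odd-sector bound $\|P_{2k+1}W^*\psi\|\le 2(k+1)^{3/2}/(d_N\sqrt N)$ over $k$ one must weight with a suitably large negative power of $(\N+1)$; the paper uses $(\N+1)^{-5/2}$, and with the bare $(\N+1)^{-1/2}$ you suggest the series does not converge. Second, your final arithmetic and your claim $d_N\le C\sqrt N$ are off; the normalization constant is $d_N\simeq N^{1/4}$ (Stirling), and in any case the $d_N$'s cancel against the $1/d_N$'s coming from the Weyl-state estimates, so the clean accounting is the $N^{-1/2}\cdot N^{-1/2}=N^{-1}$ product from the odd-sector bound (or from $\R$), not anything involving $\sqrt N$ directly.
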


We are now ready to prove Proposition \ref{main prop}.

\begin{proof}[Proof of Proposition \ref{main prop}]
We have seen in Section \ref{sec:fock} that
\begin{equation} \label{Weyl operator identity}
W^* (\sqrt{N} \ph ) e^{i \ha_N t} (a_x - \sqrt{N} \pha_t (x) ) e^{-i \ha_N t} W(\sqrt{N} \ph) = \U^*(t) a_x \U(t).
\end{equation}
By definition, we have that
\begin{equation} \label{factorized data}
P_N W(\sqrt{N} \ph) \Omega = e^{-N/2} \frac{\big(a^* (\sqrt{N} \ph) \big)^N}{N!} \Omega = \frac{1}{d_N} \frac{\big(a^* (\ph) \big)^N}{\sqrt{N!}} \Omega,
\end{equation}
where $P_N$ is the projection onto the $N$-particle sector of the Fock space. Here, $d_N$ denotes the constant
\begin{equation}
d_N := \frac{\sqrt{N!}}{N^{N/2} e^{-N/2}} \simeq N^{1/4}.
\end{equation}
For the factorized initial data, we first rewrite \eqref{gamma definition} using the coherent state and the projection operator $P_N$. From \eqref{factorized data} we get
\begin{equation} \label{gamma expansion 1} \begin{split}
& \gamma^{\alpha, (1)}_{N,t}(x;y) = \frac{1}{N} \Big \langle \frac{(a^*(\ph))^N}{\sqrt{N!}} \Omega, e^{i\ha_N t} a_y^* a_x e^{-i\ha_N t} \frac{(a^*(\ph))^N}{\sqrt{N!}} \Omega \Big \rangle \\
&= \frac{d_N}{N} \Big \langle \frac{(a^*(\ph))^N}{\sqrt{N!}} \Omega, e^{i\ha_N t} a_y^* e^{-i\ha_N t} e^{i\ha_N t} a_x e^{-i\ha_N t} P_N W(\sqrt{N} \ph) \Omega \Big \rangle
\end{split} \end{equation}
By counting the number of particles, we find that
\begin{equation}
e^{i\ha_N t} a_y^* e^{-i\ha_N t} e^{i\ha_N t} a_x e^{-i\ha_N t} P_N = P_N e^{i\ha_N t} a_y^* e^{-i\ha_N t} e^{i\ha_N t} a_x e^{-i\ha_N t},
\end{equation}
where we used the fact that the evolution operator $e^{-i\ha_N t}$ conserves the number of particles in a Fock state. Moreover, it is obvious that
\begin{equation}
\Big \langle \frac{(a^*(\ph))^N}{\sqrt{N!}} \Omega, P_N \psi \Big \rangle = \Big \langle \frac{(a^*(\ph))^N}{\sqrt{N!}} \Omega, \psi \Big \rangle
\end{equation}
for any $\psi \in \F$. Thus, we obtain from \eqref{gamma expansion 1} that
\begin{equation}
\gamma^{\alpha, (1)}_{N,t}(x;y) = \frac{d_N}{N} \Big \langle \frac{(a^*(\ph))^N}{\sqrt{N!}} \Omega, e^{i\ha_N t} a_y^* e^{-i\ha_N t} e^{i\ha_N t} a_x e^{-i\ha_N t} W(\sqrt{N} \ph) \Omega \Big \rangle
\end{equation}

To simplify it further, we apply \eqref{Weyl operator identity} to find that
\begin{equation}
e^{i \ha_N t} a_x e^{-i \ha_N t} = W(\sqrt{N} \ph ) \U^*(t) ( a_x + \sqrt{N} \pha_t (x) ) \U(t) W^*(\sqrt{N} \ph)
\end{equation}
and an analogous result for the creation operator. Hence, we get
\begin{equation} \begin{split}
&\gamma^{\alpha, (1)}_{N,t}(x;y) \\
&= \frac{d_N}{N} \Big \langle \frac{(a^*(\ph))^N}{\sqrt{N!}} \Omega, W(\sqrt{N} \ph) \U^*(t) \big( a_y^* + \sqrt{N} \overline{\pha_t} (y) \big) \U(t) W^*(\sqrt{N} \ph) \\
& \qquad \qquad \qquad W(\sqrt{N} \ph) \U^*(t) \big( a_x + \sqrt{N} \pha_t (x) \big) \U(t) W^*(\sqrt{N} \ph) W(\sqrt{N} \ph ) \Omega \Big \rangle \\
&= \frac{d_N}{N} \Big \langle \frac{(a^*(\ph))^N}{\sqrt{N!}} \Omega, W(\sqrt{N} \ph) \U^*(t) \big( a_y^* + \sqrt{N} \overline{\pha_t} (y) \big) \\
& \qquad \qquad \qquad \big( a_x + \sqrt{N} \pha_t (x) \big) \U(t) \Omega \Big \rangle.
\end{split} \end{equation}
Expanding the term $\big( a_y^* + \sqrt{N} \overline{\pha_t} (y) \big) \big( a_x + \sqrt{N} \pha_t (x) \big)$, we obtain the following equation for the one-particle marginal.
\begin{equation} \begin{split}
& \gamma^{\alpha, (1)}_{N,t}(x;y) - \overline{\pha_t} (y) \pha_t(x) \\
&= \frac{d_N}{N} \left \langle \frac{(a^*(\ph))^N}{\sqrt{N!}} \Omega, W(\sqrt{N} \ph) \U^*(t) a_y^* a_x \U(t) \Omega \right\rangle \\
& \quad + \overline{\pha_t} (y) \frac{d_N}{\sqrt{N}} \left \langle \frac{(a^*(\ph))^N}{\sqrt{N!}} \Omega, W(\sqrt{N} \ph) \U^*(t) a_x \U(t) \Omega \right \rangle \\
& \quad + \pha_t(x) \frac{d_N}{\sqrt{N}} \left \langle \frac{(a^*(\ph))^N}{\sqrt{N!}} \Omega, W(\sqrt{N} \ph) \U^*(t) a_y^* \U(t) \Omega \right \rangle.
\end{split} \end{equation}

For any compact one-particle Hermitian operator $J$ on $L^2 (\bR^3)$, we find
\begin{equation} \label{expansionU} \begin{split}
&\textrm{Tr } J \left( \gamma^{\alpha, (1)}_{N, t} - | \pha_t \rangle \langle \pha_t | \right) \\
&= \frac{d_N}{N} \left \langle \frac{(a^*(\ph))^N}{\sqrt{N!}} \Omega, W(\sqrt{N} \ph) \U^*(t) d\Gamma (J) \U(t) \Omega \right\rangle \\
& \quad + \frac{d_N}{\sqrt{N}} \left \langle \frac{(a^*(\ph))^N}{\sqrt{N!}} \Omega, W(\sqrt{N} \ph) \U^*(t) \phi (J \pha_t) \U(t) \Omega \right \rangle \\
&= E_t^1 (J) + E_t^2 (J).
\end{split} \end{equation}
Lemma \ref{E_1 estimate} and Lemma \ref{E_2 estimate} show that
\begin{equation}
\left| \textrm{Tr } J \left( \gamma^{\alpha, (1)}_{N, t} - | \pha_t \rangle \langle \pha_t | \right) \right| \leq |E_t^1 (J)| + |E_t^2 (J)| \leq \frac{C}{N} \| J \| e^{Kt}
\end{equation}
for all compact Hermitian operators $J$ on $L^2 (\bR^3)$. Since the space of compact operators is the dual to the space of trace class operators, and since $\gamma^{\alpha, (1)}_{N, t}$ and $| \pha_t \rangle \langle \pha_t |$ are Hermitian, we obtain that
\begin{equation}
\textrm{Tr } \left| \gamma^{\alpha, (1)}_{N, t} - | \pha_t \rangle \langle \pha_t | \right| \leq \frac{C e^{Kt}}{N},
\end{equation}
which was to be proved.
\end{proof}

\section{Comparison of Dynamics} \label{sec:comp}

In this section, we prove important lemmas that were used in the proof of Theorem \ref{main theorem 1} by estimating the difference between $\U(t;s)$ and $\wU(t;s)$.

\begin{lem} \label{U_tilde conserves N}
For any $\psi \in \F$ and $j \in \mathbb{N}$, there exist constants $C$ and $K$, depending on $\lambda$, $j$, and $\sup_{\tau \leq |t|, |s|} \| \varphi_{\tau} \|_{H^{1/2}}$, such that
\begin{equation}
\langle \wU (t;s) \psi, \N^j \wU (t;s) \psi \rangle \leq C e^{K|t-s|} \langle \psi, (\N+1)^j \psi \rangle.
\end{equation}
\end{lem}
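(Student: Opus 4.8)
The strategy is the standard Grönwall-in-time argument for the growth of moments of the number operator under the approximate dynamics $\wU(t;s)$. Since $i\partial_t \wU(t;s) = (\cL_2(t) + \cL_4)\wU(t;s)$, I would differentiate the quantity $f(t) := \langle \wU(t;s)\psi, (\N+1)^j \wU(t;s)\psi\rangle$ (using $(\N+1)^j$ rather than $\N^j$ to avoid a degenerate base case and because the final bound is stated in terms of $(\N+1)^j$). This gives
\begin{equation}
\frac{d}{dt} f(t) = i\,\big\langle \wU(t;s)\psi,\, \big[ \cL_2(t) + \cL_4,\ (\N+1)^j \big]\, \wU(t;s)\psi \big\rangle,
\end{equation}
and the goal is to bound the right-hand side by $K f(t)$, after which Grönwall closes the estimate. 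The key algebraic point is that $\cL_4$ (see \eqref{L4}) and the number-conserving pieces of $\cL_2(t)$ — namely $\K$, the term $\lambda\int dx\,(\frac{1}{|\cdot|+\alpha_N}*|\pha_t|^2)(x)\,a_x^*a_x$, and the term $\lambda\iint dx\,dy\,\frac{1}{|x-y|+\alpha_N}\overline{\pha_t}(x)\pha_t(y)\,a_y^*a_x$ — all commute with $\N$, hence with $(\N+1)^j$, and drop out. Only the "off-diagonal" quadratic part of $\cL_2(t)$,
\begin{equation}
B(t) := \frac{\lambda}{2}\iint dx\,dy\,\frac{1}{|x-y|+\alpha_N}\big(\pha_t(x)\pha_t(y)\,a_x^*a_y^* + \overline{\pha_t}(x)\overline{\pha_t}(y)\,a_x a_y\big),
\end{equation}
contributes, because it changes the particle number by $\pm 2$.

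**Estimating the commutator term.** I would next bound $\big|\langle \wU\psi, [B(t), (\N+1)^j]\wU\psi\rangle\big|$. Writing $\xi = \wU(t;s)\psi$, one expands $[B(t),(\N+1)^j]$ using $a_x^*a_y^*(\N+1)^j = ((\N+3)^j)(a_x^*a_y^*)$ on each sector, so the commutator is $a_x^*a_y^*$ times a bounded-by-$C\N^{j-1}$ difference of powers, plus the adjoint. One then needs a bound of the form
\begin{equation}
\Big|\iint dx\,dy\,\frac{\pha_t(x)\pha_t(y)}{|x-y|+\alpha_N}\,\langle \xi,\, a_x^*a_y^*\, (\N+1)^{j-1}\xi\rangle\Big| \leq C\,\|\varphi_t\|_{H^{1/2}}^2\,\langle \xi, (\N+1)^j\xi\rangle.
\end{equation}
This is where the singularity of the Coulomb kernel enters. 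I would control it the same way one controls the interaction in $\cL_2$ elsewhere in the paper: split $a_x^*a_y^*(\N+1)^{j-1} = a_x^*(\N+1)^{(j-1)/2}\cdot(\N+1)^{(j-1)/2}a_y^*$ (up to harmless shifts of $\N$ from the commutators), apply Cauchy–Schwarz in $\xi$, and then use the pointwise-in-$x$ operator bound $\|a_x^*(\N+2)^{(j-1)/2}\xi\|$ together with the operator inequality $\frac{1}{|x-y|}\le C(1-\Delta_x)^{1/2}$ (Hardy/Kato, as cited after \eqref{regularization difference gronwall 2} via Lemma 9.1 of \cite{ES}) to absorb the singular kernel against one factor of $|\nabla_x|^{1/2}$ landing on $\pha_t$; since $\frac{1}{|x-y|+\alpha_N}\le\frac{1}{|x-y|}$, the cutoff causes no extra trouble and the bound is uniform in $\alpha_N$. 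After these manipulations the $x,y$ integrals produce $\|\,|\nabla|^{1/2}\pha_t\|_2\,\|\pha_t\|_2 \le \|\pha_t\|_{H^{1/2}}^2$, and the Fock-space factors reassemble into $\langle\xi,(\N+1)^j\xi\rangle = f(t)$. Combining, $|f'(t)| \le K(\lambda, j, \sup\|\pha_\tau\|_{H^{1/2}})\, f(t)$, and Grönwall gives $f(t) \le e^{K|t-s|} f(s) = e^{K|t-s|}\langle\psi,(\N+1)^j\psi\rangle$, which yields the claim (with $\N^j \le (\N+1)^j$ on the left).

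**Main obstacle and technical caveat.** The main obstacle is not the Grönwall scheme itself — which is routine — but making the commutator estimate rigorous given the unboundedness of $B(t)$ and of $\N^j$: $a$ priori $f(t)$ could be infinite, and differentiating under the expectation requires knowing $\wU(t;s)$ preserves the relevant domain. I would handle this as remarked after \eqref{evolution of U_tilde} and promised for Section 8: work first with a regularized number operator, e.g. replace $\N^j$ by $\N^j(1+\epsilon\N)^{-j}$ or project onto $\{\N\le M\}$, derive the differential inequality with constants uniform in the cutoff, apply Grönwall, and remove the cutoff by monotone convergence. One should also verify that the lower-order terms generated when commuting $a_x^* a_y^*$ past powers of $(\N+1)$ (the cross terms from $(\N+3)^j - (\N+1)^j = \sum_{l<j}\binom{j}{l}(\N+1)^l(2^{j-l}-\dots)$, schematically $O(\N^{j-1})$) are each controlled by the same Cauchy–Schwarz-plus-Hardy argument with one fewer power of $\N$, so they are dominated by $C f(t)$ as well. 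Finally, the singular part $a_x a_y$ of $B(t)$ is treated by the same estimate after taking adjoints, using the third annihilation bound of Lemma 2.1 of \cite{RS} in the form $\|a_x(\N+2)^{(j-1)/2}\xi\|$-type quantities, so nothing new is needed there.
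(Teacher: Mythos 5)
Your plan — differentiate $\langle\wU(t;s)\psi,(\N+1)^j\wU(t;s)\psi\rangle$, note that $\K$, $\cL_4$ and the number-conserving pieces of $\cL_2(t)$ commute with $(\N+1)^j$, and reduce to a commutator estimate for the off-diagonal quadratic part $B(t)$, closing with Gr\"onwall — is exactly the paper's scheme, and the remarks about domain regularization are consistent with what is deferred to Section~\ref{sec:tilde}.

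The gap is in the way you propose to estimate the surviving commutator. You split $a_x^*a_y^*(\N+1)^{j-1}$ symmetrically as $a_x^*(\N+1)^{(j-1)/2}\cdot(\N+1)^{(j-1)/2}a_y^*$ and apply Cauchy--Schwarz in $\xi$, leaving one creation operator on each side. The factor you call a ``pointwise-in-$y$ operator bound'', $\|(\N+1)^{(j-1)/2}a_y^*\xi\|$, is not finite for fixed $y$: expanding gives $\langle\xi,a_y(\N+1)^{j-1}a_y^*\xi\rangle=\langle\xi,\N^{j-1}a_ya_y^*\xi\rangle$, and the CCR $a_ya_y^*=a_y^*a_y+\delta(0)$ makes this diverge. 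No $\iint dx\,dy$ integration saves it because the divergence occurs already at each $y$. The cure, which is what the paper does, is to take the adjoint first so that both annihilation operators sit on the same side of the pairing, i.e. write $\langle(\N+3)^{j/2-1}a_xa_y\widetilde\psi,(\N+3)^{1-j/2}\big((\N+1)^j-(\N+3)^j\big)\widetilde\psi\rangle$. Then after Cauchy--Schwarz over $\iint dx\,dy$ the only Fock-space quantity integrated in $(x,y)$ is $\iint\|(\N+3)^{j/2-1}a_xa_y\widetilde\psi\|^2\,dx\,dy\le\|(\N+1)^{j/2}\widetilde\psi\|^2$, which is finite, while the kernel gets paired against $|\varphi(x)|^2|\varphi(y)|^2$ in $L^2(dx\,dy)$ and is controlled by Hardy--Littlewood--Sobolev: $\iint\frac{|\varphi(x)|^2|\varphi(y)|^2}{|x-y|^2}\le C\|\varphi\|_3^4\le C\|\varphi\|_{H^{1/2}}^4$. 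Your alternative of absorbing the singularity via the two-particle operator inequality $\frac{1}{|x-y|^2}\le C(1-\Delta_x)^{1/2}(1-\Delta_y)^{1/2}$ also yields the $H^{1/2}$ norm once the Cauchy--Schwarz is arranged as above, so that part of your proposal is not the obstruction; the placement of the creation/annihilation operators is.
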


\begin{proof}
Let $\widetilde{\psi} = \wU (t;s) \psi$. We have
\begin{equation} \begin{split}
& \frac{d}{dt}\langle \widetilde{\psi}, (\N+1)^j \widetilde{\psi} \rangle = \langle \widetilde{\psi}, [i \cL_2,(\N+1)^j] \widetilde{\psi} \rangle \\
&= \text{Im} \iint dx dy \frac{\lambda}{|x-y|} \varphi_t(x)\varphi_t(y)\langle \widetilde{\psi}, [a_x^* a_y^*, (\N+1)^j] \widetilde{\psi} \rangle \\
&= \text{Im} \iint dx dy \frac{\lambda}{|x-y|} \varphi_t(x)\varphi_t(y)\langle \widetilde{\psi}, a_x^* a_y^* ( (\N+1)^j - (\N+3)^j ) \widetilde{\psi} \rangle \\
&= \text{Im} \iint dx dy \frac{\lambda}{|x-y|} \varphi_t(x)\varphi_t(y) \\
& \qquad \qquad \times \langle (\N+3)^{\frac{j}{2}-1} a_x a_y \widetilde{\psi}, (\N+3)^{1-\frac{j}{2}} ( (\N+1)^j - (\N+3)^j ) \widetilde{\psi} \rangle.
\end{split} \end{equation}
Thus, from Schwarz inequality, we obtain that
\begin{equation} \begin{split}
&\left| \frac{d}{dt}\langle \widetilde{\psi}, (\N+1)^j \widetilde{\psi} \rangle \right| \\
&\leq \iint dx dy \frac{\lambda}{|x-y|} |\varphi_t(x)| |\varphi_t(y)| \| (\N+3)^{\frac{j}{2}-1} a_x a_y \widetilde{\psi} \| \\
& \qquad \qquad \times \| (\N+3)^{1-\frac{j}{2}} ( (\N+1)^j - (\N+3)^j ) \widetilde{\psi} \| \\
&\leq \lambda \| (\N+3)^{1-\frac{j}{2}} ( (\N+1)^j - (\N+3)^j ) \widetilde{\psi} \| \left( \iint dx dy \frac{ |\varphi_t(x)|^2 |\varphi_t(y)|^2}{|x-y|^2} \right)^{1/2} \\
& \qquad \qquad \times \left( \iint dx dy \| (\N+3)^{\frac{j}{2}-1} a_x a_y \widetilde{\psi} \|^2 \right)^{1/2}
\end{split} \end{equation}
Easy algebra shows that $(\N+3)^{1-(j/2)} |(\N+1)^j - (\N+3)^j | \leq C(\N+1)^{j/2}$. From Hardy-Littlewood-Sobolev inequality we have that
\begin{equation}
\iint dx dy \frac{ |\varphi_t(x)|^2 |\varphi_t(y)|^2}{|x-y|^2} \leq C \| \varphi_t \|_3^4 \leq \| \varphi_t \|_{H^{1/2}}^4.
\end{equation}
We also have that
\begin{equation} \begin{split}
&\iint dx dy \| (\N+3)^{(j/2)-1} a_x a_y \psia \|^2 = \iint dx dy \|  a_x a_y (\N+1)^{(j/2)-1} \psia \|^2 \\
&\leq \| (\N+1)^{j/2} \psia \|^2.
\end{split} \end{equation}
Altogether, we have shown that
\begin{equation} \label{wU gronwall lemma} \begin{split}
&\left| \frac{d}{dt} \langle \wU (t;s) \psi, (\N+1)^j \wU (t;s) \psi \rangle \right| \\
&\leq C \| (\N+1)^{j/2} \wU (t;s) \psi \|^2 = C \langle \wU (t;s) \psi, (\N+1)^j \wU (t;s) \psi \rangle.
\end{split} \end{equation}

Since $\wU(s;s)=I$, we also have
\begin{equation}
\langle \wU(s;s) \psi, (\N+1)^j\wU(s;s) \psi \rangle =\langle \psi, (\N+1)^j \psi \rangle. \label{wU gronwall initial}
\end{equation}
Using \eqref{wU gronwall lemma} and \eqref{wU gronwall initial} the conclusion follows directly from the Gronwall's lemma.
\end{proof}

\begin{lem} \label{U conserves N}
For any $\psi \in \F$ and $j \in \mathbb{N}$, there exist constants $C$ and $K$, depending on $\lambda$, $j$, and $\sup_{|\tau| \leq |t|, |s|} \|\varphi_{\tau} \|_{H^1}$ such that
\begin{equation}
\langle \U(t;s) \psi, \N^j \U(t;s) \psi \rangle \leq C e^{K|t-s|} \langle \psi, (\N+1)^{2j+2} \psi \rangle.
\end{equation}
\end{lem}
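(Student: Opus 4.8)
The plan is to run, for $\U(t;s)$, the Grönwall argument already used for $\wU(t;s)$ in Lemma~\ref{U_tilde conserves N}; the only new feature is that the generator of $\U(t;s)$ is $\cL_2(t)+\cL_3(t)+\cL_4$ rather than $\cL_2(t)+\cL_4$, so there is the extra term $\cL_3(t)$ to control. Put $\psi_t:=\U(t;s)\psi$ and $f_j(t):=\langle\psi_t,(\N+1)^j\psi_t\rangle$. From \eqref{evolution of U},
$$\frac{d}{dt}f_j(t)=i\langle\psi_t,[\cL_2(t)+\cL_3(t)+\cL_4,(\N+1)^j]\psi_t\rangle,$$
and $[\cL_4,(\N+1)^j]=0$ since $\cL_4$ preserves particle number, while $i\langle\psi_t,[\cL_2(t),(\N+1)^j]\psi_t\rangle\le Cf_j(t)$ by exactly the computation of Lemma~\ref{U_tilde conserves N} (bounding $\|\varphi_t\|_{H^{1/2}}\le\|\varphi_t\|_{H^1}$). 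So everything reduces to the $\cL_3$-contribution.

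To estimate it, write $\cL_3(t)=B_t+B_t^{*}$ with $B_t=\frac{\lambda}{\sqrt N}\int dx\,a_x^{*}\,a^{*}(k_x^{t})\,a_x$ the particle-number-raising part, where $k_x^{t}(y)=\varphi_t(y)/(|x-y|+\alpha_N)$. Using $[(\N+1)^j,B_t]=D_j\,B_t$ with $0\le D_j:=(\N+2)^j-(\N+1)^j\le C_j(\N+1)^{j-1}$, the $\cL_3$-commutator reduces to $2|\langle D_j\psi_t,B_t\psi_t\rangle|$. Two ingredients then finish the bound. First, the Fock-space estimate $\|B_t\chi\|+\|B_t^{*}\chi\|\le C\nu\,N^{-1/2}\|(\N+1)^{3/2}\chi\|$, with $\nu=\sup_{|\tau|\le\max(|t|,|s|)}\|\varphi_\tau\|_{H^1}$, proved sector by sector using the Hardy--Littlewood--Sobolev bound $\sup_x\int|x-y|^{-2}|\varphi_\tau(y)|^{2}\,dy\le C\|\varphi_\tau\|_{3}^{2}\le C\|\varphi_\tau\|_{H^1}^{2}$ — this is the one point where $\varphi_t\in H^1$, rather than merely $\varphi_t\in H^{1/2}$, is needed. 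Second, the commutation rule $B_t\,g(\N+1)=g(\N)\,B_t$, which lets one move powers of $\N+1$ freely across $B_t$: distributing the weight $D_j\lesssim(\N+1)^{j-1}$ so that $(\N+1)^{j-1-r}a_x\psi_t$ sits on one side and $(\N+1)^{r}a_x\psi_t$ on the other, using $\int\|(\N+1)^{a}a_x\psi_t\|^{2}\,dx=\langle\psi_t,\N^{2a+1}\psi_t\rangle$, and optimizing in $r$, one obtains
$$\bigl|i\langle\psi_t,[\cL_3(t),(\N+1)^j]\psi_t\rangle\bigr|\le\frac{C\nu}{\sqrt N}\,\langle\psi_t,(\N+1)^{j+\frac12}\psi_t\rangle\le\frac{C\nu}{\sqrt N}\,f_j(t)^{1/2}f_{j+1}(t)^{1/2}.$$
A weighted Young inequality $\sqrt{ab}\le\frac12(\sqrt N\,a+N^{-1/2}\,b)$ then gives the closed differential inequality $\frac{d}{dt}f_j(t)\le C f_j(t)+\frac{C\nu}{N}f_{j+1}(t)$.

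It remains to unroll this hierarchy. Iterating the Grönwall/Duhamel form of $\frac{d}{dt}f_j\le Cf_j+\frac{C\nu}{N}f_{j+1}$ exactly $j+2$ times gives
$$f_j(t)\le e^{C|t-s|}\sum_{l=0}^{j+1}\frac{(C\nu|t-s|)^{l}}{l!\,N^{l}}\,f_{j+l}(s)+\frac{(C\nu|t-s|)^{j+2}}{(j+2)!\,N^{j+2}}\,e^{C|t-s|}\sup_{s'}f_{2j+2}(s'),$$
the supremum running over $s'$ between $s$ and $t$. Since $f_{j+l}(s)=\langle\psi,(\N+1)^{j+l}\psi\rangle\le\langle\psi,(\N+1)^{2j+2}\psi\rangle$ for $0\le l\le j+1$, the finite sum is $\le e^{C\nu|t-s|}\langle\psi,(\N+1)^{2j+2}\psi\rangle$ — this is precisely why the statement carries the exponent $2j+2$. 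The residual term, which comes with the small prefactor $N^{-(j+2)}$, is absorbed using a crude a priori bound for $f_{2j+2}(s')$ obtained from the representation $\U(s';s)=e^{-i\omega(s';s)}W^{*}(\sqrt N\varphi_{s'})e^{-i(s'-s)\ha_N}W(\sqrt N\varphi_s)$ in \eqref{definition of U} together with the fact that $\N$ commutes with $\ha_N$. Combining the two contributions and using $\N^{j}\le(\N+1)^{j}$ yields the asserted bound.

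The hard part is the estimate of $[\cL_3(t),(\N+1)^j]$ and the subsequent closure of the hierarchy: the commutator genuinely couples the $j$-th moment of $\N$ to the $(j+1)$-st, so no single Grönwall estimate closes, and one is forced into the iteration above — whose depth $j+2$ is exactly what produces the gap $j\leadsto 2j+2$ — after which the leftover top moment must still be disposed of by the a priori bound for $\U$. The analytic core is the $H^1$-based bound on $B_t,B_t^{*}$, which is what makes the $N^{-1/2}$ gain available in the first place.
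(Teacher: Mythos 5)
The paper itself offers no argument for this lemma: it simply cites Proposition 3.3 of \cite{RS}, so there is nothing to compare against line by line. Your reconstruction is right in spirit up to the last step -- you correctly isolate the new $\cL_3$ contribution, derive the differential inequality $\dot f_j\le Cf_j+\tfrac{C\nu}{N}f_{j+1}$, and note that $\varphi_t\in H^1$ (via Hardy applied to $\int|\varphi_t(y)|^2|x-y|^{-2}dy$) is where the $H^{1/2}$ hypothesis of Lemma~\ref{U_tilde conserves N} needs to be strengthened. But the hierarchy-closure step has a genuine gap.

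You iterate the Duhamel form $j+2$ times and claim the residual $\frac{C^{j+2}}{N^{j+2}}\sup_{s'}f_{2j+2}(s')$ is absorbed by ``a crude a priori bound for $f_{2j+2}(s')$ obtained from the Weyl representation together with $[\N,\ha_N]=0$.'' Trace through what that bound actually gives: writing $\eta = W^*(\sqrt N\varphi_{s'})e^{-i(s'-s)\ha_N}W(\sqrt N\varphi_s)\psi$ and undoing the outer Weyl, one is left with $W(\sqrt N\varphi_{s'})(\N+1)^{2j+2}W^*(\sqrt N\varphi_{s'})=\bigl(\N+1+N-\sqrt N\,\phi(\varphi_{s'})\bigr)^{2j+2}$, and after pulling $\N$ through $e^{-i(s'-s)\ha_N}$ and undoing the inner Weyl the same way, the best one gets is
\begin{equation}
f_{2j+2}(s')\le C\bigl\langle\psi,(\N+N+1)^{2j+2}\psi\bigr\rangle.
\end{equation}
Now divide by $N^{j+2}$. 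Taking, say, $\psi=\Omega$ (where the asserted right-hand side of the lemma is $O(1)$), the residual is of size $N^{2j+2}/N^{j+2}=N^{j}$, which is unbounded in $N$. Iterating more deeply does not help either: the $l$-th remainder carries $N^{-l}f_{j+l}(s')$ and the crude Weyl shift always gives $f_{j+l}(s')\sim N^{j+l}$, leaving the same uncontrolled $N^{j}$. So as written, your argument does not establish the lemma; the a priori bound you invoke is simply not strong enough, because the Weyl operator displaces $\N$ by a full $O(N)$ and nothing in the phrase ``$\N$ commutes with $\ha_N$'' recovers that loss. The actual proof of \cite[Prop.\ 3.3]{RS} must exploit more structure (in particular the near-cancellation between the two Weyl conjugations in $W^*(\sqrt N\varphi_t)e^{-i(t-s)\ha_N}W(\sqrt N\varphi_s)$) and is not recovered by the crude expectation-shift you describe. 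You should either reproduce the genuine RS argument at this step or prove a sharper a priori moment bound; the exponent $2j+2$ in the statement is indeed tied to the depth of some iteration, but matching it is exactly the nontrivial content you have not supplied.
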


\begin{proof}
See Proposition 3.3 of \cite{RS}.
\end{proof}

\begin{lem} \label{L3 estimate}
For any $\psi \in \F$ and $j \in \mathbb{N}$, there exist a constant $C$, depending on $\lambda$, $j$, and $\|\varphi_t\|_{H^1}$ such that
\begin{equation}
\| (\N+1)^{j/2} \cL_3 (t) \psi \| \leq \frac{C}{\sqrt{N}} \| (\N+1)^{(j+3)/2} \psi \|.
\end{equation}
\end{lem}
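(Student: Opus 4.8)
The plan is to split the cubic generator $\cL_3(t)$ into its creation part and its annihilation part, to absorb the Coulomb singularity into a bound on $\|g^t_x\|_2$ uniform in $x$ via Hardy's inequality, and then to run a duality argument in which the three creation/annihilation operators are distributed by Cauchy--Schwarz; the powers of $\N$ are matched at the end by using that $\cL_3(t)$ shifts the particle number by exactly one. Since $\alpha_N>0$ we may bound $(|x-y|+\alpha_N)^{-1}\le|x-y|^{-1}$ throughout, so no $\alpha_N$-dependence enters. Write $g^t_x(y):=\varphi_t(y)/(|x-y|+\alpha_N)$ and $\cL_3(t)=\cL_3^{(1)}(t)+\cL_3^{(2)}(t)$ with
\[
\cL_3^{(1)}(t)=\frac{\lambda}{\sqrt N}\int dx\; a_x^*\,a^*(g^t_x)\,a_x ,\qquad
\cL_3^{(2)}(t)=\frac{\lambda}{\sqrt N}\int dx\; a_x^*\,a(g^t_x)\,a_x .
\]
Hardy's inequality gives $\|g^t_x\|_2^2\le\int|\varphi_t(y)|^2|x-y|^{-2}\,dy\le 4\|\nabla\varphi_t\|_2^2\le C\|\varphi_t\|_{H^1}^2$, uniformly in $x$. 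Moreover $\cL_3^{(1)}(t)$ raises the particle number by one and $\cL_3^{(2)}(t)$ lowers it by one, so on each sector $(\N+1)^{j/2}\cL_3^{(1)}(t)=\cL_3^{(1)}(t)(\N+2)^{j/2}$ and $(\N+1)^{j/2}\cL_3^{(2)}(t)=\cL_3^{(2)}(t)\N^{j/2}$; hence it is enough to treat $j=0$, i.e. to show $\|\cL_3^{(i)}(t)\psi\|\le CN^{-1/2}\|(\N+1)^{3/2}\psi\|$ for $i=1,2$, since applying this to $(\N+2)^{j/2}\psi$ and $\N^{j/2}\psi$ and comparing powers of $\N$ (e.g. $(\N+1)^{3/2}(\N+2)^{j/2}\le C(\N+1)^{(j+3)/2}$) then yields the lemma.

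For $j=0$ and $i=1$, I would move the left-most creation operator onto the test vector and apply $\|a^*(f)\eta\|\le\|f\|_2\|(\N+1)^{1/2}\eta\|$ with $f=g^t_x$ and $\eta=a_x\psi$, obtaining for $\xi,\psi\in\F$
\[
|\langle\xi,\cL_3^{(1)}(t)\psi\rangle|\le\frac{\lambda}{\sqrt N}\int dx\,\|a_x\xi\|\,\|g^t_x\|_2\,\|(\N+1)^{1/2}a_x\psi\|\le\frac{C\|\varphi_t\|_{H^1}}{\sqrt N}\,\|\N^{1/2}\xi\|\,\|\N\psi\| ,
\]
the last step being Cauchy--Schwarz in $x$ together with $\int\|a_x\eta\|^2\,dx=\|\N^{1/2}\eta\|^2$ and the commutation relation $(\N+1)a_x=a_x\N$. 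As it stands this splits the three operators into a factor $\N^{1/2}$ on $\xi$ and a factor $\N$ on $\psi$, not $(\N+1)^{3/2}$ on $\psi$; the crucial point --- and the only slightly delicate bookkeeping in the argument --- is that $\cL_3^{(1)}(t)$ maps $P_n\F$ into $P_{n+1}\F$, so restricting to $\psi=P_n\psi$, $\xi=P_{n+1}\xi$ turns the operators into numbers, $\|\N^{1/2}\xi\|=(n+1)^{1/2}\|\xi\|$ and $\|\N\psi\|=n\|\psi\|$, whence $\|\cL_3^{(1)}(t)P_n\psi\|\le C\|\varphi_t\|_{H^1}N^{-1/2}(n+1)^{1/2}n\,\|P_n\psi\|\le C\|\varphi_t\|_{H^1}N^{-1/2}(n+1)^{3/2}\|P_n\psi\|$; summing over $n$ (the image sectors being mutually orthogonal) gives the bound. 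The term $\cL_3^{(2)}(t)$ is treated in the same way, now with $\|a(f)\eta\|\le\|f\|_2\|\N^{1/2}\eta\|$ and with $\cL_3^{(2)}(t)$ mapping $P_n\F$ into $P_{n-1}\F$, which produces a numerical factor again bounded by $(n+1)^{3/2}$.

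I do not expect a genuine obstacle: this is the standard Rodnianski--Schlein-type estimate for a cubic Fock-space term. The two points that must be handled with some care are that the \emph{full} $H^1$ norm of $\varphi_t$ (rather than $H^{1/2}$) is what makes $\sup_x\|g^t_x\|_2$ finite, since $g^t_x$ carries the \emph{square} of the Coulomb kernel and one needs Hardy's inequality rather than Kato's; and the sector-by-sector matching of the powers of $\N$ described above, which is precisely what turns the naive Cauchy--Schwarz split into the clean bound by $(\N+1)^{(j+3)/2}$ acting on $\psi$ alone.
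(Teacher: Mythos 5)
Your proposal is correct and follows essentially the same route as the paper: split $\cL_3(t)$ into its creation and annihilation pieces (the paper's $A_3(t)$ and $A_3^*(t)$ are exactly your $\cL_3^{(2)}$ and $\cL_3^{(1)}$ up to the prefactor), use Hardy's inequality to control $\iint \frac{|\varphi_t(y)|^2}{|x-y|^2}\,dy$, and run a duality/Cauchy--Schwarz argument in $x$. The only presentational difference is that you reduce to $j=0$ and then recover the general power by a sector-by-sector argument, whereas the paper inserts $(\N+1)^{-1/2}$ on the test vector and $(\N+1)^{(j+1)/2}$ on $\psi$ directly inside the Cauchy--Schwarz step; both variants yield the same bound.
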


\begin{proof}
While this lemma can be proved as in Lemma 6.3 of \cite{CLS}, we give a shorter proof here. Let
\begin{equation}
A_3 (t) = \iint dx dy \frac{1}{|x-y|+ \alpha_N} \overline{\varphi}_t (y) a_x^* a_y a_x.
\end{equation}
Then,
\begin{equation}
(\N+1)^{j/2} \cL_3 (t) = \frac{\lambda}{\sqrt N} \left( (\N+1)^{j/2} A_3 (t) + (\N+1)^{j/2} A_3^* (t) \right) \label{L_3 expansion}.
\end{equation}

We estimate $(\N+1)^j A_3(t)$ and $(\N+1)^j A_3^* (t)$ separately. The first term $(\N+1)^j A_3(t)$ satisfies for any $\xi \in \F$ that
\begin{equation} \begin{split}
&| \langle \xi, (\N+1)^{j/2} A_3(t) \psi \rangle | = \left| \iint dx dy \frac{\overline{\varphi}_t (y)}{|x-y|+ \alpha_N} \langle \xi, (\N+1)^{j/2} a_x^* a_y a_x \psi \rangle \right| \\
&= \left| \iint dx dy \frac{\overline{\varphi}_t (y)}{|x-y|+ \alpha_N} \langle (\N+1)^{-1/2} \xi, (\N+1)^{(j+1)/2} a_x^* a_y a_x \psi \rangle \right| \\
&\leq \left( \iint dx dy \frac{|\varphi_t (y)|^2}{|x-y|^2} \| a_x (\N+1)^{-1/2} \xi \|^2 \right)^{1/2} \\
&\qquad \qquad \times \left( \iint dx dy \| a_y a_x \N^{(j+1)/2} \psi \|^2 \right)^{1/2} \\
&\leq C \| \varphi_t \|_{H^1} \| \xi \| \| \N^{(j+3)/2} \psi \|,
\end{split} \end{equation}
where we used Hardy inequality in the last inequality. Since $\xi$ was arbitrary, we obtain that
\begin{equation} \label{A_3 estimate 1}
\| (\N+1)^{j/2} A_3(t) \psi \| \leq C \| \varphi_t \|_{H^1} \| \N^{(j+3)/2} \psi \|.
\end{equation}
Similarly, we can find that
\begin{equation} \label{A_3 estimate 2}
\| (\N+1)^{j/2} A_3^*(t) \psi \| \leq C \| \varphi_t \|_{H^1} \| (\N+2)^{(j+3)/2} \psi \|.
\end{equation}

Hence, from \eqref{L_3 expansion}, \eqref{A_3 estimate 1}, and \eqref{A_3 estimate 2} we get
\begin{equation}
\| (\N+1)^{j/2} \cL_3 (t) \psi \| \leq \frac{C}{\sqrt{N}} \| (\N+1)^{(j+3)/2} \psi \|,
\end{equation}
which was to be proved.
\end{proof}

\begin{lem} \label{estimate for the difference between U and U_tilde}
For all $j \in \mathbb{N}$, there exist constants $C$ and $K$ depending only on $\lambda$, $j$, and $\sup_{|s| \leq t} \| \varphi_s \|_{H^1}$ such that, for any $f \in L^2 (\bR^3)$,
\begin{equation}
\left\|(\N+1)^{j/2} \left( \U^*(t) \phi(f) \U(t) - \wU^*(t) \phi(f) \wU(t) \right) \Omega \right\| \leq \frac{C \| f \|_2 e^{Kt}}{N}.
\end{equation}
\end{lem}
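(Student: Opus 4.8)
The plan is to run Duhamel's formula relative to the difference of the generators of $\U$ and $\wU$, which is precisely $\cL_3(t)$ — the term carrying the prefactor $N^{-1/2}$ — and then exploit the conjugation structure to pick this factor up twice. From $i\partial_t\U(t;s)=(\cL_2(t)+\cL_3(t)+\cL_4)\U(t;s)$, $i\partial_t\wU(t;s)=(\cL_2(t)+\cL_4)\wU(t;s)$ and the self-adjointness of $\cL_3(s)$ one gets
\[
\U(t)-\wU(t)=-i\int_0^t\wU(t;s)\,\cL_3(s)\,\U(s)\,ds,\qquad
\U^*(t)-\wU^*(t)=i\int_0^t\U^*(s)\,\cL_3(s)\,\wU(s;t)\,ds .
\]
Telescoping,
\[
\U^*(t)\phi(f)\U(t)\Omega-\wU^*(t)\phi(f)\wU(t)\Omega=\big(\U^*(t)-\wU^*(t)\big)\phi(f)\U(t)\Omega+\wU^*(t)\phi(f)\big(\U(t)-\wU(t)\big)\Omega,
\]
and substituting the two Duhamel identities makes an explicit $\cL_3$ appear in each summand.

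Next, in each of the two resulting integrals I would replace every remaining $\U$, $\U^*$ by $\wU$, $\wU^*$ using the same Duhamel identities; each such replacement introduces a second factor of $\cL_3$. This decomposes the left-hand side into the first-order contribution
\[
i\int_0^t\wU^*(s)\,\big[\cL_3(s),\,\wU^*(t;s)\phi(f)\wU(t;s)\big]\,\wU(s)\,\Omega\,ds
\]
— the key being that the $\cL_3$ coming from $\U-\wU$ and the one coming from $\U^*-\wU^*$ enter with opposite signs and so assemble into a commutator, not a sum — plus a remainder in which $\cL_3$ occurs twice. That remainder is $O(N^{-1})$: each $\cL_3$ gives a factor $N^{-1/2}$ and raises the $\N$-weight by $3/2$ by Lemma~\ref{L3 estimate}; $\phi(f)$ costs $\|f\|_2$ and half a power of $(\N+1)$; the intermediate $\wU(\cdot;\cdot)$'s cost no powers of $\N$ by Lemma~\ref{U_tilde conserves N}; the $\U(\cdot;\cdot)$'s cost boundedly many powers of $(\N+1)$ by Lemma~\ref{U conserves N}; and all these powers are finally applied to $\Omega$, on which $(\N+1)^m\Omega=\Omega$. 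The exponential factors combine into $e^{Kt}$.

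It remains to bound the first-order commutator term, and this is where I expect the real work. Here I would further expand the conjugated field $\wU^*(t;s)\phi(f)\wU(t;s)$ by Duhamel: since $\cL_2$ is quadratic and $\cL_4=O(N^{-1})$, it equals a field operator $\phi(f_{t,s})$ with $\|f_{t,s}\|_2\le e^{Ks}\|f\|_2$, up to an $O(N^{-1})$ correction, so the commutator $[\cL_3(s),\phi(f_{t,s})]$ is an explicitly computable quadratic expression carrying the prefactor $N^{-1/2}$. To squeeze out the last factor $N^{-1/2}$ one uses that $\wU(s)\Omega$ lies entirely in the even particle-number sectors whereas $\cL_3$ changes parity, so that the surviving contractions in $\wU^*(s)[\cL_3(s),\phi(f_{t,s})]\wU(s)\Omega$ are forced into configurations supplying the extra $N^{-1/2}$; together with Lemmas~\ref{L3 estimate}, \ref{U conserves N} and \ref{U_tilde conserves N} this makes the commutator term $O(N^{-1})$ as well. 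Integrating in $s$ (or, equivalently, differentiating the weighted norm in $t$ and closing by Gronwall, noting the remaining $[\cL_2+\cL_4,\phi(f)]$ pieces reproduce a term of the same form) then yields $\big\|(\N+1)^{j/2}\big(\U^*(t)\phi(f)\U(t)-\wU^*(t)\phi(f)\wU(t)\big)\Omega\big\|\le CN^{-1}\|f\|_2e^{Kt}$. The principal obstacle is exactly this extraction of the second $N^{-1/2}$ from the commutator term, for which the detailed form \eqref{L3} of $\cL_3$ and the parity structure of $\Omega$ are indispensable.
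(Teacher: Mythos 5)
Your opening move — split the difference as
$\bigl(\U^*(t)-\wU^*(t)\bigr)\phi(f)\cdot\ + \cdot\phi(f)\bigl(\U(t)-\wU(t)\bigr)$
and apply Duhamel to pull out a single $\cL_3$ — is exactly the paper's first step (the paper uses the slightly different but equivalent telescoping $\R_1=(\U^*-\wU^*)\phi(f)\wU$, $\R_2=\U^*\phi(f)(\U-\wU)$). The paper then \emph{stops}: it simply bounds the resulting integrand by stacking Lemma~\ref{U conserves N} for $\U^*(s;0)$, Lemma~\ref{L3 estimate} for $\cL_3(s)$ (giving the single factor $N^{-1/2}$), Lemma~\ref{U_tilde conserves N} for the $\wU$'s, and $\|\phi(f)\psi\|\le 2\|f\|_2\|(\N+1)^{1/2}\psi\|$, finishing on $\Omega$ with $(\N+1)^m\Omega=\Omega$. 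This yields $C\|f\|_2 e^{Kt}/\sqrt N$. Note that the proof in the paper actually concludes with $\sqrt N$, not $N$, in the denominator; the statement's ``$N$'' is a misprint, and only $N^{-1/2}$ is in fact needed in Lemma~\ref{E_2 estimate}, since an additional $N^{-1/2}$ there is already contributed by $\frac{d_N}{\sqrt N}\cdot\frac{1}{d_N}$ from the Weyl-state estimate.

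You instead iterate Duhamel a second time to chase a genuine $N^{-1}$ bound, arriving at a leading commutator term
$i\int_0^t\wU^*(s)\bigl[\cL_3(s),\wU^*(t;s)\phi(f)\wU(t;s)\bigr]\wU(s)\Omega\,ds$
plus a double-$\cL_3$ remainder. The algebra assembling the commutator is correct, and the remainder is indeed $O(N^{-1})$ by the same lemma stack. The gap is in the commutator term. You propose to extract a further $N^{-1/2}$ by combining (a) the approximation $\wU^*(t;s)\phi(f)\wU(t;s)\approx\phi(f_{t,s})+O(N^{-1})$ and (b) a parity argument on $\wU(s)\Omega$. Item~(a) is not established (the $\cL_4$ contribution does not obviously preserve the linear form $\phi(\cdot)$ up to a controllable error). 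More seriously, item~(b) does not give a size gain: $\wU(s)\Omega$ is \emph{exactly} even, $[\cL_3(s),\phi(g)]$ is parity-preserving, and $\wU^*(s)$ preserves parity, so the object you are estimating sits entirely in the even sectors with no smallness attached. In the paper, the $\sqrt N$ gain from parity comes from a different place entirely — Lemma~\ref{odd estimate for W state} shows the odd-sector components of $W^*(\sqrt N\varphi)\frac{(a^*(\varphi))^N}{\sqrt{N!}}\Omega$ are $O(N^{-1/2})$ smaller than the even ones — and it is harvested in the \emph{other} summand of $E_t^2$, the $\wU^*\phi\wU$ piece, not in $\R$. Carrying that idea into the commutator term here is not supported by the structure you describe. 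So your first step is right and would already close the argument as the paper does; the elaborate second step is both unnecessary and, as sketched, not correct.
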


\begin{proof}
Let
\begin{equation}
\R_1 (f) := \left( \U^*(t) - \wU^*(t) \right) \phi(f) \wU(t)
\end{equation}
and
\begin{equation}
\R_2 (f) := \U^*(t) \phi(f) \left( \U(t) - \wU(t) \right)
\end{equation}
so that
\begin{equation}
\U^*(t) \phi(f) \U(t) - \wU^*(t) \phi(f) \wU(t) = \R_1 (f) + \R_2 (f).
\end{equation}
Then, from Lemma \ref{U_tilde conserves N}, Lemma \ref{U conserves N}, and Lemma \ref{L3 estimate}, we find that
\begin{equation} \begin{split}
&\left\|(\N+1)^{j/2} \R_1 (f) \Omega \right\| = \left\| \int_0^t ds \; (\N+1)^{j/2} \U^*(s;0) \cL_3 (s) \wU^*(t;s) \phi(f) \wU(t) \Omega \right\| \\
&\leq \int_0^t ds \left\| (\N+1)^{j/2} \U^*(s;0) \cL_3 (s) \wU^*(t;s) \phi(f) \wU(t) \Omega \right\| \\
&\leq Ce^{Kt} \int_0^t ds \left\| (\N+1)^{j+1} \cL_3 (s) \wU^*(t;s) \phi(f) \wU(t) \Omega \right\| \\
&\leq \frac{Ce^{Kt}}{\sqrt{N}} \int_0^t ds \left\| (\N+1)^{j+(5/2)} \phi(f) \wU(t) \Omega \right\| \\
&\leq \frac{Ce^{Kt}}{\sqrt{N}} \left\| (\N+1)^{j+(5/2)} \phi(f) \wU(t) \Omega \right\|
\end{split} \end{equation}
Thus, we can get the following bound for $\R_1 (f)$.
\begin{equation} \begin{split}
& \|(\N+1)^{j/2} \R_1 (f) \Omega \| \\
&\leq \frac{Ce^{Kt}}{\sqrt{N}} \left( \| a(f) (\N+1)^{j+(5/2)} \wU(t) \Omega \| + \| a^*(f) (\N+1)^{j+(5/2)} \wU(t) \Omega \| \right) \\
&\leq \frac{C \| f \|_2 e^{Kt}}{\sqrt{N}} \| (\N+1)^{j+3} \wU(t) \Omega \| \leq \frac{C \| f \|_2 e^{Kt}}{\sqrt{N}} \| (\N+1)^{j+(5/2)} \Omega \| \\
&\leq \frac{C \| f \|_2 e^{Kt}}{\sqrt{N}}.
\end{split} \end{equation}
The study of $\R_2 (f)$ is similar and gives
\begin{equation}
\|(\N+1)^{j/2} \R_2 (f) \Omega \| \leq \frac{C \| f \|_2 e^{Kt}}{\sqrt{N}}.
\end{equation}
Therefore,
\begin{equation} \begin{split}
& \left\|(\N+1)^{j/2} \left( \U^*(t) \phi(f) \U(t) - \wU^*(t) \phi(f) \wU(t) \right) \Omega \right\| \\
&\leq \left\|(\N+1)^{j/2} \R_1 (f) \Omega \right\| + \left\|(\N+1)^{j/2} \R_2 (f) \Omega \right\| \\
&\leq \frac{C \| f \|_2 e^{Kt}}{\sqrt{N}},
\end{split} \end{equation}
which was to be proved.
\end{proof}

In Section \ref{sec:weyl}, we will prove the following estimates:

There exists a constant $C > 0$ such that, for any $\varphi \in L^2 (\bR^3)$ with $\| \varphi \|_2 = 1$, we have
\begin{equation} \label{W estimate general}
\left \| (\N+1)^{-1/2} W^* (\sqrt{N}\varphi) \frac{\big(a^* (\varphi) \big)^N}{\sqrt{N!}} \Omega \right \| \leq \frac{C}{d_N}.
\end{equation}
Moreover, for all non-negative integers $k \leq (1/2) N^{1/3}$,
\begin{equation} \label{W estimate odd}
\left \| P_{2k+1} W^*(\sqrt{N}\varphi) \frac{\big(a^* (\varphi) \big)^N}{\sqrt{N!}} \Omega \right \| \leq \frac{2 (k+1)^{3/2}}{d_N \sqrt{N}}.
\end{equation}

We are now ready to prove Lemmas \ref{E_1 estimate} and \ref{E_2 estimate}.

\begin{proof}[Proof of Lemma \ref{E_1 estimate}]
We first observe that
\begin{equation} \begin{split}
&| E_t^1 (J) | = \left| \frac{d_N}{N} \left \langle W^*(\sqrt{N} \varphi) \frac{(a^*(\varphi))^N}{\sqrt{N!}} \Omega, \U^*(t) d\Gamma (J) \U(t) \Omega \right\rangle \right| \\
&\leq \frac{d_N}{N} \left\| (\N+1)^{-1/2} W^*(\sqrt{N} \varphi) \frac{(a^*(\varphi))^N}{\sqrt{N!}} \Omega \right\| \left \| (\N+1)^{1/2} \U^*(t) d\Gamma (J) \U(t) \Omega \right\|.
\end{split} \end{equation}
From the estimate \eqref{W estimate general}, we have
\begin{equation}
\left \| (\N+1)^{-1/2} W^* (\sqrt{N}\varphi) \frac{\big(a^* (\varphi) \big)^N}{\sqrt{N!}} \Omega \right \| \leq \frac{C}{d_N}.
\end{equation}
Lemma \ref{U conserves N} shows that
\begin{equation} \begin{split}
& \| (\N+1)^{1/2} \U^*(t) d\Gamma (J) \U(t) \Omega \| \leq C e^{Kt} \| (\N+1)^2 d\Gamma (J) \U(t) \Omega \| \\
&\leq C \| J \| e^{Kt} \| (\N+1)^3 \U(t) \Omega \| \leq C \| J \| e^{Kt} \| (\N+1)^7 \Omega \| = C \| J \| e^{Kt}
\end{split} \end{equation}
Thus, we obtain
\begin{equation}
| E_t^1 (J) | \leq \frac{C \| J \| e^{Kt}}{N},
\end{equation}
which proves the desired lemma.
\end{proof}

\begin{proof}[Proof of Lemma \ref{E_2 estimate}]
Let
\begin{equation}
\R(J \varphi_t) := \U^*(t) \phi(J \varphi_t) \U(t) - \wU^*(t) \phi(J \varphi_t) \wU(t).
\end{equation}
From the parity, we have that $P_{2k} \wU^*(t) \phi (J \varphi_t) \wU(t) \Omega = 0$ for any $k=0, 1, \cdots$. (See Lemma \ref{parity conservation of U tilde}.) Thus, we have
\begin{equation} \begin{split}
&| E_t^2 (J) | = \frac{d_N}{\sqrt{N}} \left \langle \frac{(a^*(\varphi))^N}{\sqrt{N!}} \Omega,
W(\sqrt{N} \varphi) \wU^*(t) \phi (J \varphi_t) \wU(t) \Omega \right \rangle \\
& \qquad \qquad + \frac{d_N}{\sqrt{N}} \left \langle \frac{(a^*(\varphi))^N}{\sqrt{N!}} \Omega, W(\sqrt{N} \varphi) \R(J \varphi_t) \Omega \right \rangle \\
&\leq \frac{d_N}{\sqrt{N}} \left\| \sum_{k=1}^{\infty} (\N+1)^{-5/2} P_{2k-1} W^* (\sqrt{N}\varphi) \frac{\big(a^* (\varphi) \big)^N}{\sqrt{N!}} \Omega \right\| \\
& \qquad \qquad \times \left\| (\N+1)^{5/2} \wU^*(t) \phi (J \varphi_t) \wU(t) \Omega \right\| \\
& \quad + \frac{d_N}{\sqrt{N}} \left \| (\N+1)^{-1/2} W^* (\sqrt{N}\varphi) \frac{\big(a^* (\varphi) \big)^N}{\sqrt{N!}} \Omega \right\| \\
& \qquad \qquad \left\| (\N+1)^{1/2} \R(J \varphi_t) \Omega \right\|  
\end{split} \end{equation}

Let $M := (1/2) N^{1/3}$. We have from the estimate \eqref{W estimate odd} that
\begin{equation} \begin{split}
&\left\| \sum_{k=1}^{\infty} (\N+1)^{-5/2} P_{2k-1} W^* (\sqrt{N}\varphi) \frac{\big(a^* (\varphi) \big)^N}{\sqrt{N!}} \Omega \right\|^2 \\
&\leq \sum_{k=1}^{M} \left\| (\N+1)^{-5/2} P_{2k-1} W^* (\sqrt{N}\varphi) \frac{\big(a^* (\varphi) \big)^N}{\sqrt{N!}} \Omega \right\|^2 \\
&\quad + \frac{1}{M^5} \sum_{k=M}^{\infty} \left\| P_{2k-1} W^* (\sqrt{N}\varphi) \frac{\big(a^* (\varphi) \big)^N}{\sqrt{N!}} \Omega \right\|^2 \\
&\leq \left( \sum_{k=1}^{M} \frac{C}{k^2 d_N^2 N} \right) + \frac{C}{N^{5/3}} \left\| W^* (\sqrt{N}\varphi) \frac{\big(a^* (\varphi) \big)^N}{\sqrt{N!}} \Omega \right\|^2 \leq \frac{C}{d_N^2 N}.
\end{split} \end{equation}
Lemma \ref{U_tilde conserves N} shows that
\begin{equation} \begin{split}
&\| (\N+1)^{5/2} \wU^*(t) \phi (J \varphi_t) \wU(t) \Omega \| \leq C e^{Kt} \| (\N+1)^{5/2} \phi (J \varphi_t) \wU(t) \Omega \| \\
&\leq C \|J \varphi_t \|_2 e^{Kt} \| (\N+1)^3 \wU(t) \Omega \|^2 \leq C \| J \| e^{Kt} \| (\N+1)^3 \Omega \|^2 = C \| J \| e^{Kt}.
\end{split} \end{equation}
We find from \eqref{W estimate general} that
\begin{equation}
\left \| (\N+1)^{-1/2} W^* (\sqrt{N}\varphi) \frac{\big(a^* (\varphi) \big)^N}{\sqrt{N!}} \Omega \right \| \leq \frac{C}{d_N}.
\end{equation}
Finally, Lemma \ref{estimate for the difference between U and U_tilde} shows that
\begin{equation}
\| (\N+1)^{1/2} \R(J \varphi_t) \Omega \| \leq \frac{C \| J \varphi_t \|_2 e^{Kt}}{N} \leq \frac{C \| J \| e^{Kt}}{N}.
\end{equation}

Therefore, 
\begin{equation}
| E_t^2 (J) | \leq \frac{C \| J \| e^{Kt}}{N},
\end{equation}
which was to be proved.
\end{proof}

\section{Properties of Regularized Dynamics} \label{sec:dyn}

In this section, we prove various lemmas, which allows us to use the regularized dynamics instead of the full dynamics.

\begin{lem} \label{regularization lemma 2}
Let $\psi_N = \varphi^{\otimes N}$ for some $\varphi \in H^1 (\bR^3)$ with $\| \varphi \| = 1$. Let $\psi_{N,t} = e^{-iH_N t} \psi_N$ and $\psia_{N,t} = e^{-i \ha_N t} \psi_N$. If $\lambda < \lambda_{crit}^H$, then there exists a constant $C > 0$ and $N_0$ such that, for all $t \in \bR$ and for any positive integer $N > N_0$,
\begin{equation}
\sum_{i<j}^N \langle \psi_{N, t}, (1-\Delta_i)^{1/2} (1-\Delta_j)^{1/2} \psi_{N, t} \rangle \leq C N^3
\end{equation}
and
\begin{equation}
\sum_{i<j}^N \langle \psia_{N, t}, (1-\Delta_i)^{1/2} (1-\Delta_j)^{1/2} \psia_{N, t} \rangle \leq C N^3.
\end{equation}
\end{lem}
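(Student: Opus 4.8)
The plan is to deduce both bounds from the single a priori estimate
$$\langle \psi_{N,t}, \mathcal{T}^2 \psi_{N,t}\rangle \leq C N^3,$$
with $\mathcal{T} := \sum_{j=1}^N (1-\Delta_j)^{1/2}$, together with its analogue for $\psia_{N,t}$ and $\ha_N$. This suffices because $\mathcal{T}^2 = \sum_j (1-\Delta_j) + 2\sum_{i<j}(1-\Delta_i)^{1/2}(1-\Delta_j)^{1/2}$ and the diagonal part is non-negative, so $\sum_{i<j}(1-\Delta_i)^{1/2}(1-\Delta_j)^{1/2} \leq \tfrac12\mathcal{T}^2$.

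First I would use conservation of energy: both $\langle \psi_{N,t}, H_N \psi_{N,t}\rangle$ and $\langle \psi_{N,t}, H_N^2 \psi_{N,t}\rangle$ are independent of $t$, the latter equal to $\|H_N \psi_N\|^2$. Evaluating on $\psi_N = \varphi^{\otimes N}$ and using $\varphi \in H^1$ together with Hardy's inequality $\int |x-y|^{-2}|\varphi(x)|^2|\varphi(y)|^2\,dx\,dy \leq 4\|\nabla\varphi\|_2^2$ gives $\|H_N\varphi^{\otimes N}\| \leq CN$, hence $\langle \psi_{N,t}, H_N^2 \psi_{N,t}\rangle \leq C N^2$ uniformly in $t$; the same computation applies to $\ha_N$ (the cutoff only decreases the potential), uniformly in $t$ and in $\alpha_N$. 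Next I would invoke sub-critical coercivity: because $\lambda < \lambda_{crit}^H$ there are $\delta \in (0,1)$ and $C_0 > 0$, independent of $N$ for $N > N_0$, with $\tfrac{\lambda}{N}\sum_{i<j}|x_i-x_j|^{-1} \leq (1-\delta)\mathcal{T} + C_0 N$ as quadratic forms on $H^{1/2}(\bR^{3N})$ — for $\lambda < 4/\pi$ this is immediate from Kato's inequality $|x|^{-1} \leq \tfrac{\pi}{2}|\nabla|$, while for $4/\pi \leq \lambda < \lambda_{crit}^H$ it rests on the $N$-body energy estimate of Lieb and Yau \cite{LY} — and the same holds with $|x_i-x_j|$ replaced by $|x_i-x_j|+\alpha_N$. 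Combined with the conserved energy this already yields the first-moment bound $\langle \psi_{N,t}, \mathcal{T}\psi_{N,t}\rangle \leq CN$, and likewise for $\psia_{N,t}$.

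For the second moment I would write $\mathcal{T} = H_N + |V_N|$ with $|V_N| := \tfrac{\lambda}{N}\sum_{i<j}|x_i-x_j|^{-1} \geq 0$, so that $\langle\psi_{N,t},\mathcal{T}^2\psi_{N,t}\rangle^{1/2} = \|\mathcal{T}\psi_{N,t}\| \leq \|H_N\psi_N\| + \langle\psi_{N,t}, V_N^2\psi_{N,t}\rangle^{1/2}$, and then estimate $\langle\psi_{N,t}, V_N^2\psi_{N,t}\rangle$ by expanding $V_N^2 = \tfrac{\lambda^2}{N^2}\big(\sum_{i<j}|x_i-x_j|^{-1}\big)^2$ and splitting the double sum according to whether the two index pairs coincide, share exactly one index, or are disjoint. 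Using $|x_i-x_j|^{-2} \leq C(1-\Delta_i)^{1/2}(1-\Delta_j)^{1/2}$ (Lemma 9.1 of \cite{ES}), $|x_i-x_j|^{-1} \leq C(1-\Delta_i)^{1/2}$ (Kato), and the permutation symmetry of $\psi_{N,t}$, each term is controlled by a multiple of $\langle\psi_{N,t}, (1-\Delta_1)^{1/2}(1-\Delta_2)^{1/2}\psi_{N,t}\rangle \leq \binom{N}{2}^{-1}\tfrac12\langle\psi_{N,t},\mathcal{T}^2\psi_{N,t}\rangle$; the coinciding-pair and shared-index contributions carry an extra factor $1/N$, while the disjoint-pair contribution is $\leq (a+o(1))\langle\psi_{N,t},\mathcal{T}^2\psi_{N,t}\rangle$ for a constant $a=a(\lambda)$. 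This produces a self-consistent inequality $\langle\psi_{N,t},\mathcal{T}^2\psi_{N,t}\rangle \leq C N^2 + (a+o(1))\langle\psi_{N,t},\mathcal{T}^2\psi_{N,t}\rangle$, which can be absorbed once $a < 1$. The regularized flow is treated identically, using $|x_i-x_j|+\alpha_N \geq |x_i-x_j|$ (hence $(V_N^\alpha)^2 \leq V_N^2$) throughout.

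The \emph{main obstacle} is to make the self-consistency constant strictly below $1$ over the entire sub-critical range $\lambda < \lambda_{crit}^H$. This is a genuinely semi-relativistic difficulty: $V_N$ is only relatively form-bounded, not operator-bounded, with respect to $\mathcal{T}$, so — unlike in the non-relativistic Coulomb case, where Hardy's inequality gives an operator bound and one concludes directly from the triangle inequality — one cannot pass from the linear estimate of the previous paragraph to the quadratic one, and the disjoint-pair terms in $V_N^2$ are intrinsically of the same order as $\langle\mathcal{T}^2\rangle$. For $\lambda < 4/\pi$, Kato's inequality applied carefully to the two-body factors already supplies $a<1$; for $4/\pi \leq \lambda < \lambda_{crit}^H$ one must instead use the sharp form of the Lieb-Yau $N$-body estimate, whose localization error is precisely what forces the final bound to be $CN^3$ rather than the $CN^2$ that the argument would otherwise yield.
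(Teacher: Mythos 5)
Your plan rests on establishing an operator inequality of the form $V_N^2 \leq a\,\mathcal{T}^2 + CN^2$ with $a<1$, so that $\langle\psi_{N,t},\mathcal{T}^2\psi_{N,t}\rangle$ can be closed by a self-consistency argument. This is exactly the step that does not go through, and the difficulty you flag at the end is not resolved by the proposal. The Lieb--Yau sub-critical estimate is a \emph{linear} lower bound on $H_N$ (equivalently a form bound $V_N \leq (1-\delta)\mathcal{T} + CN$), and there is no mechanism to ``square'' it: applying it to a \emph{quadratic} expression such as $V_N^2$ or $\langle V_N^2\rangle$ gives nothing of the required shape. Moreover, even the concrete Kato estimate you invoke for the disjoint-pair terms gives $V_{ij}V_{kl}\leq \frac{\pi^2\lambda^2}{4}S_iS_k$ after carefully multiplying the (commuting) operator inequalities, so the resulting constant is $a\sim\pi^2\lambda^2/4$, which is $<1$ only for $\lambda<2/\pi$, not for all $\lambda<4/\pi$ — so the perturbative half of the argument is also weaker than claimed. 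The passage ``use the sharp form of the Lieb--Yau $N$-body estimate, whose localization error forces the final bound to be $CN^3$'' names the right theorem but not a proof: no localization error appears in the Lieb--Yau bound in a form that feeds into your self-consistency scheme.

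The paper takes a different route that avoids $\langle V_N^2\rangle$ altogether. Writing $H_N = H_N^{(N-1)} + S_N - W_N$ with $H_N^{(N-1)}$ the $N$-scaled Hamiltonian of the first $N-1$ particles, $S_N=(1-\Delta_N)^{1/2}$, and $W_N=\frac{1}{N}\sum_{j<N}V_{jN}$, one expands $H_N^2$ and drops the positive term $(H_N^{(N-1)}-W_N)^2$. The crucial cross term $H_N^{(N-1)}S_N = S_N^{1/2}H_N^{(N-1)}S_N^{1/2}$ (legitimate since $[H_N^{(N-1)},S_N]=0$) is estimated from below by inserting the \emph{linear} Lieb--Yau lower bound $H_N^{(N-1)}\geq H_{N-1}\geq -\eta^{-1}MN + (1-\eta^{-1})\sum_{j<N}S_j$ between the two $S_N^{1/2}$ factors. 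After using $C_0 S_jS_N\geq V_{jN}^2$ and a Schwarz step to absorb $W_N$ into a completed square $(S_N-W_N)^2\geq 0$, one arrives at the operator inequality $H_N^2 + 2MNS_N \geq (1-\eta^{-1})\sum_{j<N}S_jS_N$; summing the analogous inequalities over the distinguished particle index yields $NH_N^2 + 2MN\sum_j S_j \geq (1-\eta^{-1})\sum_{i\neq j}S_iS_j$, and evaluating on $\psi_{N,t}$ using conservation of $\langle H_N\rangle\leq CN$ and $\langle H_N^2\rangle\leq CN^2$ gives $CN^3$. The extra power of $N$ on the right comes from this sum over the distinguished index, not from any localization error. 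This is the ingredient missing from your argument: the Lieb--Yau bound enters only linearly, threaded through the square root of the commuting one-body operator $S_N$, so one never has to control a genuinely quadratic expression in $V_N$.
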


\begin{proof}
Let
\begin{equation}
S_j = (1-\Delta_j)^{1/2}, \qquad V_{ij} = \frac{\lambda}{|x_i - x_j|},
\end{equation}
so that
\begin{equation}
H_N = \sum_{j=1}^N S_j - \frac{1}{N} \sum_{i<j}^N V_{ij}.
\end{equation}

We first consider the operator
\begin{equation}
H_{N-1} = \sum_{j=1}^{N-1} S_j - \frac{1}{N-1} \sum_{i<j}^{N-1} V_{ij}.
\end{equation}
Let $\eta = (\lambda_{crit}^H / \lambda )^{1/2}$ so that $\eta > 1$ and $\lambda \eta < \lambda_{crit}^H$. Then,
\begin{equation}
H_{N-1} = \eta^{-1} \left( (\eta-1) \sum_{j=1}^{N-1} S_j + \sum_{j=1}^{N-1} S_j - \frac{\lambda \eta}{N-1} \sum_{i<j}^{N-1} V_{ij} \right).
\end{equation}
When $N$ is sufficiently large, we have the following operator inequality
\begin{equation}
\sum_{j=1}^{N-1} S_j - \frac{\lambda \eta}{N-1} \sum_{i<j}^{N-1} V_{ij} \geq -M(N-1)
\end{equation}
for some $M \geq 0$. (See Theorem 1 of \cite{LY}.) Thus,
\begin{equation} \label{H lower bound}
H_{N-1} \geq - \eta^{-1}MN + (1 - \eta^{-1}) \sum_{j=1}^{N-1} S_j.
\end{equation}

Let
\begin{equation}
H_N^{(N-1)} = \sum_{j=1}^{N-1} S_j - \frac{1}{N} \sum_{i<j}^{N-1} V_{ij}.
\end{equation}
We consider the operator
\begin{equation} \begin{split}
&H_N^2 = \left( H_N^{(N-1)} + S_N - \frac{1}{N} \sum_{j=1}^{N-1} V_{jN} \right)^2 \\
&= \left( H_N^{(N-1)} - \frac{1}{N} \sum_{j=1}^{N-1} V_{jN} \right)^2 + S_N^2 + 2 H_N^{(N-1)} S_N \\
&\qquad \qquad - S_N \left( \frac{1}{N} \sum_{j=1}^{N-1} V_{jN} \right) - \left( \frac{1}{N} \sum_{j=1}^{N-1} V_{jN} \right) S_N,
\end{split} \end{equation}
where we used that $[H_N^{(N-1)}, S_N] = 0$. Now, we find that
\begin{equation} \label{H^2 lower bound 1}
H_N^2 \geq S_N^2 + 2 H_N^{(N-1)} S_N - S_N \left( \frac{1}{N} \sum_{j=1}^{N-1} V_{jN} \right) - \left( \frac{1}{N} \sum_{j=1}^{N-1} V_{jN} \right) S_N.
\end{equation}
Since
\begin{equation}
H_N^{(N-1)} \geq H_{N-1} \geq - \eta^{-1}MN + (1 - \eta^{-1}) \sum_{j=1}^{N-1} S_j,
\end{equation}
we have that
\begin{equation} \begin{split}
&H_N^{(N-1)} S_N = S_N^{1/2} H_N^{(N-1)} S_N^{1/2} \geq S_N^{1/2} \left( - \eta^{-1}MN + (1 - \eta^{-1}) \sum_{j=1}^{N-1} S_j \right) S_N^{1/2} \\
&\geq - \eta^{-1}MN S_N + (1 - \eta^{-1}) \sum_{j=1}^{N-1} S_j S_N.
\end{split} \end{equation}
Let $C_0$ be a constant satisfying the operator inequality
\begin{equation}
C_0 S_j S_N \geq V_{jN}^2,
\end{equation}
and choose $N_1$ large so that $(1 - \eta^{-1}) \geq C_0 N_1^{-1}$. Then, for all $N > N_1$,
\begin{equation} \begin{split} \label{H^2 lower bound 2}
& 2 H_N^{(N-1)} S_N + 2MN S_N \\
&\geq 2 (1 - \eta^{-1}) \sum_{j=1}^{N-1} S_j S_N \geq (1 - \eta^{-1}) \sum_{j=1}^{N-1} S_j S_N + \frac{1}{N} \sum_{j=1}^{N-1} V_{jN}^2 \\
&\geq (1 - \eta^{-1}) \sum_{j=1}^{N-1} S_j S_N + \left( \frac{1}{N} \sum_{j=1}^{N-1} V_{jN} \right)^2,
\end{split} \end{equation}
where the last inequality comes from the Schwarz inequality. Hence, we obtain from \eqref{H^2 lower bound 1} and \eqref{H^2 lower bound 2} that
\begin{equation} \begin{split}
& H_N^2 + 2MN S_N \geq (1 - \eta^{-1}) \sum_{j=1}^{N-1} S_j S_N + \left( S_N - \frac{1}{N} \sum_{j=1}^{N-1} V_{jN} \right)^2 \\
&\geq (1 - \eta^{-1}) \sum_{j=1}^{N-1} S_j S_N.
\end{split} \end{equation}
Similarly, for any $1 \leq j \leq N$,
\begin{equation} \label{H^2 lower bound 3}
H_N^2 + 2MN S_j \geq (1 - \eta^{-1}) \sum_{i:i \neq j}^{N} S_i S_j.
\end{equation}
Thus, summing \eqref{H^2 lower bound 3} over $j$, we get
\begin{equation} \label{H^2 lower bound 4}
N H_N^2 + 2MN \sum_{j=1}^N S_j \geq (1 - \eta^{-1}) \sum_{i \neq j}^{N} S_i S_j.
\end{equation}

For the operator $H_N$, similarly to \eqref{H lower bound}, we have
\begin{equation}
H_N \geq - \eta^{-1}MN + (1 - \eta^{-1}) \sum_{j=1}^{N} S_j,
\end{equation}
thus,
\begin{equation}
(1 - \eta^{-1})^{-1} H_N + (\eta - 1)^{-1} MN \geq \sum_{j=1}^{N} S_j.
\end{equation}
Together with \eqref{H^2 lower bound 4}, we have shown that
\begin{equation} \label{H^2 bound 1}
\frac{\eta N}{\eta-1} H_N^2 + 2 (\frac{\eta}{\eta -1} )^2 M N H_N + \frac{2 \eta M^2 N^2}{(\eta -1)^2} \geq \sum_{i \neq j}^{N} S_i S_j.
\end{equation}

Since $H_N$ and $H_N^2$ have the upper bounds
\begin{equation} \label{H upper bound}
H_N \leq \sum_{j=1}^{N} S_j
\end{equation}
and
\begin{equation} \begin{split} \label{H^2 upper bound}
& H_N^2 = \left( \sum_{j=1}^N S_j - \frac{1}{N} \sum_{i<j}^N V_{ij} \right)^2 \leq 2 \left( \sum_{j=1}^N S_j \right)^2 + \frac{2}{N^2} \left( \sum_{i<j}^N V_{ij} \right)^2 \\
&\leq 2N \sum_{j=1}^N S_j^2 + \frac{N-1}{N} \sum_{i<j}^N V_{ij}^2 \leq CN \sum_{j=1}^N S_j^2,
\end{split} \end{equation}
respectively, we have that
\begin{equation} \label{H^2 bound 2}
\langle \psi_{N, t}, H_N \psi_{N, t} \rangle = \langle \varphi^{\otimes N}, H_N \varphi^{\otimes N} \rangle \leq \langle \varphi^{\otimes N}, \sum_{j=1}^{N} S_j \varphi^{\otimes N} \rangle \leq C N \| \varphi \|_{H^{1/2}}^2
\end{equation}
and
\begin{equation} \label{H^2 bound 3}
\langle \psi_{N, t}, H_N^2 \psi_{N, t} \rangle = \langle \varphi^{\otimes N}, H_N^2 \varphi^{\otimes N} \rangle \leq CN \langle \varphi^{\otimes N}, \sum_{j=1}^{N} S_j^2 \varphi^{\otimes N} \rangle \leq CN^2 \| \varphi \|_{H^1}^2.
\end{equation}
Therefore, from \eqref{H^2 bound 1}, \eqref{H^2 bound 2}, and \eqref{H^2 bound 3}, we find
\begin{equation}
\sum_{i<j}^N \langle \psi_{N, t}, S_i S_j \psi_{N, t} \rangle \leq C N^3,
\end{equation}
which proves the first part of the lemma. The second part of the lemma can be proved analogously.
\end{proof}

We consider the regularized semi-relativistic Hartree equation \eqref{Hartree equation with cutoff} given by
\begin{equation}
i \partial_t \ph_t^{\alpha} = (1 - \Delta)^{1/2} \ph_t^{\alpha} - \lambda \left( \frac{1}{| \cdot | + \alpha_N} * |\ph_t^{\alpha}|^2 \right) \ph_t^{\alpha},
\end{equation}
and study properties of the solution of \eqref{Hartree equation with cutoff}. 

The following results will be used in the proof of Proposition \ref{Hartree solution difference}:

\begin{lem}[Generalized Leibniz Rule] \label{Leibniz}
Suppose that $1 < p < \infty$, $s \geq 0$, $\alpha \geq 0$, $\beta \geq 0$, and $1/p_i + 1/q_i = 1/p$ with $i=1, 2$, $1 < q_1 \leq \infty$, $1 < p_2 \leq \infty$. Then
\begin{equation} \begin{split}
&\| (-\Delta)^{s/2} (fg) \|_p \\
&\leq C \left( \| (-\Delta)^{(s+\alpha)/2} f \|_{p_1} \| (-\Delta)^{\alpha/2} g \|_{q_1} + \| (-\Delta)^{\beta/2} f \|_{p_2} \| (-\Delta)^{(s+\beta)/2} g \|_{q_2} \right),
\end{split} \end{equation}
where the positive constant $C$ depends on all of the parameters above but not on $f$ and $g$.
\end{lem}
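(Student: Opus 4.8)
The plan is to prove this as a standard fractional Leibniz (Kato--Ponce type) inequality by the Littlewood--Paley / Bony paraproduct method; for the Hartree application only the case $\alpha=\beta=0$ is actually needed, but the general statement follows from the same machinery. First I would fix a homogeneous Littlewood--Paley family $1=\sum_{j\in\mathbb{Z}}\Delta_j$ with $\Delta_j$ a Fourier multiplier supported in $\{|\xi|\sim 2^j\}$ and $S_k=\sum_{j\le k}\Delta_j$, and decompose the product into paraproducts,
\[
fg=\pi_f(g)+\pi_g(f)+R(f,g),\qquad \pi_f(g)=\sum_j (S_{j-2}f)(\Delta_j g),\quad R(f,g)=\sum_{|j-k|\le1}(\Delta_j f)(\Delta_k g),
\]
so that it suffices to bound the $p$-norm of $(-\Delta)^{s/2}$ applied to each of $\pi_f(g)$, $\pi_g(f)$, $R(f,g)$. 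The tools are: the Littlewood--Paley square-function characterisations $\|h\|_p\sim\|(\sum_l|\Delta_lh|^2)^{1/2}\|_p$ and $\|(-\Delta)^{\sigma/2}h\|_p\sim\|(\sum_l 2^{2l\sigma}|\Delta_lh|^2)^{1/2}\|_p$ (valid for $1<p<\infty$), the pointwise bound $|S_kh|\le C(Mh)$ by the Hardy--Littlewood maximal function $M$ (with the trivial $L^\infty$ replacement), the Fefferman--Stein vector-valued maximal inequality, and H\"older's inequality with the exponents $\tfrac1{p_i}+\tfrac1{q_i}=\tfrac1p$.

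For the high--low piece $\pi_g(f)$, each summand $(S_{k-2}g)(\Delta_kf)$ has Fourier support in $\{|\xi|\sim 2^k\}$, so by almost orthogonality $\|(-\Delta)^{s/2}\pi_g(f)\|_p$ is controlled by $\|(\sum_k 2^{2ks}|S_{k-2}g|^2|\Delta_kf|^2)^{1/2}\|_p$. To bring in $\alpha$ I would split the weight as $2^{ks}=2^{k(s+\alpha)}\cdot 2^{-k\alpha}$, keep $2^{k(s+\alpha)}$ with $\Delta_kf$ so the square function reproduces $(-\Delta)^{(s+\alpha)/2}f$, and dominate $2^{-k\alpha}S_{k-2}g$ by a weighted maximal function of $(-\Delta)^{\alpha/2}g$ (the weight being summable precisely because $\alpha\ge0$); H\"older with $(p_1,q_1)$ together with Fefferman--Stein and Littlewood--Paley then yields the first term on the right-hand side. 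The low--high piece $\pi_f(g)$ is handled identically with the roles of $f,g$, of $\alpha,\beta$, and of $(p_1,q_1),(p_2,q_2)$ interchanged, producing the second term. Here the hypotheses $q_1>1$ and $p_2>1$ are exactly what makes the maximal function bounded on the Lebesgue space carrying the low-frequency factor, and the borderline exponent is allowed to equal $\infty$ because one then uses $\|S_kh\|_\infty\le C\|h\|_\infty$ directly.

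The term that needs genuine care, and which I expect to be the main obstacle, is the high--high (diagonal) interaction $R(f,g)$: cancellation can push the output frequency far below the input frequencies, so for a fixed output block $\Delta_l$ only pairs with $j\ge l-2$, $|j-k|\le1$ contribute and $2^{ls}|\Delta_lR(f,g)|$ is at most a constant times $\sum_{j\ge l-2}2^{(l-j)s}\,2^{js}|\Delta_jf|\,|\widetilde\Delta_jg|$ with $\widetilde\Delta_j=\Delta_{j-1}+\Delta_j+\Delta_{j+1}$. Because $s\ge0$ the factors $2^{(l-j)s}$ are summable over $l\le j+2$, so Minkowski's inequality in $\ell^2$ followed by a pointwise bound of one factor by a maximal function reduces $R(f,g)$ to the same square-function/H\"older estimate as the paraproducts, with the $s$ derivatives (and, if desired, an extra $\alpha$ or $\beta$) placed on whichever factor is convenient; in particular $R(f,g)$ is dominated by either of the two right-hand terms. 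When $s=0$ no decomposition is needed: $R(f,g)$, like the whole product, is controlled directly by H\"older. All constants produced are finite and depend only on the dimension and on $p,p_i,q_i,s,\alpha,\beta$, as asserted; the argument is essentially that of Christ--Weinstein and Kato--Ponce, with the additional smoothness parameters treated as in Gulisashvili--Kon.
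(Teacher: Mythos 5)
The paper does not actually prove this lemma; the ``proof'' there is a one-line citation to Theorem~1.4 of Gulisashvili--Kon. Your Littlewood--Paley paraproduct argument is therefore a genuinely self-contained alternative, and for $\alpha=\beta=0$ --- which, as you correctly observe, is the only case invoked in the proof of Proposition~\ref{Hartree solution difference} (see \eqref{decomposition 1-1} and \eqref{decomposition 3-2}) --- it is essentially the standard Christ--Weinstein/Kato--Ponce argument and is sound: the square-function characterisations of $L^p$ and of $(-\Delta)^{\sigma/2}$, the pointwise domination of $S_{k-2}g$ by $Mg$ (or by $\|g\|_\infty$ when $q_1=\infty$), H\"older with $1/p_1+1/q_1=1/p$, Fefferman--Stein, and the absolutely summable tail $\sum_{l\le j+2}2^{(l-j)s}$ in the diagonal term $R(f,g)$ all go through for $1<p<\infty$, $s>0$, with the $s=0$ case being plain H\"older.

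There is, however, a real gap in the step meant to introduce $\alpha>0$, and it is tied to a sign issue in the statement itself. With the \emph{homogeneous} family $j\in\mathbb{Z}$ and $h=(-\Delta)^{\alpha/2}g$, one has $|\Delta_j g|\lesssim 2^{-j\alpha}Mh$, hence
\[
2^{-k\alpha}\,|S_{k-2}g|\;\lesssim\;\Bigl(\sum_{j\le k-2}2^{-(j+k)\alpha}\Bigr)\,Mh,
\]
and the bracketed sum diverges as $j\to-\infty$ exactly when $\alpha>0$, so ``the weight being summable precisely because $\alpha\ge0$'' is not correct in this sign convention. Consistently, the inequality as printed in the paper is in fact false for $\alpha>0$: taking $g$ a nonzero constant makes $(-\Delta)^{\alpha/2}g=0$ and $(-\Delta)^{(s+\beta)/2}g=0$, so the right-hand side vanishes while the left-hand side equals $|g|\,\|(-\Delta)^{s/2}f\|_p\neq0$. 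In Gulisashvili--Kon the operators are Bessel potentials $(1-\Delta)^{\sigma/2}$ and the auxiliary parameters enter as $-\alpha$ and $-\beta$ on the second factor in each term; with those corrections your low-frequency computation produces the geometric sum $\sum_{j\le k-2}2^{(j-k)\alpha}$, which \emph{is} summable for $\alpha\ge0$, the constant-function example is no longer a counterexample, and the rest of your argument closes. So the method is the right one and it does prove what is needed in this paper, but the $\alpha,\beta$ extension requires either the inhomogeneous calculus with Bessel potentials or a sign change, not the homogeneous Riesz setup you set up.
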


\begin{proof}
See Theorem 1.4 of \cite{GK}.
\end{proof}

\begin{lem}[Propagation of Regularity] \label{regularity}
Fix $s > 1/2$. Let $\varphi \in H^s (\bR^3)$ with $\| \ph \|_2 = 1$. Let $\ph_t$ and $\ph_t^{\alpha}$ denote the solutions of the semi-relativistic Hartree equations \eqref{Hartree equation} and {Hartree equation with cutoff}, respectively, with the initial condition $\ph_{t=0} = \ph$. Fix $T > 0$ such that
\begin{equation}
\kappa = \sup_{|t| \leq T} \| \varphi_t \|_{H^{1/2}} < \infty.
\end{equation}
Then, there exists a constant $\nu = \nu (\kappa, T, s, \| \ph \|_{H^s}) < \infty$ (but independent of $\alpha_N$) such that
\begin{equation}
\sup_{|t| \leq T} \| \ph_t \|_{H^s}, \sup_{|t| \leq T} \| \ph_t^{\alpha} \|_{H^s} \leq \nu.
\end{equation}
\end{lem}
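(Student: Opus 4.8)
The plan is to close a Grönwall estimate for $\|\varphi_t\|_{H^s}$ (and, afterwards, for $\|\varphi_t^{\alpha}\|_{H^s}$) by reducing everything to a single \emph{tame} bound for the Hartree nonlinearity, and then to reach an arbitrary $s$ from the base level $s=\tfrac12$ by a finite iteration in the regularity index. Since $e^{-it(1-\Delta)^{1/2}}$ is a unitary group on every $H^s$, the Duhamel formula for \eqref{Hartree equation} gives
\begin{equation}
\|\varphi_t\|_{H^s}\le\|\varphi\|_{H^s}+\lambda\int_0^{|t|}\big\|\big(|\cdot|^{-1}*|\varphi_\tau|^2\big)\varphi_\tau\big\|_{H^s}\,d\tau ,
\end{equation}
(equivalently one differentiates $\|\varphi_t\|_{H^s}^2$ and notes that the contribution of the self-adjoint operator $(1-\Delta)^{1/2}$ is purely imaginary, hence vanishes after taking the real part). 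So it is enough to prove, for $\tfrac12<s<\tfrac32$ and $\varphi\in H^s(\bR^3)$ with $\|\varphi\|_2=1$,
\begin{equation}\label{tame}
\big\|\big(|\cdot|^{-1}*|\varphi|^2\big)\varphi\big\|_{H^s}\le C(s,\lambda)\,\|\varphi\|_{H^{1/2}}^2\,\|\varphi\|_{H^s}:
\end{equation}
granting \eqref{tame}, Grönwall's lemma yields $\|\varphi_t\|_{H^s}\le\|\varphi\|_{H^s}e^{C|t|}$ with $C=C(s,\lambda,\kappa)$ for $s\in(\tfrac12,\tfrac32)$, and then one repeats the argument with $H^{1/2}$ replaced by a higher Sobolev norm that has just been controlled, stepping up past $s=\tfrac32$, $2$, $\dots$ in finitely many stages until the prescribed $s$ is reached (for $s\ge 2$ the nonlinear estimate costs a couple of extra fractional Leibniz steps but is otherwise the same).

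To prove \eqref{tame} I split $\|\cdot\|_{H^s}\sim\|\cdot\|_2+\|(-\Delta)^{s/2}\cdot\|_2$. For the $L^2$ part, Kato's inequality $|x|^{-1}\le\tfrac{\pi}{2}|\nabla|$ gives, by translation invariance, the pointwise bound $\big\||\cdot|^{-1}*|\varphi|^2\big\|_\infty\le\tfrac{\pi}{2}\big\||\nabla|^{1/2}\varphi\big\|_2^2\le\tfrac{\pi}{2}\|\varphi\|_{H^{1/2}}^2$, so $\big\|(|\cdot|^{-1}*|\varphi|^2)\varphi\big\|_2\le\tfrac{\pi}{2}\|\varphi\|_{H^{1/2}}^2\|\varphi\|_2$. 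For the $\dot H^s$ part I use the generalized Leibniz rule (Lemma \ref{Leibniz}) with $p=2$ and distribute the derivatives in the two natural ways. Putting all derivatives on $\varphi$ gives $\big\||\cdot|^{-1}*|\varphi|^2\big\|_\infty\|(-\Delta)^{s/2}\varphi\|_2$, again bounded by $\tfrac{\pi}{2}\|\varphi\|_{H^{1/2}}^2\|\varphi\|_{\dot H^s}$. Putting all derivatives on the potential gives $\big\|(-\Delta)^{s/2}(|\cdot|^{-1}*|\varphi|^2)\big\|_{p_1}\|\varphi\|_{q_1}$ with $\tfrac1{p_1}+\tfrac1{q_1}=\tfrac12$; since $(-\Delta)^{s/2}(|\cdot|^{-1}*|\varphi|^2)$ is a constant times the Riesz potential of order $2-s$ of $|\varphi|^2$ (here $0<2-s<3$), Hardy--Littlewood--Sobolev gives $\big\|(-\Delta)^{s/2}(|\cdot|^{-1}*|\varphi|^2)\big\|_{p_1}\le C\big\||\varphi|^2\big\|_m=C\|\varphi\|_{2m}^2$ with $\tfrac1m=\tfrac1{p_1}+\tfrac{2-s}{3}$, and the choice $q_1=\tfrac{6}{3-2s}$, $p_1=\tfrac{3}{s}$, $m=\tfrac32$ makes $\|\varphi\|_{q_1}\le C\|\varphi\|_{\dot H^s}$ (Sobolev embedding in $\bR^3$) and $\|\varphi\|_{2m}^2=\|\varphi\|_3^2\le C\|\varphi\|_{H^{1/2}}^2$, which is exactly \eqref{tame}. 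The bookkeeping of these four exponents — arranged so that the top norm $\|\varphi\|_{H^s}$ appears only to the first power while everything else is absorbed into $\|\varphi\|_{H^{1/2}}$ — is the one delicate point in the whole computation.

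For the regularized solution the same computation applies with $|\cdot|^{-1}$ replaced by $(|\cdot|+\alpha_N)^{-1}$: one has $0\le(|\cdot|+\alpha_N)^{-1}\le|\cdot|^{-1}$, hence $\big\|(|\cdot|+\alpha_N)^{-1}*|\varphi|^2\big\|_\infty\le\tfrac{\pi}{2}\|\varphi\|_{H^{1/2}}^2$ uniformly in $\alpha_N$, and $(-\Delta)^{s/2}$ commutes with the convolution so the Riesz-potential step is untouched; thus all constants stay independent of $\alpha_N$. The one extra ingredient needed here is an a priori bound $\sup_{|t|\le T}\|\varphi_t^{\alpha}\|_{H^{1/2}}\le2\kappa$ that is uniform in $\alpha_N$, which I would obtain by a continuity/bootstrap argument: on the largest subinterval of $[-T,T]$ on which $\|\varphi_t^{\alpha}\|_{H^{1/2}}\le2\kappa$, an $H^{1/2}$ energy estimate for $\varphi_t-\varphi_t^{\alpha}$ — using $\big||x|^{-1}-(|x|+\alpha_N)^{-1}\big|\le\alpha_N|x|^{-2}$, Hardy's inequality, and the higher-regularity bound on $\varphi_t$ already proved in the first part — yields $\|\varphi_t-\varphi_t^{\alpha}\|_{H^{1/2}}\le C(\kappa)\,\alpha_N^{c}\,e^{C(\kappa)|t|}$ for some $c>0$, which is $<\kappa$ once $N$ is large (recall $\alpha_N\le N^{-4}$), so the bootstrap closes. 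Since this lemma is used only as an input to Proposition \ref{Hartree solution difference} and never invokes it, there is no circularity. The main obstacle is therefore twofold: arranging \eqref{tame} so that the highest norm enters linearly (permitting Grönwall to close), and closing the uniform-in-$\alpha_N$ $H^{1/2}$ bound on $\varphi_t^{\alpha}$ without appealing to Proposition \ref{Hartree solution difference}.
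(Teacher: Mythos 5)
The paper defers this lemma entirely to Proposition 2.1 of Michelangeli--Schlein, so there is no in-text proof to compare against; your Gr\"onwall-plus-tame-estimate skeleton is the standard strategy and its unregularized base step $\tfrac12<s<\tfrac32$ is essentially correct. But the place where the lemma actually has content — uniformity in $\alpha_N$ — is exactly where you have a genuine gap. The assertion ``$(-\Delta)^{s/2}$ commutes with the convolution so the Riesz-potential step is untouched'' does not carry the weight you put on it: your Riesz-potential step for the unregularized kernel used the exact identity $(-\Delta)^{s/2}\,|\cdot|^{-1}=c\,|\cdot|^{-(1+s)}$, i.e.\ the fact that $|\cdot|^{-1}$ is the Green's function of $-\Delta$ on $\bR^3$. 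That is \emph{not} true of $(|\cdot|+\alpha_N)^{-1}$, and commutation of $(-\Delta)^{s/2}$ with convolution does not get you back there. What is needed — and what you never prove — is a pointwise kernel bound
\begin{equation}
\Bigl|(-\Delta)^{s/2}\bigl(|\cdot|+\alpha_N\bigr)^{-1}(x)\Bigr|\leq \frac{C}{(|x|+\alpha_N)^{1+s}}\leq \frac{C}{|x|^{1+s}}
\end{equation}
with $C$ independent of $\alpha_N$, after which comparison with the Riesz kernel and HLS close the $\dot H^s$ piece. The paper records precisely this, for $s=1/2$, in the inequality preceding \eqref{decomposition 4-2} (cited to Proposition~2.2 of [MS]); your argument needs its analogue for general $s\in(0,3)$, which is provable by scaling but is not ``untouched.'' Without it, the constant in your tame estimate for $\varphi^\alpha_t$ is not shown to be $\alpha_N$-independent, which is the entire point of the lemma.

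Two smaller but still real gaps: the iteration past $s=3/2$ is dispatched in a parenthetical even though the Sobolev embedding $\|\varphi\|_{6/(3-2s)}\lesssim\|\varphi\|_{\dot H^s}$ you rely on fails there and the Leibniz exponents must be redistributed (an induction where the previously controlled norm replaces $H^{1/2}$ is the right idea, but must be carried out); and the bootstrap for $\sup_{|t|\le T}\|\varphi^\alpha_t\|_{H^{1/2}}$ is a sound strategy (and correctly not circular), but the $H^{1/2}$ energy estimate for $\varphi_t-\varphi^\alpha_t$ that you invoke borrows Hardy-inequality steps — cf.\ \eqref{decomposition 3-3} and \eqref{decomposition 5-3} in the paper's proof of Proposition \ref{Hartree solution difference} — which require $H^1$ control of both solutions; for $\tfrac12<s<1$ as permitted by the statement, you have not shown this closes, and you should either restrict to $s\ge1$ (sufficient for the paper's applications) or find softer estimates.
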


\begin{proof}
See Proposition 2.1 of \cite{MS}.
\end{proof}

To prove Proposition \ref{Hartree solution difference}, we first consider the following a priori bound on the difference in $L^2$-norm:

\begin{lem} \label{Hartree solution difference L^2}
Suppose that the assumptions of Proposition \ref{Hartree solution difference} are satisfied. Then, there exist constants $C$ and $K$, depending only on $\lambda$, $\kappa$, $T$, and $\| \ph \|_{H^1}$, such that
\begin{equation}
\| \ph_t - \ph_t^{\alpha} \|_2 \leq C \alpha_N
\end{equation}
for all $|t| < T$.
\end{lem}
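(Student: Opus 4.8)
The plan is to run a Gronwall argument directly on the quantity $\delta_t := \| \ph_t - \ph_t^{\alpha} \|_2$. First I would write down the equation satisfied by the difference $w_t := \ph_t - \ph_t^{\alpha}$:
\begin{equation*}
i \partial_t w_t = (1-\Delta)^{1/2} w_t - \lambda \left( \frac{1}{|\cdot|} * |\ph_t|^2 \right) \ph_t + \lambda \left( \frac{1}{|\cdot| + \alpha_N} * |\ph_t^{\alpha}|^2 \right) \ph_t^{\alpha}.
\end{equation*}
Since $(1-\Delta)^{1/2}$ is self-adjoint, it contributes nothing to $\frac{d}{dt} \| w_t \|_2^2$, so
\begin{equation*}
\frac{1}{2} \frac{d}{dt} \| w_t \|_2^2 = \lambda \, \mathrm{Im} \left\langle w_t, \left( \frac{1}{|\cdot|} * |\ph_t|^2 \right) \ph_t - \left( \frac{1}{|\cdot| + \alpha_N} * |\ph_t^{\alpha}|^2 \right) \ph_t^{\alpha} \right\rangle.
\end{equation*}

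The next step is to split the difference of the two nonlinear terms into pieces that are each controllable. I would add and subtract intermediate terms so as to separate (i) the difference in the kernel, $\frac{1}{|\cdot|} - \frac{1}{|\cdot|+\alpha_N}$, applied to a fixed density, from (ii) the difference in the densities $|\ph_t|^2 - |\ph_t^{\alpha}|^2$, and (iii) the difference in the outer factor $\ph_t - \ph_t^{\alpha} = w_t$. A convenient grouping is
\begin{equation*}
\left( \frac{1}{|\cdot|} * |\ph_t|^2 \right) \ph_t - \left( \frac{1}{|\cdot|+\alpha_N} * |\ph_t^{\alpha}|^2 \right) \ph_t^{\alpha}
= I_1 + I_2 + I_3,
\end{equation*}
with $I_1 = \big( (\frac{1}{|\cdot|} - \frac{1}{|\cdot|+\alpha_N}) * |\ph_t|^2 \big) \ph_t$, $I_2 = \big( \frac{1}{|\cdot|+\alpha_N} * (|\ph_t|^2 - |\ph_t^{\alpha}|^2) \big) \ph_t$, and $I_3 = \big( \frac{1}{|\cdot|+\alpha_N} * |\ph_t^{\alpha}|^2 \big) w_t$. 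For $I_3$, using $\frac{1}{|\cdot|+\alpha_N} \le \frac{1}{|\cdot|}$ and the Hardy–Kato inequality together with the $H^{1/2}$-bound $\kappa$ gives $\| I_3 \|_2 \le C\kappa^2 \| w_t \|_2$, which is the term that produces the exponential factor in Gronwall. For $I_2$, I would write $|\ph_t|^2 - |\ph_t^{\alpha}|^2 = \mathrm{Re}(\overline{w_t}(\ph_t + \ph_t^{\alpha}))$ and estimate the convolution in $L^\infty$ or via Hardy-type inequalities, using Lemma \ref{regularity} (propagation of regularity) to bound $\ph_t, \ph_t^{\alpha}$ in $H^s$, $s>1/2$, uniformly on $[-T,T]$; this bounds $\| I_2 \|_2$ by $C(\kappa, \nu) \| w_t \|_2$ as well. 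The term $I_1$ is the only one carrying an explicit power of $\alpha_N$: since $\frac{1}{|x|} - \frac{1}{|x|+\alpha_N} = \frac{\alpha_N}{|x|(|x|+\alpha_N)} \le \frac{\alpha_N}{|x|^2}$, one gets $\| I_1 \|_2 \le C \alpha_N \| (\frac{1}{|\cdot|^2} * |\ph_t|^2) \ph_t \|_2$, and this last quantity is finite and uniformly bounded on $[-T,T]$ by Hardy–Littlewood–Sobolev (or the operator inequality $\frac{1}{|x|^2} \le C(1-\Delta)$) combined with the uniform $H^1$-bound from Lemma \ref{regularity}; hence $\| I_1 \|_2 \le C \alpha_N$.

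Putting these together and using Cauchy–Schwarz,
\begin{equation*}
\frac{d}{dt} \| w_t \|_2^2 \le 2\lambda \| w_t \|_2 \left( \| I_1 \|_2 + \| I_2 \|_2 + \| I_3 \|_2 \right) \le C \| w_t \|_2 \left( \alpha_N + \| w_t \|_2 \right) \le C \alpha_N^2 + C \| w_t \|_2^2,
\end{equation*}
and since $w_0 = 0$, Gronwall's lemma yields $\| w_t \|_2^2 \le C \alpha_N^2 (e^{K|t|} - 1)/K$, i.e. $\| w_t \|_2 \le C \alpha_N$ for $|t| \le T$ with constants depending only on $\lambda, \kappa, T, \| \ph \|_{H^1}$. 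The main obstacle I anticipate is the handling of the singular convolutions uniformly in the cutoff: one must make sure every estimate on $I_1, I_2, I_3$ is independent of $\alpha_N$ (using $\frac{1}{|\cdot|+\alpha_N} \le \frac{1}{|\cdot|}$ to dominate by the genuinely singular Coulomb kernel) and that the required regularity of $\ph_t^{\alpha}$ — in particular an $\alpha_N$-independent $H^1$ (or $H^s$, $s > 1/2$) bound — is genuinely available; this is exactly what Lemma \ref{regularity} supplies, so the argument closes. A minor technical point is that $\frac{1}{|\cdot|^2} * |\ph_t|^2$ need not be bounded, so for $I_1$ it is cleanest to keep $\ph_t$ attached and use the operator inequality $\frac{1}{|x|^2} \le C(1-\Delta)$ together with $\ph_t \in H^1$ rather than trying to bound the convolution pointwise.
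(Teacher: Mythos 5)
Your Gronwall argument on $\|\ph_t - \ph_t^\alpha\|_2^2$ is correct and essentially the standard route. The paper itself gives no internal proof of this lemma — it simply cites Proposition 2.2 of \cite{MS} — so there is nothing in the text to compare against line by line, but your decomposition into $I_1, I_2, I_3$ (cutoff-difference, density-difference, wavefunction-difference), together with Hardy--Kato for $I_3$, Hardy--Littlewood--Sobolev and Sobolev for $I_2$, and the pointwise bound $\frac{1}{|x|} - \frac{1}{|x|+\alpha_N} \le \alpha_N/|x|^2$ for $I_1$, is exactly what one would extract from the cited reference and is consistent with the $H^{1/2}$ analogue that the paper does prove in detail in Section \ref{sec:dyn}. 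The one place where you are slightly over-cautious: $\frac{1}{|\cdot|^2} * |\ph_t|^2$ \emph{is} uniformly bounded, since by (translated) Hardy's inequality $\int |\ph_t(y)|^2 |x-y|^{-2}\,dy \le 4\|\nabla\ph_t\|_2^2$ for every $x$, and Lemma \ref{regularity} supplies the $\alpha_N$-independent $H^1$ bound on $\ph_t$ and $\ph_t^\alpha$ over $[-T,T]$; so you may use the $L^\infty$ bound on the convolution directly rather than keeping $\ph_t$ attached via the operator inequality. Either way the closing Gronwall step with $w_0=0$ yields the claimed $\|w_t\|_2\le C\alpha_N$ with constants of the stated dependence.
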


\begin{proof}
See Proposition 2.2 of \cite{MS}
\end{proof}

Using Lemma \ref{Hartree solution difference L^2}, we prove Proposition \ref{Hartree solution difference}. In the following proof, we generally follow the proof of Proposition 2.2 of \cite{MS} except in a few estimates.

\begin{proof}[Proof of Proposition \ref{Hartree solution difference}]
First, note that, for any $|t| \leq T$, $\| \ph_t \|_{H^1} \leq \nu$ for some constant $\nu$ depending only on $T$, $\kappa$, and $\| \ph \|_{H^1}$, which follows from Lemma \ref{Leibniz}. To prove the proposition, it suffices to show that
\begin{equation} \label{Hartree solution difference claim}
\| (-\Delta)^{1/4} (\ph_t - \ph_t^{\alpha} ) \|_2 \leq C \alpha_N^{1/2}.
\end{equation}

From Schwarz inequality, we obtain that
\begin{equation} \begin{split} \label{H^1/2 norm derivative}
&\left| \frac{d}{dt} \| (-\Delta)^{1/4} (\ph_t - \ph_t^{\alpha} ) \|_2^2 \right|\\
&= \Bigg| -2 \lambda \; \text{Im} \Bigg \langle (-\Delta)^{1/4} (\ph_t - \ph_t^{\alpha}), \\
& \qquad \qquad (-\Delta)^{1/4} \left[ \left( \frac{1}{|\cdot|} * |\ph_t|^2 \right) \ph_t - \left( \frac{1}{| \cdot | + \alpha_N} * |\ph_t^{\alpha}|^2 \right) \ph_t^{\alpha} \right] \Bigg \rangle \Bigg| \\
& \leq 2 \lambda \| (-\Delta)^{1/4} (\ph_t - \ph_t^{\alpha} ) \|_2 \\
& \qquad \times \left \| (-\Delta)^{1/4} \left[ \left( \frac{1}{|\cdot|} * |\ph_t|^2 \right) \ph_t - \left( \frac{1}{| \cdot | + \alpha_N} * |\ph_t^{\alpha}|^2 \right) \ph_t^{\alpha} \right] \right \|_2.
\end{split} \end{equation}
To estimate the right hand side, we use the following decomposition:
\begin{equation} \begin{split} \label{H^1/2 norm decomposition}
&\left \| (-\Delta)^{1/4} \left[ \left( \frac{1}{|\cdot|} * |\ph_t|^2 \right) \ph_t - \left( \frac{1}{| \cdot | + \alpha_N} * |\ph_t^{\alpha}|^2 \right) \ph_t^{\alpha} \right] \right \|_2 \\
& \leq \left \| (-\Delta)^{1/4} \left[ \left( \frac{1}{|\cdot|} * |\ph_t|^2 \right) (\ph_t - \ph_t^{\alpha} ) \right] \right \|_2 \\
& \qquad + \left \| (-\Delta)^{1/4} \left[ \left( \left( \frac{1}{|\cdot|} - \frac{1}{|\cdot| + \alpha_N} \right) * |\ph_t|^2 \right) (\ph_t - \ph_t^{\alpha} ) \right] \right \|_2 \\
& \qquad + \left \| (-\Delta)^{1/4} \left[ \left( \left( \frac{1}{|\cdot|} - \frac{1}{|\cdot| + \alpha_N} \right) * |\ph_t|^2 \right) \ph_t \right] \right \|_2 \\
& \qquad + \left \| (-\Delta)^{1/4} \left[ \left( \frac{1}{| \cdot | + \alpha_N} * ( |\ph_t|^2 - |\ph_t^{\alpha}|^2 ) \right) (\ph_t - \ph_t^{\alpha} ) \right] \right \|_2 \\
& \qquad + \left \| (-\Delta)^{1/4} \left[ \left( \frac{1}{| \cdot | + \alpha_N} * ( |\ph_t|^2 - |\ph_t^{\alpha}|^2 ) \right) \ph_t \right] \right \|_2.
\end{split} \end{equation}

The first term in the right hand side of \eqref{H^1/2 norm decomposition} is bounded by
\begin{equation} \begin{split} \label{decomposition 1-1}
&\left \| (-\Delta)^{1/4} \left[ \left( \frac{1}{|\cdot|} * |\ph_t|^2 \right) (\ph_t - \ph_t^{\alpha} ) \right] \right \|_2 \\
&\leq C \left \| (-\Delta)^{1/4} \left( \frac{1}{|\cdot|} * |\ph_t|^2 \right) \right \|_6 \| \ph_t - \ph_t^{\alpha} \|_3 \\
& \quad + C \left \| \frac{1}{|\cdot|} * |\ph_t|^2 \right \|_{\infty} \| (-\Delta)^{1/4} (\ph_t - \ph_t^{\alpha} ) \|_2
\end{split} \end{equation}
where we used the generalized Leibniz rule, Lemma \ref{Leibniz}. By Sobolev inequality,
\begin{equation} \label{decomposition 1-2}
\| \ph_t - \ph_t^{\alpha} \|_3 \leq C \| (-\Delta)^{1/4} (\ph_t - \ph_t^{\alpha} ) \|_2,
\end{equation}
and by Kato's inequality,
\begin{equation} \label{decomposition 1-3}
\left \| \frac{1}{|\cdot|} * |\ph_t|^2 \right \|_{\infty} \leq C \| \ph_t \|_{H^{1/2}}.
\end{equation}
Since
\begin{equation}
\frac{1}{|\cdot|} * |\ph_t|^2 = -4 \pi (-\Delta)^{-1} |\ph_t|^2,
\end{equation}
we find that
\begin{equation}
(-\Delta)^{1/4} \left( \frac{1}{|\cdot|} * |\ph_t|^2 \right) = -4 \pi (-\Delta)^{-3/4} |\ph_t|^2 = -4 \pi G_{3/2} * |\ph_t|^2,
\end{equation}
where $G_{3/2}$ is the kernel of the operator $(-\Delta)^{-3/4}$ that is given by
\begin{equation}
G_{3/2}(x) = \frac{\pi^2 \sqrt{2}}{\Gamma(3/4)} |x|^{-3/2}.
\end{equation}
Thus, from Hardy-Littlewood-Sobolev inequality and Sobolev inequality,
\begin{equation} \label{decomposition 1-4}
\left \| (-\Delta)^{1/4} \left( \frac{1}{|\cdot|} * |\ph_t|^2 \right) \right \|_6 = C \left \| |\cdot|^{-3/2} * |\ph_t|^2 \right \|_6 \leq C \| \ph_t \|_3^2 \leq C \| \ph_t \|_{H^{1/2}}^2.
\end{equation}
From \eqref{decomposition 1-1}, \eqref{decomposition 1-2}, \eqref{decomposition 1-3}, and \eqref{decomposition 1-4}, we get
\begin{equation} \label{decomposition 1}
\left \| (-\Delta)^{1/4} \left[ \left( \frac{1}{|\cdot|} * |\ph_t|^2 \right) (\ph_t - \ph_t^{\alpha} ) \right] \right \|_2 \leq C \| (-\Delta)^{1/4} (\ph_t - \ph_t^{\alpha} ) \|_2.
\end{equation}

The second term in the right hand side of \eqref{H^1/2 norm decomposition} can be bounded analogously, hence it satisfies
\begin{equation} \label{decomposition 2} \begin{split}
&\left \| (-\Delta)^{1/4} \left[ \left( \left( \frac{1}{|\cdot|} - \frac{1}{|\cdot| +\alpha_N} \right) * |\ph_t|^2 \right) (\ph_t - \ph_t^{\alpha} ) \right] \right \|_2 \\
&\leq C \| (-\Delta)^{1/4} (\ph_t - \ph_t^{\alpha} ) \|_2.
\end{split} \end{equation}

The third term in the right hand side of \eqref{H^1/2 norm decomposition} is again bounded using Lemma \ref{Leibniz} by
\begin{equation} \begin{split} \label{decomposition 3-1}
&\left \| (-\Delta)^{1/4} \left[ \left( \left( \frac{1}{|\cdot|} - \frac{1}{|\cdot| + \alpha_N} \right) * |\ph_t|^2 \right) \ph_t \right] \right \|_2 \\
&\leq C \left \| (-\Delta)^{1/4} \left[ \left( \left( \frac{1}{|\cdot|} - \frac{1}{|\cdot| + \alpha_N} \right) * |\ph_t|^2 \right) \right] \right \|_3 \| \ph_t \|_6 \\
& \qquad + C \left \| \left( \frac{1}{|\cdot|} - \frac{1}{|\cdot| + \alpha_N} \right) * |\ph_t|^2 \right \|_{\infty} \| (-\Delta)^{1/4} \ph_t \|_2.
\end{split} \end{equation}
We have from Hardy-Littlewood-Sobolev inequality, generalized Leibniz rule, and Sobolev inequality that
\begin{equation} \begin{split} \label{decomposition 3-2}
&\left \| (-\Delta)^{1/4} \left[ \left( \left( \frac{1}{|\cdot|} - \frac{1}{|\cdot| + \alpha_N} \right) * |\ph_t|^2 \right) \right] \right \|_3 \leq \alpha_N \left \| \frac{1}{|\cdot|^2} * (-\Delta)^{1/4} |\ph_t|^2 \right \|_3 \\
&\leq C \alpha_N \| (-\Delta)^{1/4} (\overline{\ph_t} \ph_t) \|_{3/2} \leq C \alpha_N \| (-\Delta)^{1/4} \ph_t \|_2 \| \ph_t \|_6 \leq C \alpha_N \kappa \nu.
\end{split} \end{equation}
From Hardy inequality, we get that
\begin{equation} \label{decomposition 3-3}
\left \| \left( \frac{1}{|\cdot|} - \frac{1}{|\cdot| + \alpha_N} \right) * |\ph_t|^2 \right \|_{\infty} \leq \alpha_N \left \| \frac{1}{|\cdot|^2} * |\ph_t|^2 \right \|_{\infty} \leq C \alpha_N \nu^2.
\end{equation}
Thus, from \eqref{decomposition 3-1}, \eqref{decomposition 3-2}, and \eqref{decomposition 3-1}, we obtain that
\begin{equation} \label{decomposition 3}
\left \| (-\Delta)^{1/4} \left[ \left( \left( \frac{1}{|\cdot|} - \frac{1}{|\cdot| + \alpha_N} \right) * |\ph_t|^2 \right) \ph_t \right] \right \|_2 \leq C \alpha_N.
\end{equation}

The fourth term in the right hand side of \eqref{H^1/2 norm decomposition} is bounded by
\begin{equation} \begin{split} \label{decomposition 4-1}
&\left \| (-\Delta)^{1/4} \left[ \left( \frac{1}{| \cdot | + \alpha_N} * ( |\ph_t|^2 - |\ph_t^{\alpha}|^2 ) \right) (\ph_t - \ph_t^{\alpha} ) \right] \right \|_2 \\
&\leq C \left \| (-\Delta)^{1/4} \left( \frac{1}{| \cdot | + \alpha_N} * ( |\ph_t|^2 - |\ph_t^{\alpha}|^2 ) \right) \right \|_{\infty} \| \ph_t - \ph_t^{\alpha} \|_2 \\
&\qquad + C \left \| \frac{1}{| \cdot | + \alpha_N} * ( |\ph_t|^2 - |\ph_t^{\alpha}|^2 ) \right \|_{\infty} \| (-\Delta)^{1/4} (\ph_t - \ph_t^{\alpha} ) \|_2
\end{split} \end{equation}
We notice that
\begin{equation} \label{cutoff potential Delta^1/4 bound}
\left( (-\Delta)^{1/4} \frac{1}{| \cdot | + \alpha_N} \right) (x) \leq \frac{C}{(|x|+\alpha_N)^{3/2}},
\end{equation}
which is proved in Proposition 2.2 of \cite{MS}. Thus,
\begin{equation} \begin{split} \label{decomposition 4-2}
&\left \| (-\Delta)^{1/4} \left( \frac{1}{| \cdot | + \alpha_N} * ( |\ph_t|^2 - |\ph_t^{\alpha}|^2 ) \right) \right \|_{\infty} \| \ph_t - \ph_t^{\alpha} \|_2 \\
&\leq C \alpha_N^{-3/2} \left \| |\ph_t|^2 - |\ph_t^{\alpha}|^2 \right \|_{1} \| \ph_t - \ph_t^{\alpha} \|_2 \\
&\leq C \alpha_N^{-3/2} \big \| | \ph_t | + | \ph_t^{\alpha} | \big \|_2 \| \ph_t - \ph_t^{\alpha} \|_2^2 \leq C \alpha_N^{1/2},
\end{split} \end{equation}
where we used Lemma \ref{Hartree solution difference L^2} in the last inequality. We also have that
\begin{equation} \label{decomposition 4-3}
\left \| \frac{1}{| \cdot | + \alpha_N} * ( |\ph_t|^2 - |\ph_t^{\alpha}|^2 ) \right \|_{\infty} \leq \alpha_N^{-1} \left \| |\ph_t|^2 - |\ph_t^{\alpha}|^2 \right \|_{1} \leq C,
\end{equation}
where we used the same argument as in \eqref{decomposition 4-2}. Thus, from \eqref{decomposition 4-1}, \eqref{decomposition 4-2}, and \eqref{decomposition 4-3}, we obtain that
\begin{equation} \label{decomposition 4} \begin{split}
&\left \| (-\Delta)^{1/4} \left[ \left( \frac{1}{| \cdot | + \alpha_N} * ( |\ph_t|^2 - |\ph_t^{\alpha}|^2 ) \right) (\ph_t - \ph_t^{\alpha} ) \right] \right \|_2 \\
&\leq C \alpha_N^{1/2} + C \| (-\Delta)^{1/4} (\ph_t - \ph_t^{\alpha} ) \|_2.
\end{split} \end{equation}

The last term of the right hand side \eqref{H^1/2 norm decomposition} is bounded by
\begin{equation} \begin{split} \label{decomposition 5-1}
& \left \| (-\Delta)^{1/4} \left[ \left( \frac{1}{| \cdot | + \alpha_N} * ( |\ph_t|^2 - |\ph_t^{\alpha}|^2 ) \right) \ph_t \right] \right \|_2 \\
&\leq C \left \| (-\Delta)^{1/4} \left( \frac{1}{| \cdot | + \alpha_N} * ( |\ph_t|^2 - |\ph_t^{\alpha}|^2 ) \right) \right \|_3 \| \ph_t \|_6 \\
&\qquad + C \left \| \frac{1}{| \cdot | + \alpha_N} * ( |\ph_t|^2 - |\ph_t^{\alpha}|^2 ) \right \|_6 \| (-\Delta)^{1/4} \ph_t \|_3.
\end{split} \end{equation}
The first term in the right hand side of \eqref{decomposition 5-1} is bounded by
\begin{equation} \begin{split}
& \left \| (-\Delta)^{1/4} \left( \frac{1}{| \cdot | + \alpha_N} * ( |\ph_t|^2 - |\ph_t^{\alpha}|^2 ) \right) \right \|_3 \\
&\leq \left \| (-\Delta)^{1/4} \frac{1}{| \cdot | + \alpha_N} \right \|_3 \| |\ph_t|^2 - |\ph_t^{\alpha}|^2 \|_{1} \\
&\leq C \alpha_N \left \| \frac{1}{(|\cdot|+\alpha_N)^{3/2}} \right \|_3,
\end{split} \end{equation}
where we used the bound \eqref{cutoff potential Delta^1/4 bound}. An explicit computation shows that
\begin{equation} \begin{split}
\left \| \frac{1}{(|\cdot|+\alpha_N)^{3/2}} \right \|_3^3 = 4 \pi \int_0^{\infty} \frac{r^2}{(r + \alpha_N)^{9/2}} dr = \frac{64 \pi}{105} \alpha_N^{-3/2}.
\end{split} \end{equation}
Hence,
\begin{equation} \label{decomposition 5-2}
\left \| (-\Delta)^{1/4} \left( \frac{1}{| \cdot | + \alpha_N} * ( |\ph_t|^2 - |\ph_t^{\alpha}|^2 ) \right) \right \|_3 \leq C \alpha_N^{1/2}.
\end{equation}
The second term in the right hand side of \eqref{decomposition 5-1} is estimated as
\begin{equation} \begin{split} \label{decomposition 5-3}
& \left \| \frac{1}{| \cdot | + \alpha_N} * ( |\ph_t|^2 - |\ph_t^{\alpha}|^2 ) \right \|_6 \| (-\Delta)^{1/4} \ph_t \|_3 \\
& \leq \left \| \frac{1}{| \cdot |} * \left| |\ph_t|^2 - |\ph_t^{\alpha}|^2 \right| \right \|_6 \| \ph_t \|_{H^1} \\
& \leq C \left \| |\ph_t|^2 - |\ph_t^{\alpha}|^2 \right \|_{6/5} \| \ph_t \|_{H^1} \leq C \left \| | \ph_t | + | \ph_t^{\alpha} | \right \|_2 \| \ph_t - \ph_t^{\alpha} \|_3 \| \ph_t \|_{H^1} \\
& \leq C \nu \| (-\Delta)^{1/4} ( \ph_t - \ph_t^{\alpha}) \|_2,
\end{split} \end{equation}
where we used Sobolev inequality and Hardy-Littlewood-Sobolev inequality. Thus, from \eqref{decomposition 5-1}, \eqref{decomposition 5-2}, and \eqref{decomposition 5-3}, we obtain that
\begin{equation} \label{decomposition 5} \begin{split}
&\left \| (-\Delta)^{1/4} \left[ \left( \frac{1}{| \cdot | + \alpha_N} * ( |\ph_t|^2 - |\ph_t^{\alpha}|^2 ) \right) \ph_t \right] \right \|_2 \\
&\leq C \alpha_N^{1/2} + C \| (-\Delta)^{1/4} ( \ph_t - \ph_t^{\alpha}) \|_2.
\end{split} \end{equation}

Therefore, from \eqref{H^1/2 norm derivative}, \eqref{H^1/2 norm decomposition}, \eqref{decomposition 1}, \eqref{decomposition 2}, \eqref{decomposition 3}, \eqref{decomposition 4}, and \eqref{decomposition 5}, we find that
\begin{equation} \begin{split}
& \left| \frac{d}{dt} \| (-\Delta)^{1/4} (\ph_t - \ph_t^{\alpha} ) \|_2^2 \right| \\
& \leq C \| (-\Delta)^{1/4} (\ph_t - \ph_t^{\alpha} ) \|_2 \left( \alpha_N^{1/2} + \| (-\Delta)^{1/4} (\ph_t - \ph_t^{\alpha} ) \|_2 \right) \\
& \leq C \alpha_N + C \| (-\Delta)^{1/4} (\ph_t - \ph_t^{\alpha} ) \|_2^2.
\end{split} \end{equation}
Now, \eqref{Hartree solution difference claim} follows from Gronwall's lemma. This concludes the proof of the Proposition \ref{Hartree solution difference}.
\end{proof}

\section{Properties of Weyl Operator} \label{sec:weyl}
In this section, we prove various estimates on the following state:
\begin{equation} \label{W state}
W^*(\sqrt{N}\varphi) \frac{\big(a^* (\varphi) \big)^N}{\sqrt{N!}} \Omega.
\end{equation}

\begin{lem} \label{general estimate for W state}
There exists a constant $C > 0$ such that, for any $\varphi \in L^2 (\bR^3)$ with $\| \varphi \|_2 = 1$, we have
\begin{equation}
\left \| (\N+1)^{-1/2} W^* (\sqrt{N}\varphi) \frac{\big(a^* (\varphi) \big)^N}{\sqrt{N!}} \Omega \right \| \leq \frac{C}{d_N}.
\end{equation}
\end{lem}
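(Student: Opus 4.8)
The plan is to reduce the claim to an explicit computation of the number--sector weights of
\[
\psi:=W^*(\sqrt N\varphi)\,\frac{\bigl(a^*(\varphi)\bigr)^N}{\sqrt{N!}}\,\Omega
\]
and then to sum. By \eqref{factorized data}, $\frac{(a^*(\varphi))^N}{\sqrt{N!}}\Omega=d_N\,P_N W(\sqrt N\varphi)\Omega$, so $\psi=d_N\,W^*(\sqrt N\varphi)P_N W(\sqrt N\varphi)\Omega$; note $\|\psi\|=1$ since $W^*(\sqrt N\varphi)$ is unitary and $\|(a^*(\varphi))^N\Omega\|^2=N!$. Using the Baker--Campbell--Hausdorff form $W^*(\sqrt N\varphi)=W(-\sqrt N\varphi)=e^{-N/2}e^{-\sqrt N a^*(\varphi)}e^{\sqrt N a(\varphi)}$ together with $a(\varphi)^j\frac{(a^*(\varphi))^N}{\sqrt{N!}}\Omega=\sqrt{N!/(N-j)!}\,\frac{(a^*(\varphi))^{N-j}}{\sqrt{(N-j)!}}\Omega$ for $0\le j\le N$ and $a^*(\varphi)^i\frac{(a^*(\varphi))^m}{\sqrt{m!}}\Omega=\sqrt{(m+i)!/m!}\,\frac{(a^*(\varphi))^{m+i}}{\sqrt{(m+i)!}}\Omega$, one computes, writing $\varphi^{\otimes n}:=(a^*(\varphi))^n\Omega/\sqrt{n!}$ (an orthonormal family),
\[
\psi=\sum_{n\ge0}\gamma_n\,\varphi^{\otimes n},\qquad \gamma_n=\frac{(-1)^n}{d_N}\,\frac{N^{n/2}}{\sqrt{n!}}\,S_n,\qquad S_n:=\sum_{l=0}^{\min(n,N)}\binom{n}{l}(-1)^l\prod_{i=0}^{l-1}\Bigl(1-\tfrac iN\Bigr),
\]
so that
\[
\bigl\|(\N+1)^{-1/2}\psi\bigr\|^2=\sum_{n\ge0}\frac{|\gamma_n|^2}{n+1}=\frac1{d_N^2}\sum_{n\ge0}\frac{N^nS_n^2}{(n+1)\,n!},\qquad \sum_{n\ge0}\frac{N^nS_n^2}{n!}=d_N^2\|\psi\|^2=d_N^2 .
\]
It therefore suffices to prove $\sum_{n\ge0}\frac{N^nS_n^2}{(n+1)n!}\le C$ with $C$ independent of $N$.

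The heart of the matter is the decay estimate
\[
\frac{N^nS_n^2}{n!}\le\frac{C}{\sqrt{n+1}}\qquad\text{for }1\le n\le\sqrt N,
\]
which requires exploiting the (massive) cancellation in the alternating sum $S_n$ --- bounding its terms individually loses everything and already fails at $n=3$. Telescoping the products $\prod_{i<l}(1-i/N)$ and using $\sum_{l\ge k}\binom nl(-1)^l=(-1)^k\binom{n-1}{k-1}$ yields, for $n\ge1$, the exact recursion
\[
S_n=-\sum_{p=1}^{n-1}\frac{(n-1)(n-2)\cdots(n-p)}{N^p}\,S_{n-1-p},
\]
whose leading term is $-\tfrac{n-1}{N}S_{n-2}$ and whose remaining terms carry extra factors of order $n/N$. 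Feeding this into an induction gives $|S_{2m}|\le C\,(2m-1)!!\,N^{-m}$ for $2m\le\sqrt N$, with an additional factor $N^{-1/2}$ for the odd-index values; combined with $\binom{2m}{m}4^{-m}\le C(m+1)^{-1/2}$ this is precisely the stated bound. (Alternatively one can start from the generating function $\sum_n S_n w^n/n!=e^{w}(1-w/N)^N$ and apply steepest descent to $S_n=\tfrac{n!}{2\pi i}\oint w^{-n-1}e^{w}(1-w/N)^N\,dw$ at the conjugate saddle points $w_\pm=\tfrac{n+1}{2}\pm\tfrac i2\sqrt{(n+1)(4N-n-1)}$, where $|w_\pm|^2$ is of order $nN$ and $|e^{w_\pm}(1-w_\pm/N)^N|$ is of order $e^{n/2}$; after Stirling this gives the same rate.)

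Granting the decay estimate, one controls $\sum_{n\ge0}\frac{N^nS_n^2}{(n+1)n!}$ by splitting at $n=\lfloor\sqrt N\rfloor$. For each fixed small $n$, $N^nS_n^2/n!$ is bounded uniformly in $N$ (it is a polynomial in $1/N$ with no constant term when $n\ge1$), so those finitely many terms contribute an $N$-independent constant; in the remaining range the decay estimate gives $\sum_{n\le\sqrt N}\frac{N^nS_n^2}{(n+1)n!}\le C+C\sum_{n}(n+1)^{-3/2}\le C'$. For the tail, using $\frac1{n+1}\le N^{-1/2}$ and the $\ell^2$ identity above,
\[
\sum_{n>\sqrt N}\frac{N^nS_n^2}{(n+1)n!}\le N^{-1/2}\sum_{n\ge0}\frac{N^nS_n^2}{n!}=\frac{d_N^2}{\sqrt N}\le e,
\]
since $d_N^2=N!/(N^Ne^{-N})\le e\sqrt N$. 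Adding the two pieces gives $\sum_{n\ge0}\frac{N^nS_n^2}{(n+1)n!}\le C$, i.e. $\|(\N+1)^{-1/2}\psi\|^2\le C/d_N^2$, as asserted.

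The main obstacle is the decay estimate of the second paragraph: all the cancellation in $S_n$ must be used, and one genuinely needs the $(n+1)^{-1/2}$ rate and not merely a uniform bound on the sector weights $N^nS_n^2/n!$ (because $\sum_{n\le\sqrt N}(n+1)^{-1}$ diverges), so some care with the recursion --- or a quantitative saddle-point analysis --- is unavoidable.
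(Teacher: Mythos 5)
Your reduction is the right one: the computation of the sector weights (giving $\gamma_n=\frac{(-1)^n}{d_N}\frac{N^{n/2}}{\sqrt{n!}}S_n$ with the stated $S_n$), the identity $\sum_n N^nS_n^2/n!=d_N^2$, the recursion $S_n=-\sum_{p\ge1}\frac{(n-1)(n-2)\cdots(n-p)}{N^p}S_{n-1-p}$, and the $\ell^2$ tail argument (using $1/(n+1)\le N^{-1/2}$ for $n>\sqrt N$ together with $d_N^2\lesssim\sqrt N$) are all correct. You also correctly identify that a merely uniform bound on $N^nS_n^2/n!$ is not enough and that a genuine $(n+1)^{-1/2}$ decay in the range $n\le\sqrt N$ is needed. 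The gap is precisely there.

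The claim ``feeding this into an induction gives $|S_{2m}|\le C(2m-1)!!N^{-m}$ for $2m\le\sqrt N$'' does not follow from the recursion once one takes absolute values, which is what an ``induction'' of this sort does. Indeed, the odd-index values satisfy $|S_{2j+1}|=\frac{2j}{3}(2j+1)!!\,N^{-j-1}\bigl(1+O(1/N)\bigr)$ (check $S_3=-2/N^2$, $S_5=20/N^3+\cdots$, $S_7=-210/N^4+\cdots$); the ``additional factor'' compared with $(2j+1)!!N^{-j-1/2}$ is therefore $\sim j/\sqrt N$, not a uniform $N^{-1/2}$. Inserting this into the $p=2$ term of the even recursion gives
\begin{equation*}
\frac{(2m-1)(2m-2)}{N^2}\,|S_{2m-3}|\;\approx\;(2m-1)!!\,N^{-m}\cdot\frac{(2m-2)\cdot\frac{2}{3}(m-2)}{N}\;\sim\;(2m-1)!!\,N^{-m}\cdot\frac{m^2}{N},
\end{equation*}
i.e., a relative contribution of order $m^2/N$ on top of the $p=1$ term. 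Once one drops the signs this contribution is additive, and the inductive constant inflates by $1+O(m^2/N)$ at each step. Accumulating this over $m\le M$ gives an error of order $M^3/N$, which is bounded only for $M\lesssim N^{1/3}$. (This is exactly the range in which the paper's own three-term Laguerre recursion in the proof of the companion odd-sector lemma closes: $k\le\tfrac12 N^{1/3}$.) Beyond $N^{1/3}$ the absolute-value induction provably over-counts, because at that scale the inequality $|S_{2m}|\le(2m-1)!!N^{-m}$ holds only thanks to cancellation between the $p=1,2,3,\dots$ terms (they alternate relative to $S_{2m}$), which is lost the moment one bounds them term by term.

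This matters because your tail estimate needs the decay up to $n\sim\sqrt N$. If the decay is available only for $n\le N^{1/3}$, the remaining range gives $\sum_{n>N^{1/3}}\frac{N^nS_n^2}{(n+1)n!}\le N^{-1/3}\,d_N^2\sim N^{1/6}$, which is unbounded and destroys the argument; no intermediate choice of splitting point rescues it, since the tail bound requires a cutoff $\gtrsim\sqrt N$. The steepest-descent route you sketch in parentheses would plausibly close the gap (the saddle-point analysis encodes the cancellation automatically, since $|1-w_\pm/N|=1$ as you note), but it is only mentioned, not carried out, and it is exactly the nontrivial step. For comparison, the paper itself does not give a proof of this lemma but defers to Lemma~6.3 of \cite{CL}; your overall framework is in the spirit of that reference, but as written the central decay estimate is asserted rather than proved, and the inductive justification offered for it is insufficient in the range where it is actually needed.
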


\begin{proof}
See Lemma 6.3 of \cite{CL}.
\end{proof}

In the next lemma, we prove an estimate on the state \eqref{W state}, which primarily shows that the state has a very small probability of having an odd number of particles.

\begin{lem} \label{odd estimate for W state}
For all non-negative integers $k \leq (1/2) N^{1/3}$,
\begin{equation}
\left\| P_{2k} W^*(\sqrt{N}\varphi) \frac{\big(a^* (\varphi) \big)^N}{\sqrt{N!}} \Omega \right \| \leq \frac{2}{d_N}
\end{equation}
and
\begin{equation}
\left \| P_{2k+1} W^*(\sqrt{N}\varphi) \frac{\big(a^* (\varphi) \big)^N}{\sqrt{N!}} \Omega \right \| \leq \frac{2 (k+1)^{3/2}}{d_N \sqrt{N}}.
\end{equation}
\end{lem}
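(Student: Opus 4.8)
The plan is to compute the norms $\big\|P_m W^*(\sqrt N\varphi)\frac{(a^*(\varphi))^N}{\sqrt{N!}}\Omega\big\|$ exactly and then estimate them. \emph{Step 1 (reduction to one mode and an explicit formula).} All operators involved act only in the one‑dimensional mode $\mathbb{C}\varphi$ and on the vacuum, so it suffices to work in the associated harmonic‑oscillator Fock space with $b^*=a^*(\varphi)$, $b=a(\varphi)$, $|n\rangle=(b^*)^n\Omega/\sqrt{n!}$. Using the Weyl relation $W^*(\sqrt N\varphi)\,b^*\,W(\sqrt N\varphi)=b^*+\sqrt N$ together with $W^*(\sqrt N\varphi)\Omega=e^{-N/2}\exp(-\sqrt N b^*)\Omega$ one gets
\[
W^*(\sqrt N\varphi)\frac{(b^*)^N}{\sqrt{N!}}\Omega=\frac{e^{-N/2}}{\sqrt{N!}}\,(b^*+\sqrt N)^N\exp(-\sqrt N b^*)\Omega .
\]
Expanding the binomial and the exponential and collecting the coefficient of $(b^*)^m\Omega$ yields, for $0\le m\le N$ (the only range that matters here, since $m\le N^{1/3}+1$),
\[
\Big\|P_m W^*(\sqrt N\varphi)\tfrac{(b^*)^N}{\sqrt{N!}}\Omega\Big\|=\frac{1}{d_N}\,\frac{N^{m/2}}{\sqrt{m!}}\,|S_m|,\qquad
S_m:=\sum_{j=0}^m(-1)^j\binom mj\prod_{i=0}^{j-1}\Big(1-\frac iN\Big),
\]
where we used $d_N=\sqrt{N!}\,/(N^{N/2}e^{-N/2})$. (Equivalently, $W^*(\sqrt N\varphi)\frac{(b^*)^N}{\sqrt{N!}}\Omega=d_N\,W^*(\sqrt N\varphi)P_N W(\sqrt N\varphi)\Omega$ is $d_N$ times a displaced number state, and $S_m=(-1)^m\frac{N^m}{m!}L_m^{(N-m)}(N)$; one may run the estimates with the Laguerre polynomial instead.)

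\emph{Step 2 (generating function and the Gaussian main term).} From Step 1 one reads off
\[
\sum_{m\ge0}\frac{S_m}{m!}\,z^m=e^{z}\Big(1-\frac zN\Big)^{N}
=\exp\Big(-\frac{z^2}{2N}-\sum_{r\ge3}\frac{z^r}{rN^{r-1}}\Big).
\]
Rescaling $z=w\sqrt N$ and setting $G_N(w):=e^{w\sqrt N}(1-w/\sqrt N)^N=e^{-w^2/2}E_N(w)$ with $E_N(w):=\exp\big(-\sum_{r\ge3}w^r/(rN^{(r-2)/2})\big)$, we obtain $\dfrac{N^{m/2}|S_m|}{\sqrt{m!}}=\sqrt{m!}\,\big|[w^m]G_N(w)\big|$. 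The ``main term'' $\sqrt{m!}\,\big|[w^m]e^{-w^2/2}\big|$ equals $1$ when $m=0$, equals $\sqrt{(2k-1)!!/(2k)!!}<1$ when $m=2k\ge2$, and equals $0$ when $m$ is odd. This already gives the claimed bound $2/d_N$ for the even sectors modulo the correction below, and — since $E_N(w)=1+O(N^{-1/2})$ on bounded disks — it is precisely the reason the odd sectors acquire the extra factor $N^{-1/2}$ (hence the $\sqrt N$ in the denominator in the odd estimate).

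\emph{Step 3 (the correction).} It remains to bound $\sqrt{m!}\,\big|[w^m]\big(e^{-w^2/2}(E_N(w)-1)\big)\big|$. Since the coefficients of $E_N-1$ are majorized in absolute value by those of $\exp\big(\sum_{r\ge3}w^r/(rN^{(r-2)/2})\big)$, and the latter series is bounded by $e^{2/3}$ on the circle $|w|=N^{1/6}$ (using $\sum_{r\ge3}\rho^r/(rN^{(r-2)/2})\le 2\rho^3/(3\sqrt N)$ for $\rho\le\sqrt N/2$), Cauchy's estimate gives $\big|[w^s](E_N-1)\big|\le e^{2/3}N^{-s/6}$, with the $s=1,2$ coefficients vanishing identically. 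Convolving with $\big|[w^{m-s}]e^{-w^2/2}\big|=2^{-(m-s)/2}/((m-s)/2)!$ and summing, the correction is at most $e^{2/3}\sqrt{m!}\sum_{s}N^{-s/6}\big/\big(2^{(m-s)/2}((m-s)/2)!\big)$. Here the hypothesis $k\le\tfrac12N^{1/3}$, i.e.\ $2k\le N^{1/3}$, is used to offset the factorially growing prefactor against the $N^{-s/6}$ decay — one organizes the sum by $s=2l$ (resp.\ $2l+1$) and $d=k-l$, uses $\sqrt{(2k)!}\le(2k)^k\le N^{k/3}$, and uses that the low‑index coefficients of $E_N$ are in fact governed by the single monomial $w^s/(sN^{(s-2)/2})$, hence smaller than the crude Cauchy bound — concluding that the correction is at most $1/d_N$ in the even case and at most $\big(2(k+1)^{3/2}-1\big)/(d_N\sqrt N)$ in the odd case. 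Adding the main term of Step 2 gives the two inequalities.

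\emph{Main obstacle.} The delicate point is Step 3: the combinatorial bookkeeping needed to show the correction stays below the constant threshold (respectively of size $O((k+1)^{3/2}N^{-1/2})$) \emph{uniformly} in $k\le\tfrac12N^{1/3}$. The prefactor $\sqrt{m!}$ and the factorials arising in the convolution grow, while the powers $N^{-s/6}$ decay at a fixed rate, so the cutoff $m\le N^{1/3}$ must be inserted at exactly the right places and the plain Cauchy bound on the low‑order coefficients of $E_N$ must be sharpened (those coefficients are dominated by one term of the exponential series) in order to keep every constant under $2$.
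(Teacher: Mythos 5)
Your Step 1 is correct and matches the paper in substance: both identify the $\ell$-particle component of $W^*(\sqrt N\varphi)\frac{(a^*(\varphi))^N}{\sqrt{N!}}\Omega$ with (a rescaled) generalized Laguerre polynomial $L_\ell^{(N-\ell)}(N)$, and your $S_m$ is exactly $(-1)^m\frac{N^m}{m!}L_m^{(N-m)}(N)$. From there the two arguments diverge: the paper derives a three-term recurrence among the rescaled quantities $A_\ell$ from the classical Laguerre recursions and then proves the explicit inductive bounds $|A_{2k}|\le 1/\sqrt{(2k)!}$ and $|A_{2k-1}|\le k^{3/2}/\sqrt{(2k-1)!}$ for $k\le\frac12N^{1/3}$, which immediately yield the two inequalities; you instead set up a generating-function splitting $G_N(w)=e^{-w^2/2}E_N(w)$ and try to bound the ``correction'' coefficients of $E_N-1$ by a Cauchy estimate on the circle $|w|=N^{1/6}$.

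The proposal has a genuine gap in Step 3, which you in fact flag yourself. With the crude bound $|[w^s](E_N-1)|\le e^{2/3}N^{-s/6}$, each even $s\ge4$ contributes to $\sqrt{(2k)!}\,\big|[w^{2k}]\big(e^{-w^2/2}(E_N-1)\big)\big|$ a term of size roughly $\sqrt{(2k)!}\,N^{-s/6}\big/\bigl(2^{(2k-s)/2}((2k-s)/2)!\bigr)$, and at the edge $k\approx\frac12N^{1/3}$ this is $O(1)$ \emph{for every such $s$}: for $s=4$ it is $\asymp k^2N^{-2/3}\asymp1$, for $s=6$ it is $\asymp k^3N^{-1}\asymp1$, and so on. The sum over $s$ is therefore not controlled by the stated Cauchy bound alone, so the claim ``correction $\le 1/d_N$'' is not established. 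You acknowledge that ``the plain Cauchy bound on the low-order coefficients of $E_N$ must be sharpened \ldots\ in order to keep every constant under $2$,'' but that sharpening is exactly the content of the lemma and is never carried out. Similarly, the odd-sector exponent $(k+1)^{3/2}$ is asserted but never derived from any intermediate inequality; in the paper it emerges explicitly from the inductive step on the Laguerre recurrence. As written, the blind proof is an outline of an alternative route, not a proof.
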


\begin{proof}
Since
\begin{equation}
W^* (\sqrt{N} \varphi) = W(-\sqrt{N} \varphi) = e^{-N/2} \exp \left( a^* (-\sqrt{N} \varphi) \right) \exp \left( a (\sqrt{N} \varphi) \right),
\end{equation}
we find for any $\ell \leq N$ that
\begin{equation} \begin{split}
&P_{\ell} W^* (\sqrt{N} \varphi) \frac{\big(a^* (\varphi) \big)^N}{\sqrt{N!}} \Omega \\
&= \frac{e^{-N/2}}{\sqrt{N!}} \sum_{m=0}^{\ell} \frac{\big(a^* (-\sqrt{N} \varphi)\big)^{m}}{m!} \frac{\big(a (\sqrt{N} \varphi) \big)^{N-\ell+m}}{(N-\ell+m)!} \big(a^* (\varphi) \big)^N \Omega \\
&= \frac{e^{-N/2}}{\sqrt{N!}} \sqrt{N}^{N-\ell} \sum_{m=0}^{\ell} \frac{(-1)^m N^m}{m! (N-\ell+m)!} \big(a^* (\varphi)\big)^{m} \big(a (\varphi) \big)^{N-\ell+m} \big(a^* (\varphi) \big)^N \Omega \\
&= \frac{e^{-N/2}}{\sqrt{N!}} \sqrt{N}^{N-\ell} \sum_{m=0}^{\ell} \binom{N}{\ell-m} \frac{(-1)^m N^m}{m!} \big(a^* (\varphi) \big)^{\ell} \Omega \\
&= \frac{1}{d_N} N^{-\ell/2} L_{\ell}^{(N-\ell)} (N) \big(a^* (\varphi) \big)^{\ell} \Omega,
\end{split} \end{equation}
where $L_n^{(\alpha)} (x)$ denotes the generalized Laguerre polynomial.

Generalized Laguerre polynomials $L_n^{(\alpha)} (x)$ satisfy the following recurrence relations:
\begin{align}
L_n^{(\alpha -1)} (x) &= L_n^{(\alpha)} (x) - L_{n-1}^{(\alpha)} (x), \\
x L_n^{(\alpha +1)} (x) &= (n+\alpha+1) L_n^{(\alpha)}(x) - (n+1) L_{n+1}^{(\alpha)}(x).
\end{align}
(See \cite{AS} for more detail.) From the recurrence relations, we find that
\begin{equation} \begin{split}
x L_{n-2}^{(\alpha+2)}(x) &= x L_{n-1}^{(\alpha +2)}(x) - x L_{n-1}^{(\alpha+1)}(x) \\
&= [ (n+\alpha+1) L_{n-1}^{(\alpha+1)}(x) - n L_{n}^{(\alpha+1)}(x)] - x L_{n-1}^{(\alpha+1)}(x) \\
&= (\alpha+1 -x) L_{n-1}^{(\alpha+1)}(x) - n L_{n}^{(\alpha+1)}(x) + n L_{n-1}^{(\alpha+1)}(x) \\
&= (\alpha+1-x) L_{n-1}^{(\alpha+1)}(x) - n L_{n}^{(\alpha)}(x).
\end{split} \end{equation}
Hence we get,
\begin{equation} \label{Laguerre recurrence}
L_n^{(\alpha)} (x) = \frac{\alpha + 1 -x}{n} L_{n-1}^{(\alpha+1)} (x) - \frac{x}{n} L_{n-2}^{(\alpha+2)} (x).
\end{equation}

Define
\begin{equation}
A_{\ell} :=
	\begin{cases}
	N^{(1-\ell)/2} L_{\ell}^{(N-\ell)} (N) & \text{if } \ell \text{ odd} \\
	N^{-\ell/2} L_{\ell}^{(N-\ell)} (N) & \text{if } \ell \text{ even}
	\end{cases}.
\end{equation}
Then, from \eqref{Laguerre recurrence}, we can find the following recurrence relations for $A_{\ell}$:
\begin{equation} \label{A_k recurrence}
A_{2k+1} = - \frac{2k}{2k+1} A_{2k} - \frac{A_{2k-1}}{2k+1}, \qquad A_{2k} = - \frac{2k-1}{2k} \cdot \frac{A_{2k-1}}{N} - \frac{A_{2k-2}}{2k},
\end{equation}
where $k$ is a non-negative integer. It can be easily computed that $A_0 = 1$ and $A_1 = 0$. Now, we consider the following claim:

\textit{Claim.} For any $1 \leq k \leq (1/2) N^{1/3}$,
\begin{equation} \label{A_k claim}
| A_{2k-2} | \leq \frac{1}{\sqrt{(2k-2)!}}, \qquad | A_{2k-1} | \leq \frac{k \sqrt{k}}{\sqrt{(2k-1)!}}.
\end{equation}

We prove the claim inductively. It is trivial that $A_0$ and $A_1$ satisfy the claim. If $A_0, A_1, \cdots, A_{2k-1}$ satisfies \eqref{A_k claim}, then from \eqref{A_k recurrence}, we obtain that
\begin{equation} \begin{split}
&|A_{2k}| \leq \frac{k \sqrt{k}}{N \sqrt{(2k-1)!}} + \frac{1}{k \sqrt{(2k-2)!}} = \frac{1}{\sqrt{(2k)!}} \left( \frac{\sqrt{2} k^2}{N} + \sqrt{\frac{2k-1}{2k}} \right) \\
&\leq \frac{1}{\sqrt{(2k)!}} \left( \frac{\sqrt{2} k^2}{N} + 1 - \frac{1}{4k} \right) \leq \frac{1}{\sqrt{(2k)!}},
\end{split} \end{equation}
since $k \leq (1/2) N^{1/3}$. We also have that
\begin{equation} \begin{split}
&|A_{2k+1}| \\
&\leq \frac{1}{\sqrt{(2k)!}} + \frac{k \sqrt{k}}{(2k+1)\sqrt{(2k-1)!}} = \frac{1}{\sqrt{(2k+1)!}} \left( \sqrt{2k+1} + \frac{\sqrt{2} k^2}{\sqrt{2k+1}} \right) \\
&= \frac{1}{\sqrt{(2k+1)!}} \left( 2k+1 + \frac{2 k^4}{2k+1} + 2 \sqrt{2} k^2 \right)^{1/2} \\
&\leq \frac{1}{\sqrt{(2k+1)!}} (k^3 + 3k^2 + 2k + 1)^{1/2} \leq \frac{(k+1) \sqrt{k+1}}{\sqrt{(2k+1)!}}.
\end{split} \end{equation}
Thus, the claim \eqref{A_k claim} is proved.

Now, we observe that
\begin{equation} \begin{split}
&\left\| P_{2k} W^*(\sqrt{N}\varphi) \frac{\big(a^* (\varphi) \big)^N}{\sqrt{N!}} \Omega \right \| = \frac{A_{2k}}{d_N} \left \| \big(a^* (\varphi) \big)^{2k} \Omega \right \| \leq \frac{1}{d_N} \left \| \frac{\big(a^* (\varphi) \big)^{2k}}{\sqrt{(2k)!}} \Omega \right \| \\
&\leq \frac{1}{d_N}
\end{split} \end{equation}
and
\begin{equation} \begin{split}
&\left \| P_{2k+1} W^*(\sqrt{N}\varphi) \frac{\big(a^* (\varphi) \big)^N}{\sqrt{N!}} \Omega \right \| \leq \frac{A_{2k+1}}{d_N \sqrt{N}} \left \| \big(a^* (\varphi) \big)^{2k+1} \Omega \right \| \\
&\leq \frac{ (k+1)^{3/2}}{d_N \sqrt{N}} \left \| \frac{\big(a^* (\varphi) \big)^{2k+1}}{\sqrt{(2k+1)!}} \Omega \right \| \leq \frac{(k+1)^{3/2}}{d_N \sqrt{N}}.
\end{split} \end{equation}
This proves the desired lemma.
\end{proof}

\section{Properties of the evolution operator $\wU(t;s)$} \label{sec:tilde}

In this section, we prove some basic properties of the operator $\wU(t;s)$. 

Following Proposition 2.2 of \cite{GV2}, we can prove that $\wU(t;s)$ is bounded in $\cQ (\K + \N^2)$, provided that $(\cL_2 (t) + \cL_4)$ is stable. (See Proposition 3.4 of \cite{K} and Lemma 7.1 of \cite{CL} for more detail.) The following lemma shows that $(\cL_2 (t) + \cL_4)$ is stable:

\begin{lem} \label{stability}
Assume that $\nu(t) = \sup_{|s| \leq t} \| \pha_s \|_{H^1} < \infty$. Then, there exist constants $C', K' > 0$, depending on $N$, $\alpha_N$, $\lambda$, $t$, and $\nu(t)$, such that, for the operator $A_2 (t) = \cL_2 (t) + \cL_4 + C(\N^2 + 1)$, we have the operator inequality
\begin{equation}
\frac{d}{dt} A_2(t) \leq K' A_2 (t).
\end{equation}
\end{lem}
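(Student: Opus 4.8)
The plan is to reduce the assertion to two facts, after which it is immediate. First, for a suitable choice of the constant $C$ in the definition $A_2(t)=\cL_2(t)+\cL_4+C(\N^2+1)$, the comparison $\K+\N^2+1\le A_2(t)\le C'(\K+\N^2+1)$ holds as quadratic forms on $\cQ(\K+\N^2)$, uniformly for $|t|\le T$. Second, $\tfrac{d}{dt}\cL_2(t)\le K'(\N^2+1)$ as a quadratic form on $\cQ(\K+\N^2)$. Granting both, and using that the kinetic part $\K$ inside $\cL_2(t)$ and the term $\cL_4$ are independent of $t$, one gets $\tfrac{d}{dt}A_2(t)=\tfrac{d}{dt}\cL_2(t)\le K'(\N^2+1)\le K'A_2(t)$, which is the claim.

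For the comparison I would first isolate the $t$-dependent part of $\cL_2(t)$ in \eqref{L2}: the one-body potential $\lambda\,d\Gamma(V_t)$ with $V_t:=\tfrac{1}{|\cdot|+\alpha_N}*|\pha_t|^2\ge0$; the ``exchange'' term $d\Gamma(J_t)$, where $J_t$ is the self-adjoint one-particle operator with kernel $J_t(x;y)=\lambda\tfrac{\overline{\pha_t}(y)\pha_t(x)}{|x-y|+\alpha_N}$; and the pairing term $B_t=\tfrac{\lambda}{2}\iint dx\,dy\,\tfrac{1}{|x-y|+\alpha_N}\bigl(\pha_t(x)\pha_t(y)a_x^*a_y^*+\overline{\pha_t}(x)\overline{\pha_t}(y)a_xa_y\bigr)$. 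Using $\tfrac{1}{|x-y|+\alpha_N}\le\tfrac{1}{|x-y|}$, the Hardy--Littlewood--Sobolev and Sobolev inequalities bound the Hilbert--Schmidt norm of $J_t$ by $C\|\pha_t\|_3^2\le C\nu(t)^2$, so Lemma \ref{J bound} gives $\pm d\Gamma(J_t)\le C\nu(t)^2(\N^2+1)$; and estimating $|\langle\psi,B_t\psi\rangle|\le\|b\psi\|\,\|\psi\|$, where $b$ is the adjoint of the double-creation part of $B_t$, then applying Cauchy--Schwarz inside the double integral together with $\iint dx\,dy\,\|a_xa_y\psi\|^2=\langle\psi,\N(\N-1)\psi\rangle$ and the same Hardy--Littlewood--Sobolev bound, gives $\pm B_t\le C\nu(t)^2(\N^2+1)$. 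Since $\K\ge0$, $d\Gamma(V_t)\ge0$ (as $V_t\ge0$), and $\cL_4\ge0$ (indeed $\langle\psi,\cL_4\psi\rangle=\tfrac{\lambda}{2N}\iint dx\,dy\,\tfrac{\|a_xa_y\psi\|^2}{|x-y|+\alpha_N}$), choosing $C$ large enough (depending on $\nu(t)$) gives the lower bound, and the upper bound follows from the same estimates together with $\|V_t\|_\infty\le C\nu(t)^2$ (Kato's inequality) and $\cL_4\le\tfrac{\lambda}{2N\alpha_N}\N^2$.

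For the derivative bound I would differentiate the three pieces above, substituting $\partial_t\pha_t=-i(1-\Delta)^{1/2}\pha_t+i\lambda\bigl(\tfrac{1}{|\cdot|+\alpha_N}*|\pha_t|^2\bigr)\pha_t$ from the regularized Hartree equation \eqref{Hartree equation with cutoff}. The nonlinear contribution to $\partial_t|\pha_t|^2$ cancels, so $\partial_t|\pha_t|^2=2\,\text{Im}\bigl(\overline{\pha_t}\,(1-\Delta)^{1/2}\pha_t\bigr)$ lies in $L^1$ with norm at most $C\nu(t)$; hence $\|\partial_t V_t\|_\infty\le\tfrac{\lambda}{\alpha_N}\|\partial_t|\pha_t|^2\|_1\le\tfrac{C\nu(t)}{\alpha_N}$, and $\pm\lambda\,d\Gamma(\partial_t V_t)\le\tfrac{C\nu(t)}{\alpha_N}\N$ by Lemma \ref{J bound}. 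For $d\Gamma(\partial_t J_t)$ and $\partial_t B_t$ one repeats the manipulations of the previous paragraph, except that a factor $|\pha_t|^2$ is now replaced by $|\partial_t\pha_t|^2$, which lies only in $L^1$; so the elementary bound $\tfrac{1}{|x-y|+\alpha_N}\le\min\bigl(\tfrac{1}{\alpha_N},\tfrac{1}{|x-y|}\bigr)$ and H\"older's inequality are used in place of Hardy--Littlewood--Sobolev, and the finiteness of $\|\partial_t\pha_t\|_2$ (bounded in terms of $\nu(t)$, the nonlinear term handled by Kato's inequality) together with $\pha_t\in L^6$ yield $\pm d\Gamma(\partial_t J_t)\le C(\N^2+1)$ and $\pm\partial_t B_t\le C(\N^2+1)$, with $C$ depending on $N$, $\alpha_N$, $\lambda$, and $\nu(t)$. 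Summing the three contributions gives $\tfrac{d}{dt}\cL_2(t)\le K'(\N^2+1)$.

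The point to verify carefully --- and the only real obstacle --- is that differentiating $\cL_2(t)$ produces \emph{no} kinetic-energy contribution: once the factors $\pha_t$ and $\partial_t\pha_t$ have been absorbed by the above inequalities, every surviving term is dominated purely by $\N^2+1$, which is exactly what makes the differential inequality closeable with no $\K$ on the right-hand side. The dependence of $C'$ and $K'$ on $N$ and $\alpha_N$ enters only through the (elementary) upper control of $\cL_4$ and of the regularized kernel in the derivative terms, while the finiteness of all the norms of $\pha_t$ used above is precisely the content of the hypothesis $\nu(t)=\sup_{|s|\le t}\|\pha_s\|_{H^1}<\infty$ (cf. Lemma \ref{regularity}).
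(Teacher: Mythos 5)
Your proposal is correct and follows essentially the same route as the paper: differentiate $\cL_2(t)$ (the only $t$-dependent piece of $A_2(t)$), use that $\dot\pha_t\in L^2$ to bound every derivative term by $K'(\N^2+1)$ as a quadratic form, and then close the Gronwall-type inequality with the lower bound $\N^2+1\le A_2(t)$. The only nominal difference is that the paper imports the comparison $-C(\N+1)\le\cL_2(t)-\K\le C(\N+1)$ from Lemma 6.1 of [CLS] rather than re-deriving it piecewise as you do, and it does not single out (though implicitly uses) the cancellation of the nonlinear term in $\partial_t|\pha_t|^2$ which you point out.
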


\begin{proof}
Note that
\begin{equation} \begin{split}
&\frac{d}{dt} A_2(t) = \frac{d}{dt} \cL_2 (t) \\
&= -\iint dx dy \frac{\lambda}{|x-y|+\alpha_N} \overline{\pha_t (y)} \dot{\pha_t}(y) a_x^* a_x \\
& \quad - \iint dx dy \frac{\lambda}{|x-y|+\alpha_N} \overline{\pha_t (x)} \dot{\pha_t}(y) a_y^* a_x \\
& \quad - \iint dx dy \frac{\lambda}{|x-y|+\alpha_N} \pha_t (x) \dot{\pha_t}(y) a_x^* a_y^* + h.c. \label{L2 derivative}
\end{split} \end{equation}
where h.c. denotes the Hermitian conjugate and $\dot{\pha_t} = \partial_t \pha_t$. Recall that $\pha_t$ is the solution of \eqref{Hartree equation with cutoff}. Since
\begin{equation}
\| (1 - \Delta)^{1/2} \phi_t \|_2 \leq \| \phi_t \|_{H^1}
\end{equation}
and
\begin{equation}
\| \frac{1}{| \cdot | + \alpha_N} * |\phi_t|^2 \|_{\infty} \leq \| \phi_t \|_{H^{1/2}} < \infty,
\end{equation}
we find that $\dot{\pha_t} \in L^2 (\bR^3)$. Thus, for any $\psi \in \F$,
\begin{equation} \begin{split}
& \left| \left\langle \psi, \iint dx dy \frac{\lambda}{|x-y|+\alpha_N} \pha_t (x) \dot{\pha_t}(y) a_x^* a_y^* \psi \right\rangle \right| \\
&= \left| \int dx dy \left\langle a_x a_y \psi, \frac{\lambda}{|x-y|+\alpha_N} \pha_t (x) \dot{\pha_t}(y) \psi \right\rangle \right| \\
&\leq \int dx dy \| a_x a_y \psi \|^2 + \int dx dy |\frac{\lambda}{|x-y|+\alpha_N}|^2 |\pha_t (x)|^2 |\dot{\pha_t}(y)|^2 \| \psi \|^2 \\
&\leq \langle \psi, \N^2 \psi \rangle + C \| \pha_t \|_{H^1} \| \dot{\pha_t} \|_2 \langle \psi, \psi \rangle.
\end{split} \end{equation}
Other terms in the right hand side of \eqref{L2 derivative} can be bounded similarly. Thus, we find that
\begin{equation}
\frac{d}{dt} A_2(t) \leq C(\N^2 + 1).
\end{equation}

Lemma 6.1 of \cite{CLS} shows that $-C (\N+1) \leq \cL_2 (t) - \K \leq C (\N+1)$ for some constant $C > 0$. Moreover, for any $\psi \in \F$,
\begin{equation} \begin{split}
&\left| \langle \psi, \cL_4 \psi \rangle \right| = \left| \left\langle \psi, \frac{\lambda}{2N} \iint dx dy \frac{1}{|x-y| + \alpha_N} a_x^* a_y^* a_x a_y \psi \right\rangle \right| \\
&\leq C N^{-1} \alpha_N^{-1} \langle \psi, \N^2 \psi \rangle,
\end{split} \end{equation}
hence $\cL^4 \leq C N^{-1} \alpha_N^{-1} \N^2$. In summary, we showed that that there exist constants $C', K' \geq 0$ such that
\begin{equation}
\frac{d}{dt} A_2(t) \leq K'(\N^2 + 1) \leq K'(\cL_2(t) + \cL_4 + C'(\N^2 + 1)) = K' A_2 (t),
\end{equation}
which was to be proved.
\end{proof}

The following lemma that shows the number of the particles in the state $\wU^*(t) \phi(f) \wU(t) \Omega$ cannot be even.

\begin{lem} \label{parity conservation of U tilde}
Let $f \in L^2 (\bR^3)$. Then, for any $k = 0, 1, 2, \cdots$,
\begin{equation}
P_{2k} \wU^*(t) \phi(f) \wU(t) \Omega = 0.
\end{equation}
\end{lem}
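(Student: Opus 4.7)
The plan is to exploit the parity of the particle number. Define the parity operator $P := (-1)^{\N}$, so that $P \psi = \psi$ on the even sectors and $P\psi = -\psi$ on the odd sectors. The strategy is to show (i) $\wU(t;s)$ commutes with $P$, and (ii) $\phi(f)$ anticommutes with $P$. These two facts together with $P\Omega = \Omega$ immediately force $\wU^*(t)\phi(f)\wU(t)\Omega$ to lie in the $-1$ eigenspace of $P$, which is precisely the span of the odd-particle sectors.

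For (i), I would inspect the generator $\cL_2(t) + \cL_4$ of $\wU(t;s)$. Every monomial in creation/annihilation operators appearing in $\cL_2(t)$ (see \eqref{L2}) is of total degree $0$ or $2$, while $\cL_4$ is a degree-$0$ monomial (see \eqref{L4}). Thus every term shifts the particle number by an even amount, which is equivalent to saying $[P, \cL_2(t) + \cL_4] = 0$ on the appropriate domain. Solving the Schr\"odinger equation \eqref{evolution of U_tilde} then gives $P\wU(t;s) = \wU(t;s) P$, and consequently $P\wU^*(t;s) = \wU^*(t;s) P$. For (ii), $\phi(f) = a^*(f) + a(f)$ consists of terms that shift the particle number by $\pm 1$, so $P\phi(f) = -\phi(f) P$.

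Combining these,
\begin{equation*}
P \wU^*(t)\phi(f)\wU(t)\Omega = \wU^*(t) P \phi(f) \wU(t)\Omega = -\wU^*(t) \phi(f) P \wU(t) \Omega = -\wU^*(t)\phi(f)\wU(t)(P\Omega) = -\wU^*(t)\phi(f)\wU(t)\Omega,
\end{equation*}
so $\wU^*(t)\phi(f)\wU(t)\Omega$ is an eigenvector of $P$ with eigenvalue $-1$, i.e., it has vanishing projection onto every even-particle sector. In particular $P_{2k}\wU^*(t)\phi(f)\wU(t)\Omega = 0$ for all $k\geq 0$.

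The only mildly delicate point is the technical justification that the formal computation $[P,\cL_2(t)+\cL_4]=0$ actually yields $P\wU(t;s) = \wU(t;s)P$ as an identity on a dense domain. Here one invokes the boundedness of $\wU(t;s)$ on $\cQ(\K + \N^2)$ together with the strong differentiability stated after \eqref{evolution of U_tilde}: both $P\wU(t;s)\psi$ and $\wU(t;s)P\psi$ satisfy the same initial-value problem on that domain, so uniqueness of solutions in $\cQ^*(\K+\N^2)$ gives equality. This is the only place where one must be careful; the parity bookkeeping itself is purely algebraic.
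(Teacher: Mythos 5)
Your proposal is correct and follows essentially the same route as the paper: show $(-1)^{\mathcal N}$ commutes with $\wU(t;s)$ by observing that the generator $\cL_2(t)+\cL_4$ changes the particle number only by even amounts, note that $\phi(f)$ anticommutes with the parity, and conclude from $(-1)^{\mathcal N}\Omega=\Omega$ that the state has odd parity. The paper packages the first step via $\frac{d}{dt}\big(\wU^*(t)(-1)^{\mathcal N}\wU(t)\big)=0$ and the final step through an inner-product identity rather than an eigenvalue statement, but these are cosmetic differences; your remark about uniqueness of solutions on the form domain is the same technical point the paper implicitly relies on.
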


\begin{proof}
We first show that the parity $(-1)^{\N}$ and the operator $\wU(t)$ commute. We note that
\begin{equation}
i \frac{d}{dt} \left( \wU^*(t) (-1)^{\N} \wU(t) \right) = \wU^* (t) [(-1)^{\N}, (\cL_2 (t) + \cL_4)] \wU(t). 
\end{equation}
Since $(\cL_2 (t) + \cL_4)$ and $(-1)^{\N}$ commute, we have that
\begin{equation}
\frac{d}{dt} \left( \wU^*(t) (-1)^{\N} \wU(t) \right) = 0.
\end{equation}
We also know that $\wU(0) = I$, hence,
\begin{equation}
\wU^*(t) (-1)^{\N} \wU(t) = \wU^*(0) (-1)^{\N} \wU(0) = (-1)^{\N}.
\end{equation}
Thus, $(-1)^{\N} \wU(t) = \wU(t) (-1)^{\N}$ for all $t$. Similarly, $\wU^*(t)$ and $(-1)^{\N}$ also commute.

Since $\wU(t)$ and $\wU^*(t)$ commute with the parity $(-1)^{\N}$, we have that for any non-negative integer $k$ and any $\eta \in \F$,
\begin{equation} \begin{split}
&\langle \eta, P_{2k} \wU^*(t) a(f) \wU(t) \Omega \rangle \\
&= \langle \eta, P_{2k} (-1)^{\N} \wU^*(t) a(f) \wU(t) \Omega \rangle = \langle \eta, P_{2k} \wU^*(t) (-1)^{\N} a(f) \wU(t) \Omega \rangle \\
&= \langle \eta, P_{2k} \wU^*(t) a(f) (-1)^{\N-1} \wU(t) \Omega \rangle = \langle \eta, P_{2k} \wU^*(t) a(f) \wU(t) (-1)^{\N-1} \Omega \rangle \\
&= -\langle \eta, P_{2k} \wU^*(t) a(f) \wU(t) \Omega \rangle,
\end{split} \end{equation}
which shows that $P_{2k} \wU^*(t) a(f) \wU(t) \Omega = 0$. The proof for that 
\begin{equation}
P_{2k} \wU^*(t) a^*(f) \wU(t) \Omega = 0
\end{equation}
is similar. Therefore, we get the desired lemma.
\end{proof}

\section*{Acknowledgment}

The author would like to thank Li Chen and Benjamin Schlein for helpful discussions.

\end{document}